\providecommand{\tabularnewline}{\\}
  \newtheorem{assumption}{Assumption}
  \newtheorem{definitn}{Definition}
  \newtheorem{remrk}{Remark}
  \newtheorem{problem}{Problem}
  \newtheorem{example}{Example}
  \newtheorem{lemma}{Lemma}
  \newtheorem{thm}{Theorem}
  \newtheorem{cor}{Corollary}
\begin{document}

\title{CSI Feedback Reduction for MIMO Interference Alignment }

\author{Xiongbin Rao, Liangzhong Ruan, \emph{Student Member, IEEE}, and \\Vincent
K.N. Lau, \emph{Fellow IEEE }%
\thanks{The authors are with the ECE Department of the Hong Kong University
of Science and Technology, Hong Kong (e-mails: \{xrao,stevenr,eeknlau\}@ust.hk).%
}}
\maketitle
\begin{abstract}
Interference alignment (IA) is a linear precoding strategy that can
achieve optimal capacity scaling at high SNR in interference networks.
Most of the existing IA designs require full channel state information
(CSI) at the transmitters, which induces a huge CSI signaling cost.
Hence it is desirable to improve the feedback efficiency for IA and
in this paper, we propose a novel IA scheme with a significantly reduced
CSI feedback. To quantify the CSI feedback cost, we introduce a novel
metric, namely the \emph{feedback dimension.} This metric serves as
a first-order measurement of CSI feedback overhead. Due to the partial
CSI feedback constraint, conventional IA schemes can not be applied
and hence, we develop a novel IA precoder / decorrelator design and
establish new IA feasibility conditions. Via dynamic \emph{feedback
profile} design, the proposed IA scheme can also achieve a flexible
tradeoff between the degree of freedom (DoF) requirements for data
streams, the antenna resources and the CSI feedback cost. We show
by analysis and simulations that the proposed scheme achieves substantial
reductions of CSI feedback overhead under the same DoF requirement
in MIMO interference networks. 
\end{abstract}

\section{Introduction}

Due to the broadcast nature of wireless communication, interference
is one of the most serious performance bottlenecks in modern wireless
networks. Conventional interference management schemes either treat
interference as noise or use channel orthogonalization to avoid interference.
However, these schemes are far from optimal  in general \cite{host2005multiplexing}.
Interference alignment (IA), which aligns the aggregate interference
from different transmitters (Txs) into a lower dimensional subspace
at each receiver (Rx), achieves the optimal capacity scaling with
respect to (w.r.t.) signal to noise ratio (SNR) under a broad range
of network topologies \cite{cadambe2008interference,jafar2007MIMOX}.
For instance, in a $K$-user MIMO interference channel with $N$ antennas
at each node, the IA processing achieves the throughput $\mathcal{O}(\frac{KN}{2}\cdot\log\textrm{SNR})$
\cite{jafar2010MNDoF}. This scaling law significantly dominates that
achieved by conventional orthogonalization schemes ($\mathcal{O}(N\cdot\log\textrm{SNR}$)).
As such, there has been a resurgence of research interest in IA. 

Classical IA designs \cite{gomadam2011distributed,peters2011cooperative,santamaria2010maximum}
assume all cross link channel state information are perfectly known
at the Tx side (CSIT). In practice, it is very difficult to obtain
very accurate CSIT estimation due to the limited feedback capacity
and the performance of IA is highly sensitive to CSIT error \cite{nosrat2010MIMO}.
This motivates the need to reduce the CSI feedback for IA in MIMO
interference networks. For instance, composite Grassmannian codebooks
are deployed in \cite{thukral2009interference,krishnamachari2009interference}
to quantize and feedback the CSI matrices for IA in MIMO interference
networks. Schemes which reduce feedback overhead by adapting to the
spatial/temporal correlation of CSI are proposed in \cite{rao2012limited,el2011grassmannian}.
While the aforementioned works try to quantize and feedback entire
CSI matrices, \cite{rezaee2012interference,rezaee2013csit} propose
more efficient schemes by exploiting an interesting fact that IA algorithms
do not need full knowledge of these matrices and hence CSI matrices
can be truncated before quantization. Besides these approaches, a
greedy algorithm is also proposed in \cite{de2012interference} to
reduce the size of the CSI submatrices feedback in MIMO interference
networks with single stream transmission.

In this paper, we propose a novel CSI feedback scheme (\emph{with
no quantization}) to reduce the CSI feedback cost in MIMO interference
networks under a given number of data streams (DoF) requirement. Instead
of CSI truncation in \cite{rezaee2012interference,rezaee2013csit,de2012interference},
we consider a more holistic set of CSI reduction strategies by selectively
feeding back the essential parts of the CSI knowledge to achieve the
IA interference nulling requirements for all the data streams. We
first define a novel metric, namely the \emph{feedback dimension,}
to quantify the cost of CSI feedback in interference networks. This
metric represents the sum dimension of the Grassmannian  manifolds
\cite{hirsch1976differential,dai2008quantization} that contain the
CSI feedback matrices. We will illustrate in Section II that this
metric serves as a first-order measurement of the CSI feedback overhead%
\footnote{The feedback dimension measures the amount of CSI information to feedback,
but it does not account for the quantization. As a result, it is proportional
to the total number of bits for CSI feedback in the interference networks.
Please refer to Section II for details.%
}. We consider IA design under the proposed partial CSI feedback scheme
and develop a novel precoder / decorrelator algorithm to achieve the
IA interference nulling in MIMO interference networks. By introducing
a dynamic \emph{feedback profile}%
\footnote{Feedback profile refers to a parametrization of the feedback functions
that determine how the CSI matrices are fed back to the Txs in the
interference networks. Please refer to Section II for details.%
} design, the proposed scheme achieves a flexible tradeoff between
the performance, i.e. DoFs, and the CSI feedback cost in interference
networks. To achieve these goals, there are several first order technical
challenges to tackle. 
\begin{itemize}
\item \textbf{Feedback Profile Design:} To reduce the CSI feedback cost,
only part of the CSI matrices can be fed back, but which part of the
CSI matrices to feedback (feedback profile design) is a challenging
problem. As illustrated in Example 1, a good feedback profile design
can significantly reduce the feedback cost to achieve IA in interference
networks. The feedback profile design in interference networks is
not widely studied in the literature. In \cite{cho2011feedback},
the authors propose a two-hop centralized feedback profile, but the
framework relies heavily on closed form precoder solutions for IA.
As such, the approach in \cite{cho2011feedback} can only be applied
to very limited interference network topologies. In general, the feedback
profile design is combinatorial and is very challenging. 
\item \textbf{IA Feasibility Condition:} Given a number of antennas and
data streams, the IA problem is well known to be not always feasible,
and the feasibility condition is still not fully understood in general.
The pioneering work \cite{yetis2010feasibility} gives the feasibility
condition for the single stream case by using the \textit{Bernshtein\textquoteright{}s
Theorem }\cite{cox2005using}. This work is extended to the multiple
stream case in \cite{razaviyayn2011degrees} by analyzing the dimension
of the \emph{Algebraic Varieties} \cite{cox2005using}. In \cite{ruan2012feasibility},
a sufficient feasibility condition, which applies to general MIMO
interference networks, is proposed. However, all these existing works
have assumed feedback of entire CSI matrices in the interference networks.
The feasibility condition of IA under partial\emph{ }CSI feedback
in interference networks is still an open problem. 
\item \textbf{IA Precoder / Decorrelator Design:} Conventional IA precoder
algorithms \cite{gomadam2011distributed,peters2011cooperative,santamaria2010maximum}
require full CSI matrices of the interference networks, and both the
precoders and decorrelators are functions of the entire CSI matrices
in the MIMO interference networks. However, to reduce the CSI feedback
cost, only partial CSI matrices will be available at the Txs and hence,
the precoders can only be a function of the \emph{partial CSI}. As
a result, conventional solutions for IA precoder and decorrelator
designs cannot be applied in our case.
\end{itemize}

In this paper, we will address the challenges listed above by exploiting
the unique features of the IA precoder / decorrelator design, and
tools from \emph{Algebraic Geometry} \cite{razaviyayn2011degrees,ruan2012feasibility},
to reduce the CSI feedback cost without affecting the DoF performance
of the network. Based on the proposed interference profile design
mechanism, we derive closed form tradeoff results between the number
of data streams, the antenna configuration and the CSI feedback dimension
in a symmetric MIMO interference network. We also show that the proposed
scheme achieves significant savings in CSI feedback cost compared
with various state-of-the-art baselines. 

\textit{Notation}s: Uppercase and lowercase boldface denote matrices
and vectors respectively. The operators $(\cdot)^{T}$, $(\cdot)^{\dagger}$,
$\textrm{rank}(\cdot)$, $|\cdot|$, $\textrm{tr}(\cdot)$, $\textrm{dim}_{s}(\cdot)$,
$\textrm{dim}_{c}(\cdot)$, $\otimes$ and $\textrm{vec}(\cdot)$
are the transpose, conjugate transpose, rank, cardinality, trace,
dimension of subspace, dimension of complex manifolds \cite{hirsch1976differential},
Kronecker product and vectorization respectively, $\mathbf{I}_{d}$
denotes $d\times d$ identity matrix, $\textrm{span}(\{\mathbf{A}_{i}\})$
denotes the vector space spanned by all the column vectors of the
matrices in $\{\mathbf{A}_{i}\}$ and $d\mid M$ denotes that integer
$M$ is divisible by integer $d$.

\section{System Model}

\subsection{MIMO Interference Networks}

Consider a $K$-user MIMO interference network where the $i$-th Tx
and Rx are equipped with $N_{i}$ and $M_{i}$ antennas respectively,
and $d_{i}$ data streams are transmitted between the $i$-th Tx-Rx
pair. Denote the fading matrix from Tx $i$ to Rx $j$ as $\mathbf{H}_{ji}\in\mathbb{C}^{M_{j}\times N_{i}}$
.
\begin{assumption}
[Channel Matrices]\label{Channel-MatricesWe-assume}We assume the
elements of $\mathbf{H}_{ji}$ are i.i.d. Gaussian random variables
with zero mean and unit variance. The CSI $\{\mathbf{H}_{j1},\mathbf{H}_{j2},\cdots\mathbf{H}_{jK}\}$
are observable at the $j$-th Rx and the feedback from the $j$-th
Rx will be received error-free by all the $K$ Txs. 
\end{assumption}

\subsection{CSI Feedback Functions and Feedback Dimension}

In this section, we define the partial CSI feedback as well as the
notion of \emph{feedback dimension} in MIMO interference networks.
Since we are interested in IA, which aims at nulling off interferences
between the data streams in the network, only the \emph{channel direction
information}%
\footnote{For example, in IA designs, if $\mathbf{\mathbf{U}}^{\dagger}\mathbf{H}\mathbf{V}=\mathbf{0}$,
then we have $\mathbf{\mathbf{U}}^{\dagger}(a\mathbf{H})\mathbf{V}=0$,
$\forall a\in\mathbb{C}$. Hence, it is sufficient to feeding back
the\emph{ direction information} of $\mathbf{H}\in\mathbb{C}^{N\times M}$
for IA, i.e., $\{a\mathbf{H}:a\in\mathbb{C}\}$, which is a linear
space contained in $\mathbb{G}(1,MN)$ \cite{dai2008quantization}.%
} \cite{yoo2007multi} is relevant, and hence, we restrict ourselves
to the CSI feedback over the Grassmannian manifold. Denote $\mathbb{G}(A,B)$
as the Grassmannian manifold \cite{dai2008quantization} of all $A$-dimensional
linear subspaces in $\mathbb{C}^{B\times1}$. Let $\mathcal{H}_{j}=(\mathbf{H}_{j1},\cdots\mathbf{H}_{jj-1},\mathbf{H}_{jj+1},\cdots\mathbf{H}_{jK})\in\prod_{\underset{\neq j}{i=1}}^{K}\mathbb{C}^{M_{j}\times N_{i}}$
be the tuple of \emph{local cross-link CSI matrices} observed at the
$j$-th Rx in the MIMO interference network. To reduce CSI feedback
overhead, we introduce the idea of CSI filtering, which is formulated
in the following model. 
\begin{definitn}
[CSI Feedback Function]The partial CSI feedback generated by the
$j$-th Rx is a $k_{j}$-tuple, which  can be characterized by a \emph{feedback
function} $F_{j}$: $\prod_{\underset{\neq j}{i=1}}^{K}\mathbb{C}^{M_{j}\times N_{i}}\rightarrow\prod_{i=1}^{k_{j}}\mathbb{G}(A_{ji},B_{ji})$.
That is:
\begin{equation}
\mathbb{\mathcal{H}}_{j}^{fed}=F_{j}(\mathcal{H}_{j}),\label{eq:feedback-info}
\end{equation}
where $k_{j}$ denotes the number of subspaces in $\mathbb{\mathcal{H}}_{j}^{fed}$,
$\mathbb{\mathcal{H}}_{j}^{fed}\in\mathbb{G}(A_{j1},B_{j1})\times\mathbb{G}(A_{j2},B_{j2})\times\cdots\mathbb{G}(A_{jk_{j}},B_{jk_{j}})$
is the partial CSI fed back by the $j$-th Rx, and $\mathbb{G}(A_{jm},B_{jm})$
is the associated Grassmannian manifold with parameters $(A_{jm},B_{jm})$
containing the $m$-th element in the CSI feedback tuple $\mathbb{\mathcal{H}}_{j}^{fed}$.
\hfill \IEEEQED
\end{definitn}

In other words, the outputs of the feedback function are a tuple of
subspaces where each subspace corresponds to a point in the associated
Grassmannian manifold \cite{dai2008quantization}. For instance, consider
two cross link CSIs $\mathbf{H}_{1},\mathbf{H}_{2}\in\mathbb{C}^{2\times3}$
at certain Rx. If we feedback the null spaces of $\mathbf{H}_{1}$
and $\mathbf{H}_{2}$, then this corresponds to the feedback function
$F=\left(\begin{array}{cc}
\{\mathbf{v}_{1}:\mathbf{H}_{1}\mathbf{v}_{1}=\mathbf{0}\}, & \{\mathbf{v}_{2}:\mathbf{H}_{2}\mathbf{v}_{2}=\mathbf{0}\}\end{array}\right)\in\mathbb{G}(1,3)\times\mathbb{G}(1,3)$; If we feedback the row space of the concatenated matrix $\begin{array}{cc}
[\mathbf{H}_{1} & \mathbf{H}_{2}]\end{array}$, this corresponds to the feedback function $F=\textrm{span}\left([\begin{array}{cc}
\mathbf{H}_{1} & \mathbf{H}_{2}\end{array}]^{T}\right)\in\mathbb{G}(2,6)$. Note under \emph{given} feedback functions $\{F_{j}\}$, the partial
CSI $\{F_{j}(\mathcal{H}_{j})\}$ that is fed back to the Tx side
for precoder design will be known.

First, we define the feedback cost generated from the above partial
CSI feedback by the \emph{feedback dimension} below.
\begin{definitn}
[Feedback Dimension]\label{Feedback-DimensionDefine-the}Define
the feedback dimension $D$ as the sum of the dimension of the Grassmannian
manifolds \cite{dai2008quantization} $\{\mathbb{G}(A_{ji},B_{ji}):i=1,\cdots k_{j},j=1,\cdots K\}$,
i.e., 
\begin{equation}
D=\sum_{j=1}^{K}\sum_{i=1}^{k_{j}}A_{ji}(B_{ji}-A_{ji}).\label{eq:definition_feedback_dimensioin}
\end{equation}
 \hfill \IEEEQED
\end{definitn}

\begin{remrk}
[Significance of Feedback Dimension]Note a Grassmannian manifold
of dimension $D$ is locally homeomorphic  \cite{hirsch1976differential}
to $\mathbb{C}^{D\times1}$, and hence the feedback dimension $D$
denotes the number of complex scalars required to feedback to the
Tx side. Hence, the feedback dimension serves as a first order metric
of the CSI feedback overhead. For instance, given $B$ bits to feedback
a CSI contained in a Grassmannian manifold with dimension $D$, it
is shown that the CSI quantization distortion scales on $\mathcal{O}\left(2^{-\frac{B}{D}}\right)$
\cite{mondal2007quantization,dai2008quantization}. In other words,
to keep a constant CSI distortion $\Delta$, the CSI feedback bits
$B$ should scale linearly with $D$ as $B=\mathcal{O}(D\log\frac{1}{\Delta})$.
Therefore, the feedback dimension is directly proportional to the
total number of bits for CSI feedback. 
\end{remrk}

\subsection{CSI Feedback Profile}

In this section, we shall define the notion of \emph{feedback profile},
which is a parametrization of the feedback functions $\mathcal{F}=\{F_{1},\cdots F_{K}\}$
defined in (\ref{eq:feedback-info}). We first formally define the
IA problem subject to general feedback functions $\mathcal{F}$, which
is essentially a \emph{feasibility}%
\footnote{We are concerned with the existence of a solution in Problem 1 as
well as finding it.%
}\emph{ problem} \cite{razaviyayn2011degrees,ruan2012feasibility}. 
\begin{problem}
[IA Design with Partial CSI Feedback $\mathcal{F}$]\label{IA-design-based-general}Given
the feedback functions $\mathcal{F}$. The IA problem is to find the
set of precoders $\left\{ \mathbf{V}_{i}\in\mathbb{C}^{N_{i}\times d_{i}}:\forall i\right\} $
as a function of $\{F_{j}(\mathcal{H}_{j}):\forall j\}$ and decorrelator
$\mathbf{U}_{j}\in\mathbb{C}^{M_{j}\times d_{j}}$ based on local
CSIR (i.e., $\{\mathbf{H}_{ji}\mathbf{V}_{i}:\forall i\}$) $\forall j$
such that 
\begin{equation}
\textrm{rank}(\mathbf{U}_{j}^{\dagger}\mathbf{H}_{jj}\mathbf{V}_{j})=d_{j},\quad\forall j,\label{eq:MIA}
\end{equation}
\begin{equation}
\mathbf{U}_{j}^{\dagger}\mathbf{H}_{ji}\mathbf{V}_{i}=\mathbf{0},\;\forall i,j,\; i\neq j.\label{eq:IA_condition}
\end{equation}
 \hfill \IEEEQED
\end{problem}

Compared with conventional IA problems \cite{yetis2010feasibility,razaviyayn2011degrees,ruan2012feasibility},
Problem \ref{IA-design-based-general} is different and difficult
because it has a new constraint on the available CSI knowledge for
precoder design, i.e., $\{\mathbf{V}_{i}\}$ can only be functions
of the partial CSI $\{F_{j}\}$ that is fed back. This reflects the
motivation to reduce the CSI feedback cost while maintaining the IA
performance in MIMO interference networks. In most conventional works
of feedback designs for IA in MIMO interference network \cite{rao2012limited,el2011grassmannian},
it has been considered that the \emph{full channel direction} is fed
back, (i.e., $F_{j}(\mathcal{H}_{j})=\left(\begin{array}{ccc}
\cdots, & \{a\mathbf{H}_{ji}:a\in\mathbb{C}\}, & \cdots\end{array}\right)_{i\neq j}$ ), which corresponds to a feedback dimension of $\sum_{i,j:i\neq j}(M_{j}N_{i}-1)$
in the MIMO interference networks. In the case of full channel direction
feedback, the solution to Problem \ref{IA-design-based-general} has
been widely studied \cite{gomadam2011distributed,peters2011cooperative,santamaria2010maximum},
and under some sufficient conditions \cite{yetis2010feasibility,razaviyayn2011degrees,ruan2012feasibility},
Problem \ref{IA-design-based-general} above is feasible. However,
the challenge comes when the CSI direction are not fully fed back. 

Yet, for a \emph{given} number of DoF requirements and antennas setups
in MIMO interference networks, the full CSI direction might not always
be required while IA can still be achieved. As illustrated by three
motivating examples in Figure 1 (a)-(c), we show that Problem \ref{IA-design-based-general}
can still be feasible with a much smaller feedback dimension. Denote
$\mathbb{N}^{t}(\cdot)$ and $\mathbb{N}^{r}(\cdot)$ as the null
space and left null space respectively, i.e., $\mathbb{N}^{t}(\mathbf{A})=\{\mathbf{u}\mid\mathbf{A}\mathbf{u}=\mathbf{0}\}$,
$\mathbb{N}^{r}(\mathbf{A})=\{\mathbf{u}\mid\mathbf{u}^{\dagger}\mathbf{A}=\mathbf{0}\}$. 
\begin{example}
[CSI Feedback Design I]Consider a MIMO interference network as illustrated
in Fig. \ref{fig:Example-of-feedback} (a). Suppose the CSI feedback
functions are given by:  $F_{1}(\mathcal{H}_{1})=\mathbb{N}_{1}^{t}(\mathbf{H}_{12})$,
$F_{2}(\mathcal{H}_{2})=\mathbb{N}^{t}(\mathbf{H}_{23})$, $F_{3}(\mathcal{H}_{3})=\mathbb{N}^{t}(\mathbf{H}_{31})$.
 The precoders $\mathbf{V}_{1}$, $\mathbf{V}_{2}$ , $\mathbf{V}_{3}$$\in\mathbb{C}^{6\times2}$
are designed as: $\textrm{span}(\mathbf{V}_{1})=\mathbb{N}_{1}^{t}(\mathbf{H}_{31})$,
$\textrm{span}(\mathbf{V}_{2})=\mathbb{N}^{t}(\mathbf{H}_{12})$,
$\textrm{span}(\mathbf{V}_{3})=\mathbb{N}^{t}(\mathbf{H}_{23})$,
the decorrelators $\mathbf{U}_{1}$, $\mathbf{U}_{2}$ , $\mathbf{U}_{3}$$\in\mathbb{C}^{4\times2}$
are designed as: $\textrm{span}(\mathbf{U}_{1})=\mathbb{N}^{r}\left(\mathbf{H}_{13}\mathbf{V}_{3}\right)$,
$\textrm{span}(\mathbf{U}_{2})=\mathbb{N}^{r}\left(\mathbf{H}_{21}\mathbf{V}_{1}\right)$,
$\textrm{span}(\mathbf{U}_{3})=\mathbb{N}^{r}\left(\mathbf{H}_{32}\mathbf{V}_{2}\right)$.
Consequently, Problem 1 is almost surely feasible, and the feedback
dimension is only 24 compared with 138 under full channel direction
feedback. 
\end{example}

\begin{example}
[CSI Feedback Design II]Consider a MIMO interference network as
illustrated in Fig. \ref{fig:Example-of-feedback} (b). Suppose the
CSI feedback functions are given by: $F_{1}(\mathcal{H}_{1})=\mathbb{N}^{t}\left((\mathbf{S}_{1}^{r})^{\dagger}\mathbf{H}_{12}\right)$,
$F_{2}(\mathcal{H}_{2})=\mathbb{N}^{t}\left((\mathbf{S}_{2}^{r})^{\dagger}\mathbf{H}_{21}\right)$,
$F_{3}(\mathcal{H}_{3})=\left(\mathbb{N}^{t}(\mathbf{H}_{32}),\mathbb{N}^{t}(\mathbf{H}_{31})\right)$,
where $\textrm{span}(\mathbf{S}_{1}^{r})=\mathbb{N}^{r}(\mathbf{H}_{13})$
and $\textrm{span}(\mathbf{S}_{2}^{r})=\mathbb{N}^{r}(\mathbf{H}_{23})$.
The precoders are designed as: $\textrm{span}(\mathbf{V}_{1})=\mathbb{N}^{t}\left((\mathbf{S}_{2}^{r})^{\dagger}\mathbf{H}_{21}\right)\bigcap\mathbb{N}^{t}(\mathbf{H}_{31})$,
$\textrm{span}(\mathbf{V}_{2})=\mathbb{N}^{t}\left((\mathbf{S}_{1}^{r})^{\dagger}\mathbf{H}_{12}\right)\bigcap\mathbb{N}^{t}(\mathbf{H}_{32})$
and $\mathbf{V}_{3}=\mathbf{I}_{2}$. Problem 1 is also almost surely
feasible, and the feedback dimension is only 32 compared with 82 under
under full channel direction feedback. 
\end{example}

\begin{example}
[CSI Feedback Design III]Consider a MIMO interference network as
illustrated in Fig. \ref{fig:Example-of-feedback} (c). Suppose the
CSI feedback functions are given by: $F_{1}(\mathcal{H}_{1})=\textrm{span}\left(\left[\begin{array}{cc}
\mathbf{H}_{12}^{s} & \mathbf{H}_{13}^{s}\end{array}\right]^{T}\right)$, $F_{2}(\mathcal{H}_{2})=\textrm{span}\left(\left[\begin{array}{cc}
\mathbf{H}_{21}^{s} & \mathbf{H}_{23}^{s}\end{array}\right]^{T}\right)$ and $F_{3}(\mathcal{H}_{3})=\textrm{span}\left(\left[\begin{array}{cc}
\mathbf{H}_{31}^{s} & \mathbf{H}_{32}^{s}\end{array}\right]^{T}\right)$, where $\mathbf{H}_{ji}^{s}=\left[\begin{array}{cc}
\mathbf{I}_{4} & \mathbf{0}\end{array}\right]\mathbf{H}_{ji}$, $\forall j,i.$ Problem 1 is also almost surely feasible, and the
feedback dimension is only 48 compared with 114 under full channel
direction feedback.
\end{example}
\begin{figure*}
\begin{centering}
\includegraphics[scale=0.7]{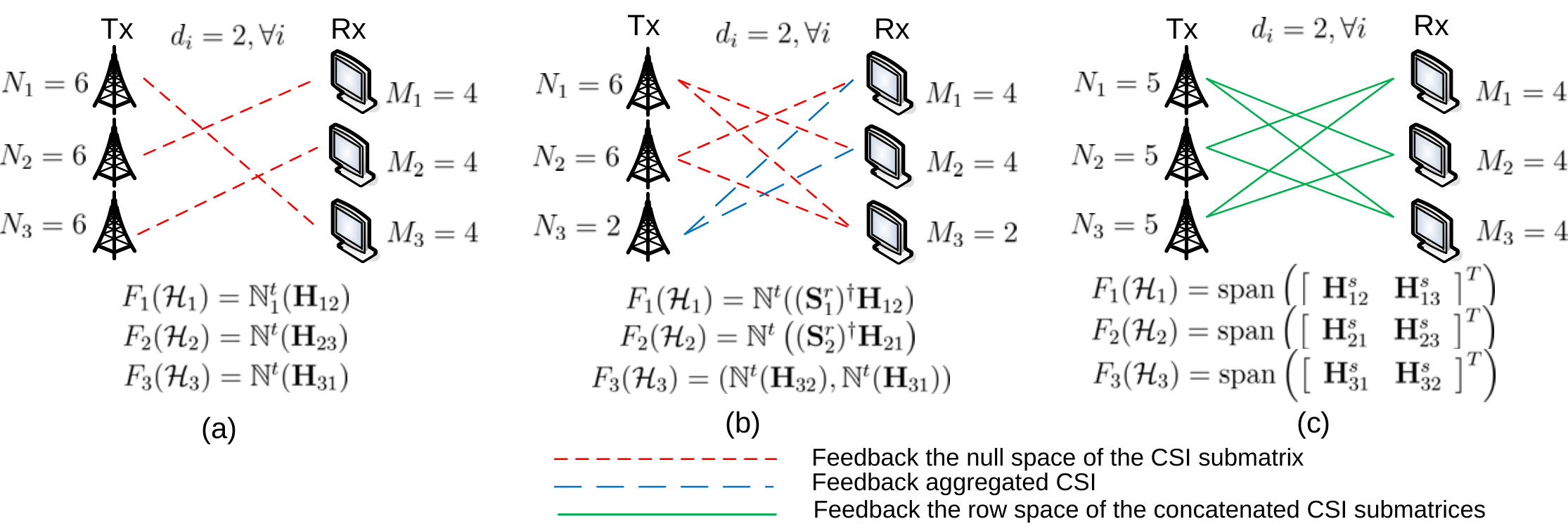}
\par\end{centering}

\vspace{0.5cm}

\begin{centering}
\begin{tabular}{|c|c|c|c|}
\hline 
Feedback Dimension & In Fig. \ref{fig:Example-of-feedback}(a) & In Fig. \ref{fig:Example-of-feedback}(b) & In Fig. \ref{fig:Example-of-feedback}(c)\tabularnewline
\hline 
\hline 
Full Channel Direction Feedback & 138 & 82 & 114\tabularnewline
\hline 
Proposed Feedback Schemes & 24 & 32 & 48\tabularnewline
\hline 
\end{tabular}
\par\end{centering}

\caption{\label{fig:Example-of-feedback}Example of feedback topology design}
\end{figure*}

In the above three examples, Problem \ref{IA-design-based-general}
is feasible even if the total feedback dimension are 24, 32 and 48
respectively. This represents an 83, 69 and 58 \% reduction in the
feedback cost compared with full channel direction feedback. The following
four insights can be obtained from these three examples on how to
reduce the feedback dimension at each Rx:
\begin{itemize}
\item \textbf{Strategy I (No Feedback for a Subset of Cross Links)}: In
practice, IA may be achieved with no feedback for a subset of the
cross links. For instance, in Example 1, cross links $\mathbf{H}_{13}$,
$\mathbf{H}_{21}$ and $\mathbf{H}_{32}$ are not fed back at all.
With this strategy, Problem 1 is still feasible and the feedback dimension
is significantly reduced.
\item \textbf{Strategy II (Feedback of Aggregate CSI for a Subset of Cross
Links)}: In practice, IA may be achieved by feeding back the aggregate
CSI for a subset of cross links. For instance, in Example 2, link
$\mathbf{H}_{13}$ is canceled by designing the decorrelator of Rx
1 in the space of $\mathbb{N}^{r}(\mathbf{H}_{13})$. Hence, the necessary
feedback information for that link is $\mathbf{S}_{1}^{r}$ ($\textrm{span}(\mathbf{S}_{1}^{r})=\mathbb{N}^{r}(\mathbf{H}_{13})$),
which is aggregated in the feedback CSI of the other subsets of cross
links (e.g., the CSI feedback for the link from Tx 2 to Rx 1 has the
form $(\mathbf{S}_{1}^{r})^{\dagger}\mathbf{H}_{12}$). With this
strategy, Problem 1 is still feasible and the feedback dimension is
reduced.
\item \textbf{Strategy III (Feedback of Null Space of CSI Submatrix for
a Subset of Cross Links)}: In practice, IA may be achieved by feeding
back the null spaces for a subset of cross links. This is because
the Tx can design the precoder in the channel null space to cancel
that link. For instance, in Example 2, only the null spaces of $(\mathbf{S}_{1}^{r})^{\dagger}\mathbf{H}_{12}$
are fed back at Rx 1. With this strategy, Problem 1 is still feasible
and the feedback dimension is reduced.
\item \textbf{Strategy IV (Feedback of Row Space of CSI Submatrices} \textbf{for
a Subset of Cross Links}): In practice, IA can be achieved by feeding
back the row space of the concatenated CSI submatrices for a subset
of cross links. For instance, in Example 3, only $\textrm{span}\left(\left[\begin{array}{cc}
\mathbf{H}_{12}^{s} & \mathbf{H}_{13}^{s}\end{array}\right]^{T}\right)$ is fed back at Rx 1. With this strategy, Problem 1 is still feasible
and the feedback dimension is reduced.
\end{itemize}

Note that it is possible to use only one of the above strategies or
apply them together and how to use these strategies depends on the
DoF requirements and the antenna configurations. Furthermore, different
combinations of these strategies may have significantly different
IA feasibility result and final feedback cost. To begin with, we assume
some structure forms for the feedback functions that can embrace all
these 4 strategies. Based on the above insights, we shall first partition
the cross links seen by the $j$-th Rx into four subsets defined below.
\begin{definitn}
[Partitioning of Cross Links]\label{Partitioning-of-Cross}The set
of cross links seen by the $j$-th Rx is partitioned into four subsets,
namely, $\Omega_{j}^{I}$, $\Omega_{j}^{II}$, $\Omega_{j}^{III}$
and $\Omega_{j}^{IV}$, according to the four strategies illustrated
above. Note that $\underset{m=\{I,II,III,IV\}}{\mathbf{\bigcup}}\Omega_{j}^{m}=\{1,\cdots j-1,j+1,\cdots K\}$,
$\Omega_{j}^{m}\bigcap\Omega_{j}^{n}=\emptyset,\forall m\neq n,m,n\in\{I,II,III,IV\}$.\hfill \IEEEQED
\end{definitn}

The feedback functions $\mathcal{F}$ are assumed to have the following
structure.

\begin{figure}
\begin{centering}
\includegraphics[scale=0.8]{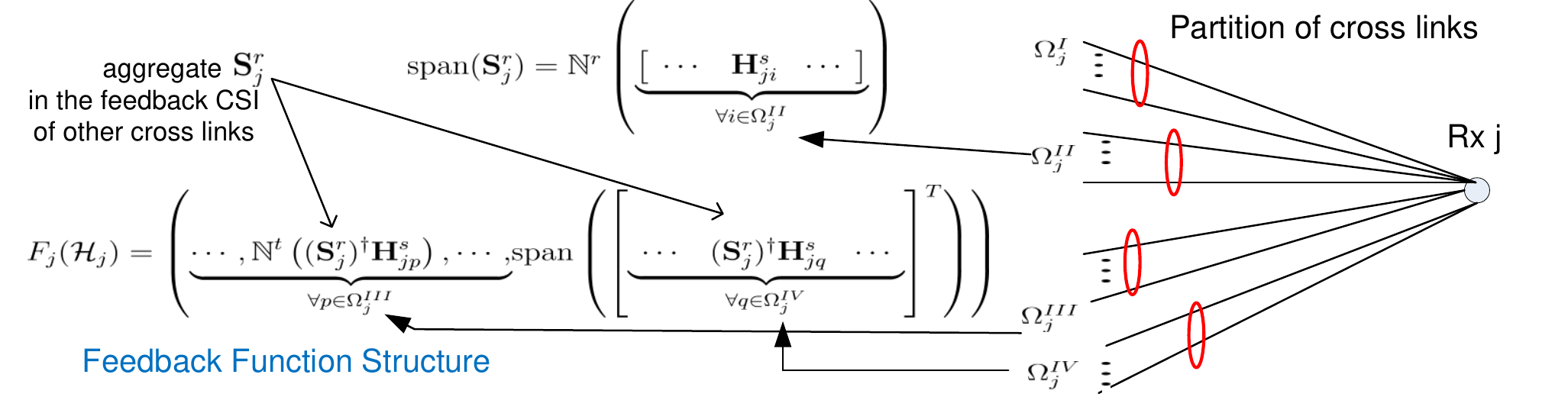}
\par\end{centering}

\caption{\label{fig:Illustration-of-Feedback}Illustration of Feedback Function
structure}
\end{figure}

\begin{assumption}
[Structure of Feedback Functions $\mathcal{F}$]\label{Structure-of-Feedback}The
feedback functions $\mathcal{F}_{j}$ in (\ref{eq:feedback-info})
for the MIMO interference networks have the following structure:
\begin{equation}
F_{j}(\mathcal{H}_{j})=\left(\underset{\forall p\in\Omega_{j}^{III}}{\underbrace{\cdots,\mathbb{N}^{t}\left((\mathbf{S}_{j}^{r})^{\dagger}\mathbf{H}_{jp}^{s}\right),\cdots,}}\textrm{span}\left(\left[\underset{\forall q\in\Omega_{j}^{IV}}{\underbrace{\begin{array}{ccc}
\cdots & (\mathbf{S}_{j}^{r})^{\dagger}\mathbf{H}_{jq}^{s} & \cdots\end{array}}}\right]^{T}\right)\right),\label{eq:feedback_structure}
\end{equation}
where $\mathbf{S}_{j}^{r}\in\mathbb{C}^{M_{j}^{s}\times M_{j}^{e}}$,
$(\mathbf{S}_{j}^{r})^{\dagger}\mathbf{S}_{j}^{r}=\mathbf{I}_{M_{j}^{e}}$,
\begin{equation}
\textrm{span}(\mathbf{S}_{j}^{r})=\mathbb{N}^{r}\left(\underset{\forall i\in\Omega_{j}^{II}}{\underbrace{\left[\begin{array}{ccc}
\cdots & \mathbf{H}_{ji}^{s} & \cdots\end{array}\right]}}\right),\label{eq:s_j^r_infor}
\end{equation}
\begin{equation}
M_{j}^{e}=M_{j}^{s}-\sum_{i\in\Omega_{j}^{II}}N_{i}^{s},\forall j.\label{eq:notaion_M_j^e}
\end{equation}
\begin{equation}
\mathbf{H}_{ji}^{s}=\left[\begin{array}{cc}
\mathbf{I}_{M_{j}^{s}} & \mathbf{0}\end{array}\right]\mathbf{H}_{ji}\left[\begin{array}{c}
\mathbf{I}_{N_{j}^{s}}\\
\mathbf{0}
\end{array}\right],\forall j,i.\label{eq:sub_channel_matrix}
\end{equation}
and $\{M_{i}^{s},N_{i}^{s}\}$ are parameters that characterize the
feedback functions $\mathcal{F}$. \hfill \IEEEQED
\end{assumption}

The feedback function structure is also illustrated in Fig. \ref{fig:Illustration-of-Feedback}.
Note that the length of the tuple $\mathbb{\mathcal{H}}_{j}^{fed}$
is $k_{j}=|\Omega_{j}^{III}|+1$. Denote $\Omega_{j}^{III}\triangleq\{p_{1},\cdots,p_{i},\cdots\}$,
then $\mathbb{N}^{t}\left((\mathbf{S}_{j}^{r})^{\dagger}\mathbf{H}_{jp_{i}}^{s}\right)\in\mathbb{G}(A_{ji},B_{ji})$,
where $A_{ji}=M_{j}^{e}$, $B_{ji}=N_{p_{i}}^{s}$, $1\leq i\leq|\Omega_{j}^{III}|$
and $\textrm{span}\left(\left[\begin{array}{ccc}
\cdots & (\mathbf{S}_{j}^{r})^{\dagger}\mathbf{H}_{jq}^{s} & \cdots\end{array}\right]_{\forall q\in\Omega_{j}^{IV}}^{T}\right)\in\mathbb{G}(A_{jk_{j}},B_{jk_{j}})$, where $A_{jk_{j}}=\min\left(M_{j}^{e},\sum_{v\in\Omega_{j}^{IV}}N_{v}^{s}\right)$,
$B_{jk_{j}}=\sum_{v\in\Omega_{j}^{IV}}N_{v}^{s}$, as in (\ref{eq:feedback-info}). 

Note that the structural form of $\mathcal{F}$ in (\ref{eq:feedback_structure})
embraces all four strategies  of CSI feedback dimension reduction
inspired by examples 1-3. Based on the structural form of $\mathcal{F}$,
we define the notion of \emph{feedback profile} of $\mathcal{F}$,
which gives a parametrization of $\mathcal{F}$. 
\begin{definitn}
[Feedback Profile of $\mathcal{F}$]\label{Feedback-Profile-of}Define
the feedback profile of $\mathcal{F}$ as a set of parameters:
\begin{equation}
\mathcal{L}=\left\{ \left\{ M_{i}^{s},N_{i}^{s}:\forall i\right\} ,\left\{ \Omega_{j}^{I},\Omega_{j}^{II},\Omega_{j}^{III},\Omega_{j}^{IV}:\forall j\right\} \right\} .\label{eq:feedback_topology}
\end{equation}
\hfill \IEEEQED
\end{definitn}

Note that $\{M_{j}^{s},N_{i}^{s}\}$ controls the size of the CSI
submatrices to feedback and $\left\{ \Omega_{j}^{m}:m\in\{I,II,III,IV\}\}\right\} $
defines the partitioning of the cross links w.r.t. the four feedback
strategies at the $j$-th Rx. In fact, there is a 1-1 correspondence
between the feedback profile $\mathcal{L}$ and the feedback function
in (\ref{eq:feedback_structure}). For a given feedback profile $\mathcal{L}$
(or feedback function $\mathcal{F}$), the total feedback dimension
is given by,
\begin{equation}
D(\mathcal{L})=\sum_{j=1}^{K}M_{j}^{e}\left(\sum_{i\in\Omega_{j}^{IV}}N{}_{i}^{s}-M_{j}^{e}\right)^{+}+\sum_{j=1}^{K}\sum_{i\in\Omega_{j}^{III}}M_{j}^{e}\left(N{}_{i}^{s}-M_{j}^{e}\right).\label{eq:sum_feedback_dimension_expression}
\end{equation}

In fact, the CSI feedback function in (\ref{eq:s_j^r_infor}) and
the associated feedback profile in (\ref{eq:feedback_topology}) cover
a lot of existing CSI feedback designs in the literature, and we mention
a few below. 
\begin{itemize}
\item \textbf{Special case I }\emph{(Feedback Truncated CSI)}: In \cite{rezaee2012interference,rezaee2013csit},
a truncated CSI feedback scheme is proposed in MIMO interference network.
The feedback scheme corresponds to the feedback profile $\mathcal{L}$
: $M_{i}^{s}=M_{i}$, $N_{i}^{s}=N_{i}$, $\Omega_{j}^{I}=\Omega_{j}^{II}=\Omega_{j}^{III}=\emptyset$,
$\Omega_{j}^{IV}=\{1,\cdots,j-1,j+1,\cdots K\}$, $\forall j$, and
feedback function $F_{j}=\textrm{span}\left(\left[\begin{array}{ccc}
\cdots & \mathbf{H}_{ji} & \cdots\end{array}\right]_{i:\, i\neq j}^{T}\right)$, $\forall j$.
\item \textbf{Special case II} \emph{(Two-hop Centralized CSI Feedback)}:
In \cite{cho2011feedback}, a centralized two-hop feedback scheme
is proposed based on the closed form solutions of IA in MIMO interference
network. The feedback scheme corresponds to the feedback profile $\mathcal{L}$:
$M_{i}^{s}=M_{i}$, $N_{i}^{s}=N_{i}$, $\Omega_{j}^{II}=\Omega_{j}^{III}=\emptyset$,
$\Omega_{j}^{IV}=\{a_{j},b_{j}\}$, $\Omega_{j}^{I}=\{1,\cdots,j-1,j+1,\cdots K\}/\Omega_{j}^{IV}$,
$\forall j$, where $(a_{1},b_{1})=(2,3),$ $(a_{2},b_{2})=(3,4),$...$(a_{K-1},b_{K-1})=(K,1)$,
$(a_{K},b_{K})=(1,2)$, and the feedback function $F_{j}=\textrm{span}\left(\left[\begin{array}{cc}
\mathbf{H}_{ja_{j}} & \mathbf{H}_{jb_{j}}\end{array}\right]^{T}\right)=\textrm{span}\left(\left[\mathbf{H}_{jb_{j}}^{-1}\begin{array}{cc}
\mathbf{H}_{ja_{j}} & \mathbf{I}\end{array}\right]^{T}\right)$, $\forall j$.
\end{itemize}

\section{Design of IA Precoders and Decoders under a Feedback Profile}

In this section, we focus on solving the IA precoders and decorrelators
design in Problem 1 for a given feedback profile $\mathcal{L}$. Specifically,
we first impose some structural properties on the precoders / decorrelators
so as to satisfy the constraints of partial CSI feedback. Based on
the proposed structures, we transform Problem 1 into an equivalent
bi-convex problem and derive an iterative solution.

\subsection{Structure of IA Precoders / Decorrelators}

One unique challenge of the IA precoders / decorrelators design in
Problem 1 is that the precoders can only be adaptive to the partial
CSI knowledge at the Txs. This is fundamentally different from conventional
IA precoders / decorrelators design in which both can be adaptive
to the entire CSI matrices. To address this challenge, we shall first
impose some structures in the precoders / decorrelators so as to utilize
the partial CSI obtained from combinations of feedback strategies%
\footnote{Notice that Strategy I does not feedback the CSI for the chosen subset
of cross links.%
} I-IV. 
\begin{itemize}
\item \textbf{Utilization of Partial CSI from Feedback Strategy II}: From
feedback strategy II, we can obtain the aggregated CSI with $\mathbf{S}_{j}^{r}$,
which spans $\mathbb{N}^{r}\left(\left[\begin{array}{ccc}
\cdots & \mathbf{H}_{ji}^{s} & \cdots\end{array}\right]_{\forall i\in\Omega_{j}^{II}}\right)$. Hence, we can design the decorrelator of Rx $j$ in the space of
$\textrm{span}(\mathbf{S}_{j}^{r})$, and consequently, all interference
from Tx $i$ to Rx $j$, where $i\in\Omega_{j}^{II}$, is eliminated
(note $(\mathbf{S}_{j}^{r})^{\dagger}\mathbf{H}_{ji}^{s}=\mathbf{0}$,
$\forall j,i\in\Omega_{j}^{II}$ ).
\item \textbf{Utilization of Partial CSI from Feedback Strategy III}: From
feedback strategy III, we obtain the following set of spaces $\left\{ \mathbb{N}^{t}\left((\mathbf{S}_{j}^{r})^{\dagger}\mathbf{H}_{ji}^{s}\right):\forall j,i\in\Omega_{j}^{III}\right\} $.
Based on this information, we obtain the set of matrices $\{\mathbf{S}_{i}^{t}\in\mathbb{C}^{N_{i}^{s}\times N_{i}^{e}}:\forall i\}$
, where $(\mathbf{S}_{i}^{t})^{\dagger}\mathbf{S}_{i}^{t}=\mathbf{I}_{N_{i}^{e}}$,
\begin{equation}
\textrm{ span }\left(\mathbf{S}_{i}^{t}\right)=\underset{\forall j:\, i\in\Omega_{j}^{III}}{\bigcap}\mathbb{N}^{t}\left((\mathbf{S}_{j}^{r})^{\dagger}\mathbf{H}_{ji}^{s}\right)\label{eq:s_t_result}
\end{equation}
and
\begin{equation}
N_{i}^{e}=N_{i}^{s}-\sum_{j:\, i\in\Omega_{j}^{III}}M_{j}^{e},\forall i.\label{eq:N_e_define}
\end{equation}
Hence, we can design the precoder of Tx $i$ in the space of $\textrm{span}(\mathbf{S}_{i}^{t})$,
and consequently, all the interference from Tx $i$ to Rx $j$, where
$i\in\Omega_{j}^{III}$, is eliminated (note $(\mathbf{S}_{j}^{r})^{\dagger}\mathbf{H}_{ji}^{s}\mathbf{S}_{i}^{t}=\mathbf{0}$,
$\forall j,i\in\Omega_{j}^{III}$).
\item \textbf{Utilization of Partial CSI from Feedback Strategy IV}: From
feedback strategy IV, we obtain the following set of spaces, i.e.,
$\left\{ \textrm{span}\left(\left[\begin{array}{ccc}
\cdots & (\mathbf{S}_{j}^{r})^{\dagger}\mathbf{H}_{ji}^{s} & \cdots\end{array}\right]_{i\in\Omega_{j}^{IV}}^{T}\right):\forall j\right\} $. Based on this information, we find the set of matrices $\{\mathbf{\tilde{H}}_{j}:\forall j\}$,
where $\textrm{span}(\mathbf{\tilde{H}}_{j}^{T})=\textrm{span}\left(\left[\begin{array}{ccc}
\cdots & (\mathbf{S}_{j}^{r})^{\dagger}\mathbf{H}_{ji}^{s} & \cdots\end{array}\right]_{i\in\Omega_{j}^{IV}}^{T}\right)$, and there must exist an invertible matrix $\mathbf{R}_{j}\in\mathbb{C}^{M_{j}^{e}\times M_{j}^{e}}$
such that 
\begin{equation}
\mathbf{\tilde{H}}_{j}=\mathbf{R}_{j}\left[\begin{array}{ccc}
\cdots & (\mathbf{S}_{j}^{r})^{\dagger}\mathbf{H}_{ji}^{s} & \cdots\end{array}\right]_{i\in\Omega_{j}^{IV}}.\label{eq:H_hat_information}
\end{equation}
 Hence, we can obtain $\left\{ \mathbf{R}_{j}(\mathbf{S}_{j}^{r})^{\dagger}\mathbf{H}_{ji}^{s}:\forall j,i\in\Omega_{j}^{IV}\right\} $,
and the precoders / decorrelators can be designed based on these effective
CSI matrices such that the interference from Tx $p$ to Rx $q$, $\forall(p,q)\in\{(j,i):\forall j,i\in\Omega_{j}^{IV}\}$
can be aligned into a lower dimensional subspace at the Rxs.
\end{itemize}

Based on these insights, we propose the following structures for $\{\mathbf{V}_{i},\mathbf{U}_{j}\}$
in the MIMO interference networks.
\begin{definitn}
[IA Precoders / Decorrelators Structure]The IA solutions $\{\mathbf{V}_{i},\mathbf{U}_{j}\}$
for Problem \ref{IA-design-based-general} have the following structure:
\begin{equation}
\mathbf{V}_{i}=\left[\begin{array}{c}
\mathbf{S}_{i}^{t}\mathbf{V}_{i}^{a}\\
\mathbf{0}
\end{array}\right],\quad\mathbf{U}_{j}=\left[\begin{array}{c}
\mathbf{S}_{j}^{r}(\mathbf{R}_{j})^{\dagger}\mathbf{U}_{j}^{a}\\
\mathbf{0}
\end{array}\right],\label{eq:transceiver_structure}
\end{equation}
where $\mathbf{V}_{i}^{a}\in\mathbb{C}^{N_{i}^{e}\times d_{i}}$,
$\mathbf{U}_{j}^{a}\in\mathbb{C}^{M_{j}^{e}\times d_{j}}$, $M_{j}^{e}$
and $N_{i}^{e}$ are given in (\ref{eq:notaion_M_j^e}) and (\ref{eq:N_e_define})
respectively. \hfill \IEEEQED
\end{definitn}

Note the above solution structures (\ref{eq:transceiver_structure})
automatically satisfy the IA constraints (\ref{eq:IA_condition})
for links from Tx $i$ to Rx $j$, where $i\in\Omega_{j}^{II}\Omega_{j}^{III}$,
$\forall j$ and they satisfy the partial CSI feedback constraints
in Problem \ref{IA-design-based-general}. However, the constraints
that $\{\mathbf{H}_{ji}:\forall j,i\in i\in\Omega_{j}^{I}\}$ are
not fed back and hence can not be utilized to design the precoders,
still make it hard to apply classical \emph{Algebraic Geometry} theory
\cite{razaviyayn2011degrees,ruan2012feasibility} to the study of
Problem 1. To cope with this, we further transform Problem 1 to the
following feasibility problem in which all the hidden constraints
on the available CSI knowledge are explicitly handled. 
\begin{problem}
[Transformed IA Problem]\label{CIA-under-transceiver}Find $\mathbf{V}_{i}^{a}\in\mathbb{C}^{N_{i}^{e}\times d_{i}}$,
$\forall i$ and $\mathbf{U}_{j}^{b}\in\mathbb{C}^{M_{j}^{e}\times d_{j}^{0}}$,
$\forall j$ such that $\{\mathbf{V}_{i}^{a},\mathbf{U}_{j}^{b}\}$
satisfy
\begin{equation}
\textrm{rank}(\mathbf{V}_{i}^{a})=d_{i},\;\forall i,\,\textrm{rank}(\mathbf{U}_{j}^{b})=d_{j}^{0},\;\forall j,\label{eq:problem_2_rank_conditioin}
\end{equation}
\begin{equation}
(\mathbf{U}_{j}^{b})^{\dagger}\mathbf{G}_{ji}\mathbf{V}_{i}^{a}=\mathbf{0},\,\forall j,i\in\Omega^{IV},\label{eq:equivalent_problem2}
\end{equation}
where $\mathbf{G}_{ji}=\mathbf{R}_{j}(\mathbf{S}_{j}^{r})^{\dagger}\mathbf{H}_{ji}^{s}\mathbf{S}_{i}^{t}$,
$d_{j}^{0}=d_{j}+\sum_{i\in\Omega_{j}^{I}}d_{i}$, $\forall j,i\in\Omega^{IV}$.\hfill \IEEEQED
\end{problem}

\begin{lemma}
[Equivalence of Problem 1 and Problem \ref{CIA-under-transceiver}]\label{With-the-transceiver}For
a given feedback profile $\mathcal{L}$, under the precoders / decorrelators
structures in (\ref{eq:transceiver_structure}), we have almost surely
that Problem 1 is feasible iff Problem 2 is feasible. Furthermore,
if $\{\mathbf{V}_{i}^{a},\mathbf{U}_{j}^{b}\}$ are the solution of
Problem \ref{CIA-under-transceiver}, then 
\begin{equation}
\mathbf{V}_{i}=\left[\begin{array}{c}
\mathbf{S}_{i}^{t}\mathbf{V}_{i}^{a}\\
\mathbf{0}
\end{array}\right],\mathbf{U}_{j}=v_{d_{j}}\left(\sum_{i\neq j}\left(\mathbf{H}_{ji}\mathbf{V}_{i}\right)\left(\mathbf{H}_{ji}\mathbf{V}_{i}\right)^{\dagger}\right),\forall i,j\label{eq:solution_calculation}
\end{equation}
are solution of Problem \ref{IA-design-based-general}, where $v_{(d)}(\mathbf{A})$
is the matrix of eigenvectors  corresponding to the $d$ least eigenvalues
of a Hermitian matrix $\mathbf{A}$.\end{lemma}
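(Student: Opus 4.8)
The plan is to prove the two implications of the ``iff'' separately, with the sufficiency direction simultaneously delivering the explicit formula (\ref{eq:solution_calculation}). The computational core, used in both directions, is a single substitution: plugging the precoder form $[\mathbf{S}_{i}^{t}\mathbf{V}_{i}^{a};\mathbf{0}]$ and any decorrelator of the structural form $[\mathbf{S}_{j}^{r}\mathbf{R}_{j}^{\dagger}\mathbf{W};\mathbf{0}]$ (a general nonzero block $\mathbf{W}$) into $\mathbf{U}_{j}^{\dagger}\mathbf{H}_{ji}\mathbf{V}_{i}$ and invoking (\ref{eq:sub_channel_matrix}) --- so that only the $M_{j}^{s}\times N_{i}^{s}$ leading block of $\mathbf{H}_{ji}$ is hit --- reduces the bilinear form to $\mathbf{W}^{\dagger}\mathbf{G}_{ji}\mathbf{V}_{i}^{a}$, where we extend $\mathbf{G}_{ji}=\mathbf{R}_{j}(\mathbf{S}_{j}^{r})^{\dagger}\mathbf{H}_{ji}^{s}\mathbf{S}_{i}^{t}$ to all $i$. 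By (\ref{eq:s_j^r_infor}) this vanishes identically for $i\in\Omega_{j}^{II}$, and by (\ref{eq:s_t_result}) it vanishes identically for $i\in\Omega_{j}^{III}$; hence the only cross links still needing attention are those in $\Omega_{j}^{I}\cup\Omega_{j}^{IV}$, which is exactly the set retained in Problem \ref{CIA-under-transceiver}.

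\emph{Sufficiency (and the explicit construction).} Let $\{\mathbf{V}_{i}^{a},\mathbf{U}_{j}^{b}\}$ solve Problem \ref{CIA-under-transceiver} and set $\mathbf{V}_{i}=[\mathbf{S}_{i}^{t}\mathbf{V}_{i}^{a};\mathbf{0}]$. Since $\mathbf{S}_{i}^{t},\mathbf{S}_{j}^{r},\mathbf{R}_{j}$ and the $\mathbf{G}_{ji}$ are functions of $\{F_{j}(\mathcal{H}_{j})\}$, this $\mathbf{V}_{i}$ respects the partial-CSI constraint of Problem \ref{IA-design-based-general}, and $\textrm{rank}(\mathbf{V}_{i})=d_{i}$ by (\ref{eq:problem_2_rank_conditioin}). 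The crucial step is to bound the aggregate interference at each Rx $j$. The space $\textrm{span}(\{\mathbf{G}_{ji}\mathbf{V}_{i}^{a}:i\in\Omega_{j}^{I}\})\subseteq\mathbb{C}^{M_{j}^{e}}$ has dimension at most $\sum_{i\in\Omega_{j}^{I}}d_{i}=d_{j}^{0}-d_{j}$, while $\textrm{span}(\mathbf{U}_{j}^{b})$ has dimension $d_{j}^{0}$, so there is a $d_{j}$-dimensional subspace $\mathcal{W}_{j}\subseteq\textrm{span}(\mathbf{U}_{j}^{b})$ orthogonal to the former. For any $\mathbf{W}$ with columns in $\mathcal{W}_{j}$, the identity above, together with the $\Omega_{j}^{II}/\Omega_{j}^{III}$ vanishing, the relation (\ref{eq:equivalent_problem2}) for $i\in\Omega_{j}^{IV}$, and the choice of $\mathcal{W}_{j}$ for $i\in\Omega_{j}^{I}$, gives $[\mathbf{S}_{j}^{r}\mathbf{R}_{j}^{\dagger}\mathbf{W};\mathbf{0}]^{\dagger}\mathbf{H}_{ji}\mathbf{V}_{i}=\mathbf{0}$ for all $i\neq j$. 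Taking $\mathbf{W}$ of rank $d_{j}$ and using injectivity of $\mathbf{S}_{j}^{r}\mathbf{R}_{j}^{\dagger}$ exhibits a $d_{j}$-dimensional subspace of $\mathbb{C}^{M_{j}}$ orthogonal to $\textrm{span}(\{\mathbf{H}_{ji}\mathbf{V}_{i}:i\neq j\})$; hence the latter has dimension at most $M_{j}-d_{j}$, so $\sum_{i\neq j}(\mathbf{H}_{ji}\mathbf{V}_{i})(\mathbf{H}_{ji}\mathbf{V}_{i})^{\dagger}$ has at least $d_{j}$ zero eigenvalues and $\mathbf{U}_{j}=v_{d_{j}}(\cdot)$ of (\ref{eq:solution_calculation}) has its $d_{j}$ orthonormal columns in the interference null space, i.e. (\ref{eq:IA_condition}) holds, with $\mathbf{U}_{j}$ a function of the local CSIR $\{\mathbf{H}_{ji}\mathbf{V}_{i}\}$ only. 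Finally (\ref{eq:MIA}) holds almost surely: $\{\mathbf{V}_{i}\},\{\mathbf{U}_{j}\}$ are built purely from cross-link matrices, so conditioned on these $\mathbf{H}_{jj}$ is still i.i.d.\ Gaussian by Assumption \ref{Channel-MatricesWe-assume}, and $\det(\mathbf{U}_{j}^{\dagger}\mathbf{H}_{jj}\mathbf{V}_{j})$ is a nonzero polynomial in the entries of $\mathbf{H}_{jj}$ for the (a.s.\ full-rank) fixed $\mathbf{U}_{j},\mathbf{V}_{j}$. Using $[\mathbf{S}_{j}^{r}\mathbf{R}_{j}^{\dagger}\mathbf{W};\mathbf{0}]$ in place of the eigenvector form gives the same conclusion, so Problem \ref{IA-design-based-general} is feasible within the structure (\ref{eq:transceiver_structure}).

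\emph{Necessity.} Conversely, suppose structural transceivers with $\mathbf{U}_{j}=[\mathbf{S}_{j}^{r}\mathbf{R}_{j}^{\dagger}\mathbf{U}_{j}^{a};\mathbf{0}]$, $\textrm{rank}(\mathbf{U}_{j}^{a})=d_{j}$, solve Problem \ref{IA-design-based-general}. By the identity, (\ref{eq:IA_condition}) for $i\in\Omega_{j}^{IV}\cup\Omega_{j}^{I}$ becomes $(\mathbf{U}_{j}^{a})^{\dagger}\mathbf{G}_{ji}\mathbf{V}_{i}^{a}=\mathbf{0}$, so $\textrm{span}(\mathbf{U}_{j}^{a})$ is orthogonal in $\mathbb{C}^{M_{j}^{e}}$ to $S_{j}^{IV}+S_{j}^{I}$, where $S_{j}^{m}=\textrm{span}(\{\mathbf{G}_{ji}\mathbf{V}_{i}^{a}:i\in\Omega_{j}^{m}\})$; since $\textrm{rank}(\mathbf{U}_{j}^{a})=d_{j}$ this forces $\dim(S_{j}^{IV}+S_{j}^{I})\leq M_{j}^{e}-d_{j}$. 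Because the links in $\Omega_{j}^{I}$ generate no feedback, none of $\mathbf{S}_{i}^{t},\mathbf{V}_{i}^{a},\mathbf{S}_{j}^{r},\mathbf{R}_{j}$ (for $i\in\Omega_{j}^{I}$) nor $S_{j}^{IV}$ depends on $\{\mathbf{H}_{ji}:i\in\Omega_{j}^{I}\}$; conditioned on everything else these matrices are thus fresh i.i.d.\ Gaussian, so almost surely $\dim S_{j}^{I}=\sum_{i\in\Omega_{j}^{I}}d_{i}$ and $S_{j}^{IV}\cap S_{j}^{I}=\{\mathbf{0}\}$. Combining, $\dim S_{j}^{IV}\leq M_{j}^{e}-d_{j}-\sum_{i\in\Omega_{j}^{I}}d_{i}=M_{j}^{e}-d_{j}^{0}$, so the orthogonal complement of $S_{j}^{IV}$ in $\mathbb{C}^{M_{j}^{e}}$ has dimension at least $d_{j}^{0}$; choosing any rank-$d_{j}^{0}$ $\mathbf{U}_{j}^{b}$ with column span inside it yields, together with the unchanged $\{\mathbf{V}_{i}^{a}\}$ --- which satisfies $\textrm{rank}(\mathbf{V}_{i}^{a})=d_{i}$ since (\ref{eq:MIA}) forces $\mathbf{V}_{j}$, hence $\mathbf{V}_{j}^{a}$, to have rank $d_{j}$ --- a solution of Problem \ref{CIA-under-transceiver}.

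The step I expect to be the main obstacle is the general-position bookkeeping in the necessity direction: one must trace precisely which channel matrices each of $\mathbf{S}_{i}^{t},\mathbf{V}_{i}^{a},\mathbf{S}_{j}^{r},\mathbf{R}_{j}$ and $S_{j}^{IV}$ depends on in order to conclude that the $\Omega_{j}^{I}$-interference is statistically fresh and hence adds its full dimension $\sum_{i\in\Omega_{j}^{I}}d_{i}$ transversally to $S_{j}^{IV}$; this is exactly where the ``almost surely'' in the statement is consumed. The sufficiency direction, by contrast, is deterministic except for the final rank-$d_{j}$ check on the direct link, which is the routine ``a nonzero polynomial vanishes only on a null set'' argument.
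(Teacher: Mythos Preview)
Your proposal is correct and follows essentially the same approach as the paper's own proof: reduce the bilinear form via the structural substitution so that $\Omega_{j}^{II}$ and $\Omega_{j}^{III}$ links vanish identically, then handle the remaining $\Omega_{j}^{I}\cup\Omega_{j}^{IV}$ links by a dimension count in $\mathbb{C}^{M_{j}^{e}}$ that hinges on the $\Omega_{j}^{I}$ interference being ``fresh'' randomness contributing its full $d_{j}^{0}-d_{j}$ dimensions transversally to $S_{j}^{IV}$. Your write-up is in fact more explicit than the paper's (you construct $\mathcal{W}_{j}$ directly inside $\textrm{span}(\mathbf{U}_{j}^{b})$ and trace the dependency structure carefully), but the underlying argument is the same.
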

\begin{proof}
Please See Appendix \ref{sub:Proof-of-reduced}.
\end{proof}

Note that Lemma \ref{With-the-transceiver} simplifies the original
Problem 1 by eliminating the IA constraints on all links from Tx $i$
to Rx $j$, where $i\in\Omega_{j}^{I}\cup\Omega_{j}^{II}\Omega_{j}^{III}$,
$\forall j$. Furthermore, the solutions obtained will automatically
satisfy the partial CSI knowledge constraints.

\subsection{IA Precoders / Decorrelators Design }

In this section, we will derive solutions for Problem 1 by solving
Problem 2. Note that Problem 2 is a bi-convex problem w.r.t. $\{\mathbf{V}_{i}^{a}\}$
and $\{\mathbf{U}_{j}^{b}\}$. As a result, we shall apply \emph{alternating
optimization} techniques \cite{gomadam2011distributed,santamaria2010maximum}
to obtain a local optimal solution. The algorithm details are outlined
below:
\begin{equation}
\min_{\{\mathbf{U}_{j}^{b}\in\mathbb{C}^{M_{j}^{e}\times d_{j}^{0}}:(\mathbf{U}_{j}^{b})^{\dagger}\mathbf{U}_{j}^{b}=\mathbf{I}_{d_{j}^{0}},\forall j\}}I=\underset{j,i:\, i\in\Omega_{j}^{IV}}{\sum}\textrm{tr}\left(\left(\mathbf{U}_{j}^{b}\mathbf{G}_{ji}\mathbf{V}_{i}^{a}\right)\left(\mathbf{U}_{j}^{b}\mathbf{G}_{ji}\mathbf{V}_{i}^{a}\right)^{\dagger}\right).\label{eq:solve_u}
\end{equation}
\begin{equation}
\min_{\{\mathbf{V}_{i}^{a}\in\mathbb{C}^{N_{i}^{e}\times d_{i}}:(\mathbf{V}_{j}^{a})^{\dagger}\mathbf{V}_{j}^{a}=\mathbf{I}_{d_{j}},\forall i\}}I=\underset{j,i:\, i\in\Omega_{j}^{IV}}{\sum}\textrm{tr}\left(\left(\mathbf{U}_{j}^{b}\mathbf{G}_{ji}\mathbf{V}_{i}^{a}\right)\left(\mathbf{U}_{j}^{b}\mathbf{G}_{ji}\mathbf{V}_{i}^{a}\right)^{\dagger}\right).\label{eq:solve_v}
\end{equation}

\textit{Algorithm 1 }\textit{\emph{(}}\emph{Iterative Precoder / Decorrelator
Design}\textit{):}
\begin{itemize}
\item \textbf{Step 1} \textbf{(}\emph{Initialization}\textbf{)}: Randomly
initialize $\mathbf{V}_{i}^{a}\in\mathbb{C}^{N_{i}^{e}\times d_{i}}$,
$\mathbf{U}_{i}^{b}\in\mathbb{\mathbb{C}}^{M_{i}^{e}\times d_{i}^{0}}$,
$\forall i$. Initialize $\mathbf{G}_{ji}=\mathbf{R}_{j}(\mathbf{S}_{j}^{r})^{\dagger}\mathbf{H}_{ji}^{s}\mathbf{S}_{i}^{t}$,
$\forall j,i\in\Omega_{j}^{IV}$, where $\mathbf{S}_{i}^{t}$ is given
in (\ref{eq:s_t_result}).
\item \textbf{Step 2 (}\emph{Update}\textbf{ $\{\mathbf{U}_{j}^{b}\}$ }\emph{by
Solving (\ref{eq:solve_u})}\textbf{)}: Update $\mathbf{U}_{j}^{b}=v_{d_{j}^{0}}(\mathbf{E}_{j})$
where $\mathbf{E}_{j}=\underset{i:i\in\Omega_{j}^{IV}}{\sum}\left(\mathbf{G}_{ji}\mathbf{V}_{i}^{a}\right)\left(\mathbf{G}_{ji}\mathbf{V}_{i}^{a}\right)^{\dagger}$,
$\forall j$.
\item \textbf{Step 3} \textbf{(}\emph{Update}\textbf{ $\{\mathbf{V}_{i}^{a}\}$}\emph{
by Solving (\ref{eq:solve_v})}\textbf{)}: Update $\mathbf{V}_{i}^{a}=v_{d_{i}}(\mathbf{T}_{i})$,
where $\mathbf{T}_{i}=\underset{j:\, i\in\Omega_{j}^{IV}}{\sum}\left(\mathbf{G}_{ji}^{\dagger}\mathbf{U}_{j}^{b}\right)\left(\mathbf{G}_{ji}^{\dagger}\mathbf{U}_{j}^{b}\right)^{\dagger}$,
$\forall i$.
\item Repeat \textbf{Step 2} and \textbf{Step 3} until convergence. From
the converged solution of $\{\mathbf{V}_{i}^{a},\mathbf{U}_{j}^{b}\}$
above, we can get the overall solution $\{\mathbf{V}_{i},\mathbf{U}_{j}\}$
of Problem 1 using (\ref{eq:solution_calculation}). \hfill \IEEEQED
\end{itemize}

\mysubnote{the convergence of the Algorithm 1}
\begin{remrk}
[Characterization of Algorithm 1]Note Algorithm 1 can automatically
adapt to the \emph{partial CSI feedback} constraint in Problem \ref{IA-design-based-general}
for a given $\mathcal{L}$. On the other hand, Algorithm 1 converges
almost surely because the total interference leakage $I$ in (\ref{eq:solve_u})
and (\ref{eq:solve_v}) is non-negative and it is monotonically decreasing
in the alternating updates of Step 2 and Step 3. Note that if the
total interference leakage $I$ at the converged local optimal point
is 0, then the solution is a feasible solution of Problem 1. 
\end{remrk}

\section{Feasibility Conditions and Feedback Profile Design}

In this section, we study the feasibility conditions of Problem \ref{IA-design-based-general}
under a feedback profile $\mathcal{\mathcal{L}}$ and the precoder
/ decorrelator structure in (\ref{eq:transceiver_structure}). Based
on the feasibility conditions, a low complexity greedy algorithm is
further proposed to derive a feedback profile $\mathcal{L}$ for a
\emph{given} DoF requirements in the interference network. The derived
feedback profile can achieve substantial savings in the total CSI
feedback dimension required to achieve the given DoFs.

\subsection{Feasibility Conditions under Feedback Profile $\mathcal{L}$}

In this section, we extend the results in \emph{Algebraic Geometry}
\cite{razaviyayn2011degrees,ruan2012feasibility} and establish new
feasibility conditions for IA under reduced CSI feedback dimension.
We first have the following property regarding Problem \ref{CIA-under-transceiver}.

\mysubnote{formally give the TIP property}
\begin{lemma}
[Transformation Invariant Property]\label{Rotational-Invariant-PropertyThe}The
invertible matrices $\{\mathbf{R}_{j}\}$ do not affect the feasibility
conditions of Problem \ref{CIA-under-transceiver}, i.e., Problem
\ref{CIA-under-transceiver} is feasible when $\mathbf{R}_{j}=\mathbf{I}$,
$\forall j$ iff it is feasible under any invertible matrix $\mathbf{R}_{j}$,
$\forall j$.\end{lemma}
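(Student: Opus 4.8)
The plan is to show that pre-multiplying each effective channel $\mathbf{G}_{ji}$ by an invertible matrix $\mathbf{R}_j$ (equivalently, replacing $\mathbf{G}_{ji}$ by $\mathbf{R}_j \mathbf{G}_{ji}$ with $\mathbf{R}_j = \mathbf{I}$ versus arbitrary invertible $\mathbf{R}_j$) induces a bijection between the solution sets of the two versions of Problem~\ref{CIA-under-transceiver}. First I would observe that the rank constraints \eqref{eq:problem_2_rank_conditioin} involve only $\mathbf{V}_i^a$ and $\mathbf{U}_j^b$ individually, not $\mathbf{G}_{ji}$, so they are untouched by any change in $\{\mathbf{R}_j\}$. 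The whole argument therefore reduces to the interference-nulling equations \eqref{eq:equivalent_problem2}.

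Next I would exhibit the explicit change of variables. Suppose $\{\mathbf{V}_i^a, \mathbf{U}_j^b\}$ solves Problem~\ref{CIA-under-transceiver} with $\mathbf{R}_j = \mathbf{I}$ for all $j$, i.e. $(\mathbf{U}_j^b)^\dagger \mathbf{H}_{ji}^{s,\text{eff}} \mathbf{V}_i^a = \mathbf{0}$ where I write $\mathbf{H}_{ji}^{s,\text{eff}} = (\mathbf{S}_j^r)^\dagger \mathbf{H}_{ji}^s \mathbf{S}_i^t$ for the $\mathbf{R}_j=\mathbf{I}$ effective channel. Given a target tuple of invertible matrices $\{\mathbf{R}_j\}$, define $\tilde{\mathbf{U}}_j^b = (\mathbf{R}_j^{-1})^\dagger \mathbf{U}_j^b = (\mathbf{R}_j^\dagger)^{-1}\mathbf{U}_j^b$ and keep $\tilde{\mathbf{V}}_i^a = \mathbf{V}_i^a$. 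Then for the effective channel $\mathbf{G}_{ji} = \mathbf{R}_j \mathbf{H}_{ji}^{s,\text{eff}}$ one computes
\begin{equation}
(\tilde{\mathbf{U}}_j^b)^\dagger \mathbf{G}_{ji} \tilde{\mathbf{V}}_i^a = (\mathbf{U}_j^b)^\dagger \mathbf{R}_j^{-1} \mathbf{R}_j \mathbf{H}_{ji}^{s,\text{eff}} \mathbf{V}_i^a = (\mathbf{U}_j^b)^\dagger \mathbf{H}_{ji}^{s,\text{eff}} \mathbf{V}_i^a = \mathbf{0},
\end{equation}
so $\{\tilde{\mathbf{V}}_i^a, \tilde{\mathbf{U}}_j^b\}$ satisfies \eqref{eq:equivalent_problem2}. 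Since $\mathbf{R}_j$ is invertible, $\mathrm{rank}(\tilde{\mathbf{U}}_j^b) = \mathrm{rank}(\mathbf{U}_j^b) = d_j^0$, so the rank constraint \eqref{eq:problem_2_rank_conditioin} is preserved; hence $\{\tilde{\mathbf{V}}_i^a, \tilde{\mathbf{U}}_j^b\}$ solves the $\{\mathbf{R}_j\}$-version. The converse direction is identical with the roles reversed (multiply by $\mathbf{R}_j^\dagger$ instead of $(\mathbf{R}_j^\dagger)^{-1}$), which establishes the ``iff''.

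I would close by remarking that this also shows the specific choice of $\mathbf{R}_j$ made in \eqref{eq:H_hat_information} is immaterial to feasibility: whatever invertible $\mathbf{R}_j$ is produced when extracting the row-space basis $\tilde{\mathbf{H}}_j$ from the fed-back subspace, Problem~\ref{CIA-under-transceiver} is feasible for it precisely when it is feasible for $\mathbf{R}_j = \mathbf{I}$. I do not anticipate a genuine obstacle here — the statement is essentially a one-line linear-algebra observation that left-multiplication by invertibles is a symmetry of bilinear nulling constraints; the only mild care needed is bookkeeping the conjugate transpose (so that the substitution lands on $\mathbf{U}_j^b$ via $\mathbf{R}_j^{-\dagger}$ rather than $\mathbf{R}_j^{-1}$) and confirming that the rank/non-degeneracy conditions, which do not see $\mathbf{R}_j$ except through an invertible left factor, are automatically inherited.
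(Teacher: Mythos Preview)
Your proposal is correct and follows essentially the same argument as the paper: both define $\hat{\mathbf{U}}_j^b=(\mathbf{R}_j^{-1})^{\dagger}\mathbf{U}_j^b$, verify that this carries solutions of the $\mathbf{R}_j=\mathbf{I}$ problem to solutions of the general-$\mathbf{R}_j$ problem, and declare the converse trivial. Your write-up is in fact slightly more complete, since you explicitly check the rank condition \eqref{eq:problem_2_rank_conditioin} is preserved under the invertible left factor, which the paper leaves implicit.
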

\begin{proof}
Please see Appendix \ref{sub:Proof-of-Lemma-RIP} for details.\end{proof}
\begin{remrk}
[Role of Lemma \ref{Rotational-Invariant-PropertyThe}]From Lemma
\ref{Rotational-Invariant-PropertyThe}, we further conclude that
it is sufficient to feedback the row space of the concatenated CSI
matrices, i.e., $\textrm{span}\left(\left[\begin{array}{ccc}
\cdots & (\mathbf{S}_{j}^{r})^{\dagger}\mathbf{H}_{ji}^{s} & \cdots\end{array}\right]_{i\in\Omega_{j}^{IV}}^{T}\right)$ at each Rx in order to satisfy the IA constraints in (\ref{eq:problem_2_rank_conditioin})-(\ref{eq:equivalent_problem2}).
This is illustrated in the IV-th feedback strategy in our proposed
feedback structure in (\ref{eq:s_j^r_infor}). In general, the feedback
dimension will be reduced by adopting feedback strategy IV, while
the feasibility of Problem \ref{CIA-under-transceiver} is not affected
(i.e., the same as feeding back $\left[\begin{array}{ccc}
\cdots & (\mathbf{S}_{j}^{r})^{\dagger}\mathbf{H}_{ji}^{s} & \cdots\end{array}\right]_{i\in\Omega_{j}^{IV}}$). 
\end{remrk}

Since $\{\mathbf{R}_{j}\}$ do not affect the problem feasibility,
we investigate the feasibility conditions under $\mathbf{R}_{j}=\mathbf{I}$,
$\forall j$ without loss of generality. The necessary feasibility
conditions are established as follows.

\mysubnote{necessary condition}
\begin{thm}
[Necessary Feasibility Conditions]\label{Feasibility-conditions}Given
a feedback profile $\mathcal{L}$ and the precoder / decorrelator
structure in (\ref{eq:transceiver_structure}), if Problem \ref{IA-design-based-general}
is feasible, then the following three conditions must be satisfied:
1) $N_{i}^{e}\geq d_{i}$, $\forall i$; 2) $M_{j}^{e}\geq d_{j}^{0}$,
$\forall j$; 3) Denote $V_{i}=d_{i}(N_{i}^{e}-d_{i})$, $\forall i$,
$U_{j}=d_{j}^{0}(M_{j}^{e}-d_{j}^{0})$, $\forall j$; $C_{ji}=d_{j}^{0}d_{i}$,
and $V_{i}$, $U_{j}$ and $C_{ji}$ satisfy
\begin{equation}
\sum_{j:\,(j,i)\in\Omega_{sub}}U_{i}+\sum_{i:\,(j,i)\in\Omega_{sub}}V_{i}\geq\sum_{j,i:\,(j,i)\in\Omega_{sub}}C_{ji},\quad\forall\Omega_{sub}\subseteq\{(j,i):\,\forall j,i\in\Omega_{j}^{IV}\}.\label{eq:eq:nece_condition-1}
\end{equation}
\end{thm}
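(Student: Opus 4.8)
The plan is to reduce the partial-feedback feasibility question to an \emph{ordinary} IA feasibility problem with generic i.i.d.\ channels, and then run an algebraic-geometry dimension count. By Lemma~\ref{With-the-transceiver}, Problem~\ref{IA-design-based-general} is almost surely feasible iff Problem~\ref{CIA-under-transceiver} is, and by Lemma~\ref{Rotational-Invariant-PropertyThe} we may take $\mathbf{R}_j=\mathbf{I}$, so that $\mathbf{G}_{ji}=(\mathbf{S}_j^r)^\dagger\mathbf{H}_{ji}^s\mathbf{S}_i^t$. Conditions~1 and~2 are then immediate: a $p\times q$ matrix has rank at most $\min(p,q)$, so the rank requirements $\textrm{rank}(\mathbf{V}_i^a)=d_i$ and $\textrm{rank}(\mathbf{U}_j^b)=d_j^0$ in (\ref{eq:problem_2_rank_conditioin}) force $N_i^e\ge d_i$ and $M_j^e\ge d_j^0$.

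For condition~3, the first and most delicate step is to show that the effective channels $\{\mathbf{G}_{ji}:i\in\Omega_j^{IV},\ \forall j\}$ form a family of mutually independent matrices, each with i.i.d.\ $\mathcal{CN}(0,1)$ entries. The lever is the disjointness of $\Omega_j^{I},\Omega_j^{II},\Omega_j^{III},\Omega_j^{IV}$ in Definition~\ref{Partitioning-of-Cross}: for $i\in\Omega_j^{IV}$, the matrix $\mathbf{H}_{ji}$ enters neither $\mathbf{S}_j^r$ (which by (\ref{eq:s_j^r_infor}) depends only on $\{\mathbf{H}_{jk}^s:k\in\Omega_j^{II}\}$, and $i\notin\Omega_j^{II}$) nor $\mathbf{S}_i^t$ (which by (\ref{eq:s_t_result}) depends only on the channels $\mathbf{H}_{ki}$ and projectors $\mathbf{S}_k^r$ over those $k$ with $i\in\Omega_k^{III}$, and $i\in\Omega_j^{IV}$ rules out $k=j$). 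Hence conditioning on all channels other than $\{\mathbf{H}_{ji}:i\in\Omega_j^{IV},\ \forall j\}$ freezes every $\mathbf{S}_j^r$ and $\mathbf{S}_i^t$ to a deterministic matrix with orthonormal columns; since left/right multiplication of an i.i.d.\ complex Gaussian matrix by matrices with orthonormal columns again produces an i.i.d.\ complex Gaussian matrix of the reduced size, the $\mathbf{G}_{ji}$ are conditionally---hence unconditionally---mutually independent and standard complex Gaussian on $\mathbb{C}^{M_j^e\times N_i^e}$. Consequently, almost surely, Problem~\ref{CIA-under-transceiver} becomes an ordinary IA interference-nulling feasibility problem over the bipartite interference graph $\Omega^{IV}=\{(j,i):i\in\Omega_j^{IV}\}$ with generic channels, with transmit dimension/stream count $(N_i^e,d_i)$ at Tx $i$ and receive dimension/stream count $(M_j^e,d_j^0)$ at Rx $j$.

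The second step is the dimension-counting argument of \cite{razaviyayn2011degrees,ruan2012feasibility}, applied to an arbitrary subset $\Omega_{sub}\subseteq\Omega^{IV}$; let $\mathcal{I}$ and $\mathcal{J}$ be the transmitters and receivers occurring in $\Omega_{sub}$. After passing to the transpose form of the nulling equations so that they become polynomial, form the \emph{interference variety} $\mathcal{V}_{sub}$ inside $\big(\prod_{i\in\mathcal{I}}\mathbb{G}(d_i,N_i^e)\times\prod_{j\in\mathcal{J}}\mathbb{G}(d_j^0,M_j^e)\big)\times\prod_{(j,i)\in\Omega_{sub}}\mathbb{C}^{M_j^e\times N_i^e}$ cut out by $(\mathbf{U}_j^b)^\dagger\mathbf{G}_{ji}\mathbf{V}_i^a=\mathbf{0}$, $(j,i)\in\Omega_{sub}$. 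Over a fixed point of the Grassmannian factor the equations are linear in the $\mathbf{G}_{ji}$ and, because $\mathbf{V}_i^a$ and $\mathbf{U}_j^b$ have full column rank, surjective, so they impose exactly $C_{ji}=d_j^0 d_i$ independent scalar conditions per link; hence $\mathcal{V}_{sub}$ is the total space of a vector bundle over the irreducible product of Grassmannians, of dimension $\big(\sum_{i\in\mathcal{I}}V_i+\sum_{j\in\mathcal{J}}U_j\big)+\sum_{(j,i)\in\Omega_{sub}}M_j^e N_i^e-\sum_{(j,i)\in\Omega_{sub}}C_{ji}$. If Problem~\ref{CIA-under-transceiver} is feasible almost surely, then so is the $\Omega_{sub}$-subsystem (restrict any solution to the coordinates indexed by $\mathcal{I}$ and $\mathcal{J}$), now almost surely under the absolutely continuous, full-support marginal law of $\{\mathbf{G}_{ji}:(j,i)\in\Omega_{sub}\}$ obtained above; since the projection of $\mathcal{V}_{sub}$ onto its channel factor is a closed morphism (the Grassmannian factor is complete), its image is then the whole channel space, forcing $\dim\mathcal{V}_{sub}\ge\sum_{(j,i)\in\Omega_{sub}}M_j^e N_i^e$. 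Cancelling $\sum_{(j,i)\in\Omega_{sub}}M_j^e N_i^e$ leaves $\sum_{i\in\mathcal{I}}V_i+\sum_{j\in\mathcal{J}}U_j\ge\sum_{(j,i)\in\Omega_{sub}}C_{ji}$, i.e., (\ref{eq:eq:nece_condition-1}).

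I expect the main obstacle to be the first step: justifying that the partial-feedback effective channels $\{\mathbf{G}_{ji}\}$ may be treated as generic, mutually independent i.i.d.\ channels, even though each is a correlated, \emph{nonlinear} function of the true channels (through the null-space projectors $\mathbf{S}_j^r,\mathbf{S}_i^t$). The disjointness of the four partition classes in Definition~\ref{Partitioning-of-Cross} is precisely the structural property that makes this reduction valid; once it is in place the dimension count is routine algebraic geometry (Chevalley's theorem, irreducibility of products of Grassmannians, completeness of the Grassmannian), and one may equally well invoke the necessary feasibility conditions of \cite{razaviyayn2011degrees,ruan2012feasibility} directly on the reduced problem.
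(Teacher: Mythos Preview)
Your proposal is correct and follows essentially the same approach as the paper: reduce to Problem~\ref{CIA-under-transceiver} via Lemmas~\ref{With-the-transceiver} and~\ref{Rotational-Invariant-PropertyThe}, read off conditions~1 and~2 from the rank constraints, and obtain condition~3 by the variable/constraint dimension count of \cite{razaviyayn2011degrees,ruan2012feasibility} applied to each subsystem $\Omega_{sub}$. The paper's proof is much terser---it simply asserts the counts $U_j,V_i,C_{ji}$ and cites \cite{razaviyayn2011degrees}---whereas you additionally justify (correctly, via the disjointness in Definition~\ref{Partitioning-of-Cross}) that the effective channels $\mathbf{G}_{ji}$ are generic i.i.d.\ Gaussian, a hypothesis the paper takes for granted when invoking \cite{razaviyayn2011degrees}.
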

\begin{proof}
Please see Appendix \ref{sub:Proof-of-Theorem_1} for details.
\end{proof}

Next, we try to study the sufficient feasibility conditions for Problem
1. To ensure that $\textrm{rank}(\mathbf{V}_{i}^{a})=d_{i}$ and $\textrm{rank}(\mathbf{U}_{j}^{b})=d_{j}^{0}$
in Problem \ref{CIA-under-transceiver}, it is sufficient to assume
that the first $d_{i}\times d_{i}$, $d_{j}^{0}\times d_{j}^{0}$
submatrix of $\mathbf{V}_{i}^{a}$, $\mathbf{U}_{j}^{b}$, denoted
by $\mathbf{V}_{i}^{(1)}$, $\mathbf{U}_{j}^{(1)}$, are invertible
$\forall i,\: j$. Under this assumption, we further denote $\mathbf{\tilde{V}}_{i}\in\mathbb{C}^{(N_{i}^{e}-d_{i})\times d_{i}}$,
$\mathbf{\tilde{U}}_{j}\in\mathbb{C}^{(M_{j}^{e}-d_{j}^{0})\times d_{j}^{0}}$,
and the four submatrices of $\mathbf{G}_{ji}$ in (\ref{eq:equivalent_problem2}),
i.e., $\mathbf{G}_{ji}^{(1)}\in\mathbb{C}^{d_{j}^{0}\times d_{i}}$,
$\mathbf{G}_{ji}^{(2)}\in\mathbb{C}^{(M_{j}^{e}-d_{j}^{0})\times d_{i}}$,
$\mathbf{G}_{ji}^{(3)}\in\mathbb{C}^{d_{j}^{0}\times(N_{i}^{e}-d_{i})}$,
$\mathbf{G}_{ji}^{(4)}\in\mathbb{C}^{(M_{j}^{e}-d_{j}^{0})\times(N_{i}^{e}-d_{i})}$,
as follows:
\[
\left[\begin{array}{c}
\mathbf{I}_{d_{i}}\\
\mathbf{\tilde{V}}_{i}
\end{array}\right]=\mathbf{V}_{i}^{a}\left(\mathbf{V}_{i}^{(1)}\right)^{-1},\left[\begin{array}{c}
\mathbf{I}_{d_{j}^{0}}\\
\mathbf{\tilde{U}}_{j}
\end{array}\right]=\mathbf{U}_{j}^{b}\left(\mathbf{U}_{j}^{(1)}\right)^{-1},\mathbf{G}_{ji}=\left[\begin{array}{cc}
\mathbf{G}_{ji}^{(1)} & \mathbf{G}_{ji}^{(3)}\\
\mathbf{G}_{ji}^{(2)} & \mathbf{G}_{ji}^{(4)}
\end{array}\right].
\]
Hence, equation (\ref{eq:equivalent_problem2}) becomes
\begin{equation}
\mathbf{G}_{ji}^{(1)}+\mathbf{\tilde{U}}_{j}^{\dagger}\mathbf{G}_{ji}^{(2)}+\mathbf{G}_{ji}^{(3)}\mathbf{\tilde{V}}_{i}+\mathbf{\tilde{U}}_{j}^{\dagger}\mathbf{G}_{ji}^{(4)}\mathbf{\tilde{V}}_{i}=\mathbf{0},\;\forall j,i\in\Omega_{j}^{IV}.\label{eq:polynomial_solution}
\end{equation}

Based on the equation sets in (\ref{eq:polynomial_solution}), the
sufficient feasibility conditions are established as follows.

\mysubnote{sufficient condition}
\begin{thm}
[Sufficient Feasibility Conditions]\label{Sufficient-Feasibility-Condition}Given
a feedback profile $\mathcal{L}$ and the precoder / decorrelator
structure structure in (\ref{eq:transceiver_structure}), if $N_{i}^{e}\geq d_{i}$,
$M_{i}^{e}\geq d_{i}^{0}$ $\forall i$, and the row vectors of all
the matrices $\{\mathbf{X}_{ji}:\forall j,i\in\Omega_{j}^{IV}\}$
are linearly independent, then Problem \ref{IA-design-based-general}
is feasible  almost surely, where 
\begin{equation}
\begin{array}{ccccccc}
\underset{d_{i}d_{j}^{0}\times\overline{M}}{\underbrace{\mathbf{X}_{ji}}} & = & \underset{d_{i}d_{j}^{0}\times m_{ji}}{[\underbrace{\mathbf{0}}} & (\mathbf{G}_{ji}^{(2)})^{T}\otimes\mathbf{I}_{d_{j}^{0}} & \underset{d_{i}d_{j}^{0}\times n_{ji}}{\underbrace{\mathbf{0}}} & \mathbf{I}_{d_{i}}\otimes\mathbf{G}_{ji}^{(3)} & \underset{d_{i}d_{j}^{0}\times k_{ji}}{\underbrace{\mathbf{0}}]}\end{array}\label{eq:X_determinant}
\end{equation}
and $\overline{M}=\sum_{i=1}^{K}\left(d_{i}^{0}(M_{i}^{e}-d_{i}^{0})+d_{i}(N_{i}^{e}-d_{i})\right)$,
$m_{ji}=\sum_{p=1}^{j-1}d_{p}^{0}(M_{p}^{e}-d_{p}^{0})$, $n_{ji}=\sum_{p=j+1}^{K}d_{p}^{0}(M_{p}^{e}-d_{p}^{0})+\sum_{q=1}^{i-1}d_{q}(N_{q}^{e}-d_{q})$,
$k_{ji}=\sum_{q=i+1}^{K}d_{q}(N_{q}^{e}-d_{q})$.

Moreover, under a given feedback profile $\mathcal{L}$, if the matrices
$\{\mathbf{X}_{ji}\}$ under a random channel realization $\{\mathbf{H}_{ji}:\forall j,i\}$
have linearly independent row vectors, Problem 1 is feasible for all
channel realizations almost surely.\end{thm}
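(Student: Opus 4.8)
The plan is to collapse the statement, via the equivalences already in hand, to the solvability of one polynomial system, and then run a dominant‑map argument of the type used in \cite{razaviyayn2011degrees,ruan2012feasibility}; the new ingredient is the base point at which the Jacobian is evaluated. Concretely, I would first invoke Lemma~\ref{With-the-transceiver} and Lemma~\ref{Rotational-Invariant-PropertyThe} so that, under $N_i^e\ge d_i$ and $M_i^e\ge d_i^0$, it is enough to produce for a generic channel a pair $\{\mathbf{\tilde{V}}_i,\mathbf{\tilde{U}}_j\}$ solving (\ref{eq:polynomial_solution}) with $\mathbf{R}_j=\mathbf{I}$: the matrices $\mathbf{V}_i^a=\left[\begin{smallmatrix}\mathbf{I}_{d_i}\\ \mathbf{\tilde{V}}_i\end{smallmatrix}\right]$, $\mathbf{U}_j^b=\left[\begin{smallmatrix}\mathbf{I}_{d_j^0}\\ \mathbf{\tilde{U}}_j\end{smallmatrix}\right]$ automatically meet the rank conditions (\ref{eq:problem_2_rank_conditioin}) and, by Lemma~\ref{With-the-transceiver}, lift to a solution of Problem~\ref{IA-design-based-general}. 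Following \cite{razaviyayn2011degrees,ruan2012feasibility}, I would then treat the entries of $\{\mathbf{\tilde{U}}_j^\dagger\}$ as independent complex variables $\{\mathbf{W}_j\}$, turning (\ref{eq:polynomial_solution}) into an honest complex polynomial system $\mathbf{F}(\mathbf{W},\mathbf{\tilde{V}};\mathbf{G})=\mathbf{0}$ parametrized by $\mathbf{G}=\{\mathbf{G}_{ji}\}$, and note that a generic channel yields a generic $\mathbf{G}$: for generic channels $\mathbf{S}_j^r,\mathbf{S}_i^t$ have full column rank, so $\mathbf{H}_{ji}^s\mapsto\mathbf{G}_{ji}=(\mathbf{S}_j^r)^\dagger\mathbf{H}_{ji}^s\mathbf{S}_i^t$ (for $i\in\Omega_j^{IV}$) is linear and surjective.

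\emph{Identifying the Jacobian.} The next step is to verify that $\mathbf{X}$ is exactly the Jacobian of $\mathbf{F}$ in the unknowns at the origin. Differentiating at $\mathbf{W}=\mathbf{0}$, $\mathbf{\tilde{V}}=\mathbf{0}$, the $(j,i)$ equation‑block contributes $\textrm{vec}\!\big(\delta\mathbf{W}_j\,\mathbf{G}_{ji}^{(2)}+\mathbf{G}_{ji}^{(3)}\,\delta\mathbf{\tilde{V}}_i\big)=\big((\mathbf{G}_{ji}^{(2)})^{T}\!\otimes\mathbf{I}_{d_j^0}\big)\textrm{vec}(\delta\mathbf{W}_j)+\big(\mathbf{I}_{d_i}\!\otimes\mathbf{G}_{ji}^{(3)}\big)\textrm{vec}(\delta\mathbf{\tilde{V}}_i)$, and ordering the unknowns as $\textrm{vec}(\mathbf{W}_1),\dots,\textrm{vec}(\mathbf{W}_K),\textrm{vec}(\mathbf{\tilde{V}}_1),\dots,\textrm{vec}(\mathbf{\tilde{V}}_K)$ reproduces precisely the zero‑block widths $m_{ji}$, $n_{ji}$, $k_{ji}$ appearing in (\ref{eq:X_determinant}). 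Hence $\partial_{(\mathbf{W},\mathbf{\tilde{V}})}\mathbf{F}\big|_{\mathbf{0}}$ is the vertical stack of the $\mathbf{X}_{ji}$, so the hypothesis that the rows of all $\mathbf{X}_{ji}$ are linearly independent says exactly that this Jacobian is surjective; moreover it depends only on $\{\mathbf{G}_{ji}^{(2)},\mathbf{G}_{ji}^{(3)}\}$, and its full‑row‑rankness is a Zariski‑open condition on $\mathbf{G}$, nonempty by assumption.

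\emph{Dominant map.} The argument then runs as follows. The catch is that $(\mathbf{W},\mathbf{\tilde{V}})=(\mathbf{0},\mathbf{0})$ is not a zero of $\mathbf{F}(\cdot;\mathbf{G})$ for generic $\mathbf{G}$, since there $\mathbf{F}=\{\mathbf{G}_{ji}^{(1)}\}$; so I would perturb the \emph{parameter} instead of the unknowns, choosing a base value $\mathbf{G}_0$ with all $\mathbf{G}_{ji}^{(1)}=\mathbf{0}$ but $\{\mathbf{G}_{ji}^{(2)},\mathbf{G}_{ji}^{(3)}\}$ generic (possible since those blocks are free and the rank condition ignores $\mathbf{G}^{(1)}$). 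Then $(\mathbf{0},\mathbf{0};\mathbf{G}_0)$ lies on the incidence variety $\mathcal{Z}=\{(\mathbf{W},\mathbf{\tilde{V}};\mathbf{G}):\mathbf{F}=\mathbf{0}\}$ and, by the previous step, $\partial_{(\mathbf{W},\mathbf{\tilde{V}})}\mathbf{F}$ is surjective there; the implicit function theorem supplies an analytic section $\mathbf{G}\mapsto(\mathbf{W}(\mathbf{G}),\mathbf{\tilde{V}}(\mathbf{G}))$ of $\mathcal{Z}$ over a Euclidean neighborhood of $\mathbf{G}_0$. Thus the projection $\pi:\mathcal{Z}\to\{\mathbf{G}\}$ has image containing a nonempty open set; since $\pi(\mathcal{Z})$ is constructible (Chevalley) and the ambient $\mathbf{G}$‑space is irreducible, $\pi(\mathcal{Z})$ contains a Zariski‑open dense set $\mathcal{U}$, so (\ref{eq:polynomial_solution}) is solvable for every $\mathbf{G}\in\mathcal{U}$. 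By the reduction, a generic channel lands in $\mathcal{U}$, giving feasibility of Problem~\ref{IA-design-based-general} almost surely; the ``moreover'' statement is then immediate, since assuming the $\{\mathbf{X}_{ji}\}$ rows are independent for one realization is exactly what makes the Zariski‑open rank locus nonempty, and feasibility for all channels outside a measure‑zero set is the conclusion just obtained.

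\emph{Main obstacle.} I expect the delicate point to be the maneuver in the last step: the hypothesis controls the Jacobian only at the \emph{non‑solution} base point $(\mathbf{W},\mathbf{\tilde{V}})=(\mathbf{0},\mathbf{0})$, so one must deform the parameter ($\mathbf{G}^{(1)}\to\mathbf{0}$) to make that base point feasible while preserving full row rank of the Jacobian, and then upgrade the resulting Euclidean‑open piece of $\pi(\mathcal{Z})$ to a set whose complement has measure zero via constructibility and irreducibility. The Kronecker‑product bookkeeping matching the block structure of $\mathbf{X}_{ji}$, and the $\{\mathbf{H}_{ji}\}\mapsto\mathbf{G}$ surjectivity used to pass from generic $\mathbf{G}$ back to generic channels, should be routine.
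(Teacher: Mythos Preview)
Your proposal is correct and follows essentially the same route as the paper: identify $\mathbf{X}_{ji}$ as the linear part (equivalently, the Jacobian at the origin) of the vectorized polynomial system (\ref{eq:polynomial_solution}), and then invoke the dominant-morphism machinery of \cite{razaviyayn2011degrees,ruan2012feasibility} to conclude solvability for generic parameters. The paper's proof is terse and simply cites those references for the key step, whereas you have unpacked the argument (deforming $\mathbf{G}^{(1)}\to\mathbf{0}$ to obtain a base point, applying the implicit function theorem, and upgrading via Chevalley/irreducibility); your added remark on the surjectivity of $\mathbf{H}_{ji}^s\mapsto\mathbf{G}_{ji}$ is a detail the paper leaves implicit.
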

\begin{proof}
Please See Appendix \ref{sub:Proof-of-Theorem-sufficient}.
\end{proof}

\mysubnote{sufficient  necessary condition in the divisible case}
\begin{cor}
[Feasibility Conditions in Divisible Cases]\label{Feasibility-Conditions-divisible}When
$d_{i}=d$, $\forall i$, and all $M_{i}^{s}$, $N_{i}^{s}$ are divisible
by the data stream, i.e., $d\mid M_{i}^{s}$, $d\mid N_{i}^{s}$,
$\forall\, i$, the three conditions in Theorem \ref{Feasibility-conditions}
are also sufficient.\end{cor}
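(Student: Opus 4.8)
The plan is to read the corollary off the two preceding theorems: Theorem~\ref{Feasibility-conditions} already gives that the three conditions are necessary, so only sufficiency under the divisibility hypothesis has to be shown. Conditions~1) ($N_i^e\ge d_i$) and 2) ($M_j^e\ge d_j^0$) are precisely the first two hypotheses of Theorem~\ref{Sufficient-Feasibility-Condition}, so the task collapses to proving that, in the divisible case, the subset inequality \eqref{eq:eq:nece_condition-1} forces the row vectors of $\{\mathbf{X}_{ji}:\forall j,\,i\in\Omega_j^{IV}\}$ of \eqref{eq:X_determinant} to be linearly independent for a generic channel; Theorem~\ref{Sufficient-Feasibility-Condition} then gives feasibility of Problem~\ref{IA-design-based-general}. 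A preliminary observation is that divisibility propagates: $d\mid M_j^e$ by \eqref{eq:notaion_M_j^e}, $d\mid N_i^e$ by \eqref{eq:N_e_define}, and $d\mid d_j^0$ since $d_j^0=d_j+\sum_{i\in\Omega_j^{I}}d_i=d(1+|\Omega_j^{I}|)$, hence $U_j=d_j^0(M_j^e-d_j^0)$, $V_i=d_i(N_i^e-d_i)$ and $C_{ji}=d_j^0 d_i$ are all multiples of $d^{2}$.

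The pivotal reformulation is to recognise \eqref{eq:eq:nece_condition-1} as a max-flow condition. Consider the bipartite transportation network in which a source ships $C_{ji}$ units to each ``equation node'' $(j,i)$, infinite-capacity arcs go from $(j,i)$ to the two ``variable nodes'' $\mathbf{\tilde{U}}_j$ and $\mathbf{\tilde{V}}_i$, and arcs of capacities $U_j$ and $V_i$ go from the variable nodes to the sink. An $s$--$t$ cut is described by the set $\Omega_{sub}$ of equation nodes kept on the source side, of value $\sum_{(j,i)\notin\Omega_{sub}}C_{ji}+\sum_{j\in\Omega_{sub}}U_j+\sum_{i\in\Omega_{sub}}V_i$; by max-flow--min-cut, a flow saturating all the $C_{ji}$ exists iff this value is $\ge\sum_{j,i}C_{ji}$ for every $\Omega_{sub}$, which is exactly \eqref{eq:eq:nece_condition-1}. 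After dividing all capacities by $d^{2}$ and applying the integrality theorem, such a flow can be taken with every arc value $c_{ji}^{U},c_{ji}^{V}$ a multiple of $d^{2}$, obeying $c_{ji}^{U}+c_{ji}^{V}=C_{ji}$, $\sum_i c_{ji}^{U}\le U_j$ and $\sum_j c_{ji}^{V}\le V_i$.

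This flow prescribes how to apportion, for each pair $(j,i)$, the $d_j^0\times d_i$ block of equations in \eqref{eq:polynomial_solution}: $c_{ji}^{U}$ scalar equations are to be cleared using entries of $\mathbf{\tilde{U}}_j$ (acting through the block $(\mathbf{G}_{ji}^{(2)})^{T}\otimes\mathbf{I}_{d_j^0}$ of $\mathbf{X}_{ji}$) and the remaining $c_{ji}^{V}$ using entries of $\mathbf{\tilde{V}}_i$ (through $\mathbf{I}_{d_i}\otimes\mathbf{G}_{ji}^{(3)}$). Since $d\mid d_i$, $d\mid d_j^0$ and every $c_{ji}^{U},c_{ji}^{V}$ is a multiple of $d^{2}$, the apportionment can be arranged along whole $d\times d$ sub-blocks of the grid (a routine spreading/flow argument matches these $d\times d$ sub-blocks to disjoint $d\times d$ sub-blocks of $\mathbf{\tilde{U}}_j$ and $\mathbf{\tilde{V}}_i$, using that their total sizes equal $U_j$ and $V_i$). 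The variable sub-blocks selected for distinct $(j,i)$ lie in disjoint columns of $\mathbf{X}=[\mathbf{X}_{ji}]$, and the corresponding full-size square submatrix of $\mathbf{X}$ factors, up to sign, over the pairs $(j,i)$ into minors involving only the generic matrix $\mathbf{G}_{ji}=\mathbf{R}_j(\mathbf{S}_j^r)^{\dagger}\mathbf{H}_{ji}^{s}\mathbf{S}_i^{t}$. Each such factor is a nonzero polynomial in the entries of the corresponding cross-link $\mathbf{H}_{ji}$, and distinct cross-links are statistically independent, so the product does not vanish identically; hence $\mathbf{X}$ has full row rank almost surely and Theorem~\ref{Sufficient-Feasibility-Condition} applies. (Genericity together with conditions~1)--2) also keeps $\mathrm{rank}(\mathbf{V}_i^{a})=d_i$ and $\mathrm{rank}(\mathbf{U}_j^{b})=d_j^0$.)

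I expect the last step to be the crux and the main obstacle: turning the combinatorial apportionment into an honest non-vanishing minor is delicate because the Kronecker-product form of $\mathbf{X}_{ji}$ repeats channel entries, so an arbitrary system of distinct representatives may produce cancelling terms. The divisibility hypothesis is exactly what resolves this---it lets each block's equations be split along whole $d\times d$ sub-blocks, keeping the chosen minor block-structured---which is why the corollary is restricted to the divisible case (in general, ``proper'' conditions~1)--3) need not imply feasibility), and the algebraic-geometry machinery of \cite{razaviyayn2011degrees,ruan2012feasibility} behind Theorem~\ref{Sufficient-Feasibility-Condition} is what makes the genericity arguments rigorous.
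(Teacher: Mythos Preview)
Your proposal is correct and follows essentially the same route as the paper's proof: the paper likewise reduces sufficiency to a max-flow/assignment argument (its Lemmas~\ref{Sufficient-CIA-feasibility}--\ref{Equivalent-feasibility-problem}), uses the divisibility hypothesis to force the assignment to respect $d$-block structure, and then defers the verification that the resulting square submatrix of $\mathbf{X}$ is generically nonsingular to the machinery of \cite{ruan2012feasibility}. The only cosmetic difference is granularity: the paper builds its flow network at the scalar-equation level (nodes $c_{jipq}$ indexed by $(j,i,p,q)$, with an explicit symmetry constraint $f(v_{i1},c_{jip1})=\cdots=f(v_{id},c_{jipd})$ playing the role of your divisibility-by-$d^{2}$ trick), whereas you work at the block level $(j,i)$ and invoke integrality after scaling; both produce the same $d\times d$ sub-block apportionment, and both leave the ``honest non-vanishing minor'' step to the cited algebraic-geometry lemmas, exactly as you anticipated.
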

\begin{proof}
Please see Appendix \ref{sub:Proof-of-Corollary-divisible}.
\end{proof}

\begin{remrk}
[Backward Compatibility with Previous Results]If the row space of
the concatenated channel matrices of all cross links are fed back,
i.e., $M_{i}^{s}=M_{i}$, $N_{i}^{s}=N_{i}$, $\forall i$, $\Omega_{j}^{I}=\Omega_{j}^{II}=\Omega_{j}^{III}=\emptyset$,
$\Omega_{j}^{IV}=\{1,\cdots j-1,j+1,\cdots K\}$ $\forall j$, then
$M_{i}^{e}=M_{i}$, $d_{j}^{0}=0$, $N_{j}^{e}=N_{j}$, $\forall i,\, j$,
and Corollary \ref{Feasibility-Conditions-divisible} reduces to the
results (Theorem 2) in \cite{razaviyayn2011degrees}.
\end{remrk}

\subsection{CSI Feedback Profile Design $\mathcal{L}$}

In this section, we focus on the design of the feedback profile to
reduce the total CSI feedback cost (feedback dimension) required to
achieve a given DoF requirement of the $K$ data streams $\{d_{1},d_{2},\cdots d_{K}\}$
in the MIMO interference networks. Specifically, we would like to
find a feedback profile $\mathcal{L}$ that satisfies the following
constraints:

\mysubnote{minimize the Sum feedback cost under given antenna resource and DoF budget.}
\begin{problem}
[Feedback Profile Design $\mathcal{L}$]\label{Minimize-Feedback-Cost}
\begin{eqnarray}
 &  & \bigcup_{m\in\{I,II,III,IV\}}\Omega_{j}^{m}=\{1,\cdots K\}/\{j\},\,\forall j,\label{eq:Feedback_topology_structure_condition1}\\
 &  & \Omega_{j}^{m}\bigcap\Omega_{j}^{n}=\emptyset,\forall m\neq n,\,\forall j,\label{eq:feedback_topology_structure_condition2}\\
 &  & N_{i}^{e}\geq d_{i},\, M_{i}^{e}\geq d_{i}^{0},\,\forall i,\label{eq:feasibility_condtion1}\\
 &  & \{\mathbf{X}_{ji}\}\textrm{ have linearly independent row vectors,}\,\forall j,i\in\Omega_{j}^{IV}\label{eq:feasibility_condition2}
\end{eqnarray}
where $\{\mathbf{X}_{ji}\}$ are given in Theorem \ref{Sufficient-Feasibility-Condition}
and $N_{i}^{e}$, $M_{i}^{e}$ and $d_{i}^{0}$ are given in (\ref{eq:notaion_M_j^e}),
(\ref{eq:N_e_define}) and (\ref{eq:problem_2_rank_conditioin}) respectively.\hfill \IEEEQED

Note that constraints (\ref{eq:Feedback_topology_structure_condition1}),
(\ref{eq:feedback_topology_structure_condition2}) come from the feedback
profile structure in Assumption \ref{Structure-of-Feedback}, constraints
(\ref{eq:feasibility_condtion1}) and (\ref{eq:feasibility_condition2})
come from the the feasibility conditions of Problem \ref{IA-design-based-general}
(Theorem \ref{Sufficient-Feasibility-Condition}). 

A feedback profile that satisfies the above constraints is called
a \emph{feasible feedback profile}. Ideally, we would like to find
a feasible feedback profile $\mathcal{L}$ that induces a small feedback
dimension. However, the design of feedback profile $\mathcal{L}$
is highly non-trivial due to the combinatorial nature, and doing exhaustive
search has exponential complexity in $\mathcal{O}\left((N)^{2K}4^{K(K-1)}(KN)^{3}\right)$
(see equation (\ref{eq:original_complexity})). In the following,
we propose a low complexity greedy algorithm to derive a feasible
feedback profile $\mathcal{L}$. We show in Section V and VI that
the associated feedback cost is quite small compared with conventional
state-of-the-art baselines.
\end{problem}

The details of the greedy algorithm are summarized as follows:

\mysubnote{give the details of the greedy algorithm}

\textit{Algorithm 2 (Greedy Feedback Profile Design $\mathcal{L}$):}
\begin{itemize}
\item \textbf{Step 1} \textit{(Initialization and Antenna Pruning)}: Initialize
$t=1$, $\Omega_{j}^{I}=\Omega_{j}^{II}=\Omega_{j}^{III}=\emptyset$,
$\Omega_{j}^{IV}=\{1,\cdots j-1,j+1,\cdots K\}$, $\forall j$, $M_{i}^{s}=\min(M_{i},\sum_{i}d_{i})$,
$N_{i}^{s}=\min(N_{i},\sum_{i}d_{i})$, $\forall i$ in $\mathcal{L}(t)$.
\item \textbf{Step 2} \textit{(Priority Computation)}: Compute the \emph{priority
$p(s)$} of update strategy $s$ on current $\mathcal{L}(t)$ as 
\begin{equation}
p(s)=\left(I_{\underset{\Delta V(s)\leq0\}}{\{\Delta D(s)\geq0}}\Delta D(s)\left(-\Delta V(s)+1\right)\alpha+I_{\underset{\Delta V(s)>0\}}{\{\Delta D(s)\geq0}}\frac{\Delta D(s)}{\Delta V(s)}\right),\forall s\in\mathcal{P}\left(\mathcal{L}(t)\right)\label{eq:priority}
\end{equation}
where $\mathcal{P}\left(\mathcal{L}(t)\right)$ is the space of the
update strategies on current $\mathcal{L}(t)$ and is given by 
\begin{equation}
\mathcal{P}(\mathcal{L}(t))=\left\{ \left\{ S^{I}(j,i),S^{II}(j,i),S^{III}(j,i):\forall j,i\in\Omega_{j}^{IV}\right\} ,\{S^{IV}(i),S^{V}(i),S^{VI}(i):\forall i\}\right\} ,\label{eq:update_strategy_space}
\end{equation}
$S^{I}(\cdot)\cdots S^{VI}(\cdot)$ are different types of update
operations described in Table \ref{fig:Description-of-different}
(note that all these update strategies could potentially reduce the
feedback dimension); $I_{\{\cdot\}}$ denotes the indicator function
and $\Delta D(s)$ denotes the dimension reduction via $s$, i.e.,
\begin{equation}
\Delta D(s)=D\left(\mathcal{L}(t+1\mid s)\right)-D\left(\mathcal{L}(t)\right),\label{eq:dimension_reduction}
\end{equation}
where $D(\mathcal{L})$ is in (\ref{eq:sum_feedback_dimension_expression})
and $\mathcal{L}(t+1\mid s)$ is the feedback profile obtained by
updating $\mathcal{L}(t)$ with $s$; $\Delta V(s)$ is the consumed
free variables with strategy $s$, i.e., 
\begin{equation}
\Delta V(s)=V\left(\mathcal{L}(t+1\mid s)\right)-V\left(\mathcal{L}(t)\right),\label{eq:variable_reduce}
\end{equation}
where $V(\mathcal{L})=\sum_{j}U_{j}+\sum_{i}V_{i}-\sum_{j,i:\,\Omega_{j}^{IV}=1}C_{ji}$,
and $U_{j}$, $V_{i}$, $C_{ji}$ are in Theorem \ref{Feasibility-conditions};
$\alpha$ is chosen to be $\alpha=K(\sum d_{i})^{2}$.
\item \textbf{Step 3} (\textit{Priority Sorting}\emph{):} Sort $\left\{ p(s):\forall s\in\mathcal{P}\left(\mathcal{L}(t)\right)\right\} $
in descending order, i.e., $\mathcal{P}\left(\mathcal{L}(t)\right)\triangleq\{s_{1},\cdots s_{J}\}$,
$J=|\mathcal{P}\left(\mathcal{L}(t)\right)|$ and $p(s_{1})\geq p(s_{2})\geq\cdots\geq p(s_{J})$.
Initialize the index $k=1$.
\item \textbf{Step 4} \emph{(Greedy Update on $\mathcal{L}$):}

\begin{itemize}
\item \textit{A (Update Trial and Stopping Condition): }If $k\leq|\mathcal{P}\left(\mathcal{L}(t)\right)|$
and $p(s_{k})\geq0$, then choose $s=s_{k}$ and update $\mathcal{L}$
as: $\mathcal{L}(t)\overset{s}{\rightarrow}\mathcal{L}(t+1\mid s)$;
Else, exit the algorithm.
\item \textit{B (Feasibility Checking): }If $\mathcal{L}(t+1\mid s)$ is
feasible by Theorem \ref{Sufficient-Feasibility-Condition}, then
set $t=t+1$, $\mathcal{L}(t+1)=\mathcal{L}(t+1\mid s)$ and go to
\textbf{Step 2}; Else, set $k=k+1$ and go to \textbf{Step }4 \emph{A}.\hfill \IEEEQED
\end{itemize}
\end{itemize}

\begin{table}
\begin{centering}
\begin{tabular}{|c|c|}
\hline 
update strategy & update $\mathcal{L}(t)$ as\tabularnewline
\hline 
\hline 
$S^{I}(j,i)$ & $\Omega_{j}^{I}=\Omega_{j}^{I}\bigcup\{i\},\Omega_{j}^{IV}=\Omega_{j}^{IV}/\{i\}$\tabularnewline
\hline 
$S^{II}(j,i)$ & $\Omega_{j}^{II}=\Omega_{j}^{II}\bigcup\{i\},\Omega_{j}^{IV}=\Omega_{j}^{IV}/\{i\}$\tabularnewline
\hline 
$S^{III}(j,i)$ & $\Omega_{j}^{III}=\Omega_{j}^{III}\bigcup\{i\},\Omega_{j}^{IV}=\Omega_{j}^{IV}/\{i\}$\tabularnewline
\hline 
$S^{IV}(i)$ & $N_{i}^{s}=d_{i},\Omega_{j}^{IV}=\emptyset,\Omega_{j}^{II}=\Omega_{j}^{II}\bigcup\Omega_{j}^{IV}$\tabularnewline
\hline 
$S^{V}(i)$ & $M_{i}^{s}=M_{i}^{s}-1$\tabularnewline
\hline 
$S^{VI}(i)$ & $N_{i}^{s}=N_{i}^{s}-1$\tabularnewline
\hline 
\end{tabular}
\par\end{centering}

\caption{\label{fig:Description-of-different}Description of different update
strategies on $\mathcal{L}$}

\end{table}

\begin{remrk}
[Design Motivation of Algorithm 2]Given current feedback profile
$\mathcal{L}(t)$, different strategies in $\mathcal{P}(\mathcal{L}(t))$
in (\ref{eq:update_strategy_space}) have different features. For
instance, they reduce the feedback dimension differently (i.e., $\Delta D(s)$)
and consume different numbers of free variables (i.e., $\Delta V(s)$).
Intuitively, a strategy with a larger ratio of dimension reduction
versus variables consumption (i.e., $\frac{\Delta D(s)}{\Delta V(s)}$)
should have higher priority, as in this way, we may achieve more aggregate
feedback dimension reduction. On the other hand, strategies with $\Delta D(s)>0$,
$\Delta V(s)\leq0$ are given relatively higher priority, as illustrated
in (\ref{eq:priority}) (due to the factor $\alpha$ in (\ref{eq:priority})),
because these strategies reduce the feedback dimension (i.e., $\Delta D(s)>0$)
while they do not consume the free variables (i.e., $\Delta V(s)\leq0$). 
\end{remrk}

\mysubnote{analyze the complexity.}
\begin{remrk}
[Complexity of Greedy Feedback Profile Design]We compare the complexity
of exhaustive search and the proposed design algorithm as follows.
For simplicity, assume that $M_{i}=N_{i}=N$, $\forall i$. The overall
complexity of exhaustive search is 
\begin{equation}
\mathcal{O}\left(\underset{(c_{1})}{\underbrace{(N)^{2K}}}\underset{(c_{2})}{\underbrace{4^{K(K-1)}}}\underset{(c_{3})}{\underbrace{(KN)^{3}}}\right)\label{eq:original_complexity}
\end{equation}
where $(c_{1})$ is from the combinations of submatrix sizes, i.e.,
$M_{i}^{s},N_{i}^{s}\in\{1,\cdots\cdot,N\},\forall i$, $(c_{2})$
is from the combinations of cross link partitions, i.e., $i\in\Omega_{j}^{I}$,
$\Omega_{j}^{II}$, $\Omega_{j}^{III}$ or $\Omega_{j}^{IV}$, $\forall i\neq j$,
$\forall j$ and $(c_{3})$ is from the feasibility checking (See
Appendix \ref{sub:Complexity-of-Feasibility}). The overall worst-case
complexity of Algorithm 2 is 
\begin{equation}
\mathcal{O}\left(\underset{(c_{4})}{\underbrace{K^{2}}}\underset{(c_{5})}{\underbrace{\left(K^{2}+KN\right)}}\underset{(c_{6})}{\underbrace{(KN)^{3}}}\right)\label{eq:proposed_complexity}
\end{equation}
where $(c_{4})$ is from each update on $\mathcal{L}$ having at most
$3K(K-1)+3K$ trials ($|\mathcal{P}\left(\mathcal{L}(t)\right)|\leq3K(K-1)+3K$),
$(c_{5})$ is from there being less than $K(K-1)+2KN$ updates on
$\mathcal{L}$, and $(c_{6})$ is from the feasibility checking (See
Appendix \ref{sub:Complexity-of-Feasibility}).
\end{remrk}

\mysubnote{performance of the Algorithm under a special topology}

\section{Tradeoff Analysis of DoF and CSI Feedback dimension}

In this section, we analyze the tradeoff between the DoF, antenna
resource and feedback cost for MIMO interference network under the
proposed feedback profile design. To obtain some simple insights,
we shall give a closed form expression on the tradeoff for a symmetric
MIMO interference network.
\begin{thm}
[Performance-Cost Tradeoff on a Symmetric MIMO Interference Network]\label{Performance-on-special}Consider
a $K$-user MIMO interference network where $d_{i}=d$, $M_{i}=M$,
$N_{i}=N$, $\forall i$ and $M$, $N$, $d$ satisfy $2\mid M$,
$M\leq2K+1$, $N=\frac{1}{2}KM$, $d\mid M$. The tradeoff between
the data stream $d$ and the feedback dimension $D_{p}$ is summarized
below:

\vspace{2bp}

\begin{tabular}{|c|c||c|}
\hline 
Data Stream $d$ & Feedback dimension $D_{p}$ & Feedback Profile $\mathcal{L}$\tabularnewline
\hline 
\hline 
$d\leq\frac{M}{K},d\mid M$ & 0 & $\begin{array}{c}
N_{i}^{s}=d,\, M_{i}^{s}=M,\forall i;\\
\Omega_{j}^{I}=\{1,\cdots j-1,j+1,\cdots K\};\\
\Omega_{j}^{II}=\Omega_{j}^{III}=\Omega_{j}^{IV}=\emptyset,\forall j.
\end{array}$\tabularnewline
\hline 
$\begin{array}{c}
d=\frac{M}{K-\kappa},\\
1\leq\kappa\leq K-2
\end{array}$ & $\begin{array}{c}
\left((K+1)d^{2}-Md\right)\\
\cdot(K-1)^{2}
\end{array}$ & $\begin{array}{c}
N_{i}^{s}=Kd,\; M_{i}^{s}=M,\forall i\in\{1,\cdots\kappa+1\};\\
N_{i}^{s}=d,\; M_{i}^{s}=M-d,\forall i\in\{\kappa+2,\cdots K\};\\
\Omega_{j}^{I}=\Omega_{j}^{IV}=\emptyset;\\
\Omega_{j}^{II}=\{\kappa+2,\cdots K\}/\{j\};\\
\Omega_{j}^{III}=\{1\cdots\kappa+1\}/\{j\}.
\end{array}$\tabularnewline
\hline 
\end{tabular}

\vspace{3bp}
\end{thm}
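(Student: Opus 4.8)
The plan is to prove the two rows of the table independently: in each case I would exhibit the tabulated feedback profile $\mathcal{L}$, verify that it satisfies the hypotheses of Corollary~\ref{Feasibility-Conditions-divisible} so that Problem~\ref{IA-design-based-general} is feasible almost surely, and then evaluate $D(\mathcal{L})$ directly from \eqref{eq:sum_feedback_dimension_expression}. The observation that makes the whole argument routine is that, under both profiles, the \emph{effective} dimensions collapse, and since $\Omega_j^{IV}=\emptyset$ in both rows, the polymatroid inequality \eqref{eq:eq:nece_condition-1} is vacuous; thus, once $N_i^e\ge d_i$ and $M_j^e\ge d_j^0$ are checked, Corollary~\ref{Feasibility-Conditions-divisible} delivers feasibility with no algebraic-geometry work.

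For the first row ($d\le M/K$, $d\mid M$): since $\Omega_j^{II}=\Omega_j^{III}=\Omega_j^{IV}=\emptyset$, equations \eqref{eq:notaion_M_j^e}, \eqref{eq:N_e_define} and the definition $d_j^0=d_j+\sum_{i\in\Omega_j^{I}}d_i$ from Problem~\ref{CIA-under-transceiver} give $M_j^e=M_j^s=M$, $N_i^e=N_i^s=d$ and $d_j^0=d+(K-1)d=Kd$ for all $i,j$; also each submatrix size is admissible ($M_i^s=M_i$, $N_i^s=d\le\frac{1}{2}KM=N_i$). Divisibility ($d_i=d$, $d\mid M_i^s$, $d\mid N_i^s$) holds, so Corollary~\ref{Feasibility-Conditions-divisible} applies; its three conditions reduce to $d\ge d$, $M\ge Kd$ (which is exactly the hypothesis $d\le M/K$), and $0\ge 0$, hence Problem~\ref{IA-design-based-general} is feasible almost surely. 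Since $\Omega_j^{III}=\Omega_j^{IV}=\emptyset$, both sums in \eqref{eq:sum_feedback_dimension_expression} are empty and $D(\mathcal{L})=0$.

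For the second row ($d=M/(K-\kappa)$, $1\le\kappa\le K-2$), write $\mathcal A=\{1,\dots,\kappa+1\}$ and $\mathcal B=\{\kappa+2,\dots,K\}$, so $|\mathcal A|=\kappa+1$, $|\mathcal B|=K-\kappa-1$, and note $d\mid M$ because $(K-\kappa)d=M$. I would first check admissibility: each $M_i^s\in\{M,M-d\}$ is at most $M_i=M$, and each $N_i^s\in\{Kd,d\}$ is at most $N_i=\frac{1}{2}KM$, the bound $Kd\le\frac{1}{2}KM$ being $d\le M/2$, which holds since $\kappa\le K-2$. Next the effective dimensions: because $\Omega_j^{II}=\mathcal B\setminus\{j\}$ contains only indices with $N^s=d$, \eqref{eq:notaion_M_j^e} gives $M_j^e=M_j^s-|\mathcal B\setminus\{j\}|\,d$, which a short case split ($j\in\mathcal A$: $M_j^s=M$, $|\mathcal B\setminus\{j\}|=K-\kappa-1$; $j\in\mathcal B$: $M_j^s=M-d$, $|\mathcal B\setminus\{j\}|=K-\kappa-2$) turns into $M_j^e=M-(K-\kappa-1)d=d$ in both cases, using $M=(K-\kappa)d$. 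By \eqref{eq:N_e_define}, $\{j:i\in\Omega_j^{III}\}$ equals $\{j:j\ne i\}$ when $i\in\mathcal A$ and is empty when $i\in\mathcal B$, so $N_i^e=Kd-(K-1)d=d$ for $i\in\mathcal A$ and $N_i^e=d$ for $i\in\mathcal B$; and $d_j^0=d_j=d$ since $\Omega_j^{I}=\emptyset$. Divisibility again holds ($M_i^s\in\{M,M-d\}$, $N_i^s\in\{Kd,d\}$, $d_i=d$ all divisible by $d$), so Corollary~\ref{Feasibility-Conditions-divisible} applies, its three conditions reading $d\ge d$, $d\ge d$, $0\ge 0$; hence Problem~\ref{IA-design-based-general} is feasible almost surely. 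For the cost, the first term of \eqref{eq:sum_feedback_dimension_expression} vanishes ($\Omega_j^{IV}=\emptyset$) and the second equals $\sum_j\sum_{i\in\Omega_j^{III}}M_j^e(N_i^s-M_j^e)=\bigl(\sum_j|\Omega_j^{III}|\bigr)\,d(Kd-d)$, since every $i\in\Omega_j^{III}\subseteq\mathcal A$ has $N_i^s=Kd$ and $M_j^e=d$. Counting $\sum_j|\Omega_j^{III}|=\sum_{j\in\mathcal A}\kappa+\sum_{j\in\mathcal B}(\kappa+1)=(\kappa+1)\kappa+(K-\kappa-1)(\kappa+1)=(\kappa+1)(K-1)$ yields $D(\mathcal{L})=(\kappa+1)(K-1)^2 d^2$; substituting $M=(K-\kappa)d$ into $(K+1)d^2-Md$ gives $(\kappa+1)d^2$, so $D(\mathcal{L})=\bigl((K+1)d^2-Md\bigr)(K-1)^2$, matching the table.

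I do not expect a genuine obstacle: the statement is an achievability claim, and beyond citing Corollary~\ref{Feasibility-Conditions-divisible} the argument is pure bookkeeping. The one place that needs care is the case split on $j\in\mathcal A$ versus $j\in\mathcal B$ in the second row when computing $M_j^e$ and $\sum_j|\Omega_j^{III}|$; it is precisely the choices $N=\frac{1}{2}KM$ and $M=(K-\kappa)d$ that force the effective dimensions to collapse to $d$ while keeping every submatrix admissible, so the substance is verifying that these cancellations really occur as claimed. If one also wished to assert that $D_p$ is \emph{minimal} among feasible profiles of the assumed structure, one would additionally invoke the necessary conditions of Theorem~\ref{Feasibility-conditions} to lower-bound $D(\mathcal{L})$; but as stated the theorem only asserts achievability of the tabulated $(d,D_p)$ pairs via the exhibited $\mathcal{L}$.
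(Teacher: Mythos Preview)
Your approach is correct and, in fact, cleaner than the paper's own argument for the statement as it is written. The paper does not verify feasibility of the tabulated $\mathcal{L}$ directly; instead its Appendix~\ref{sub:Proof-of-Theorem-performance} traces the execution of the greedy Algorithm~2, arguing step by step (initial antenna pruning; then $S^{I}$ and $S^{V}$ in the $d\le M/K$ case; a four-stage sequence of $S^{IV}$, $S^{III}$, $S^{V}$, $S^{III}$ updates in the $d=M/(K-\kappa)$ case) that the greedy procedure terminates at precisely the profile shown in the table, and only then plugs into \eqref{eq:sum_feedback_dimension_expression}. What the paper's route buys is the stronger conclusion that Algorithm~2 actually \emph{discovers} this profile---which is the point of Section~V, since Theorem~\ref{Performance-on-special} is meant to quantify the performance of the proposed greedy design. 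Your route, by contrast, bypasses Algorithm~2 entirely and appeals only to Corollary~\ref{Feasibility-Conditions-divisible}; this is shorter and fully rigorous (no appeal to priority orderings or tie-breaking in the greedy updates), but it establishes only the achievability of the $(d,D_p)$ pairs by the exhibited $\mathcal{L}$, not that the greedy algorithm reaches them. Your bookkeeping (the case split on $j\in\mathcal A$ versus $j\in\mathcal B$, the collapse $M_j^e=N_i^e=d$, and the count $\sum_j|\Omega_j^{III}|=(\kappa+1)(K-1)$) is all correct.
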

\begin{proof}
See Appendix \ref{sub:Proof-of-Theorem-performance}.
\end{proof}

\mysubnote{Remark on the comparison}
\begin{remrk}
[Interpretation of Theorem \ref{Performance-on-special}]\label{Interprestation-of-Theorem}From
the tradeoff expression between the DoF, antenna resource, and feedback
cost in Theorem \ref{Performance-on-special}, we can obtain the following
insights:\end{remrk}
\begin{itemize}
\item \textbf{Feedback Dimension versus DoF }$d$: Since $D_{p}=\left((K+1)d^{2}-Md\right)(K-1)^{2}$,
$\frac{M}{K-1}\leq d\leq\frac{M}{2},d\mid M$, we observe that given
the number of antennas $M$, there is a quadratic increase of $D_{p}$
w.r.t. $d$. Hence the feedback cost tends to increase faster as $d$
becomes larger. 
\item \textbf{Feedback Dimension versus Number of Antennas} $M$: Since
 $D_{p}=\left((K+1)d^{2}-Md\right)(K-1)^{2}$, $\frac{M}{K-1}\leq d\leq\frac{M}{2},d\mid M$,
we observe that given a DoF requirement $d$, the feedback cost tends
to decrease as the number of antennas $M$ increases. This is because
as $M$ gets larger, we obtain larger freedom for the feedback profile
design, and hence a better feedback profile could be obtained.
\end{itemize}

We further compare the result derived in Theorem \ref{Performance-on-special}
with a common baseline, which feedbacks the full channel direction
of all the cross links in the symmetric MIMO interference network
\cite{krishnamachari2009interference,thukral2009interference}. In
this baseline, the feedback function is given by $F_{j}(\mathcal{H}_{j})=\left(\begin{array}{ccc}
\cdots, & \{a\mathbf{H}_{ji}:a\in\mathbb{C}\}, & \cdots\end{array}\right)_{\forall i\neq j}$, $\forall j$, and the feedback dimension is given by%
\footnote{Note that under full channel direction feedback, the maximum achievable
data stream $d$ is given by $d=\textrm{min}\left(\left\lfloor \frac{M+N}{K+1}\right\rfloor ,M,N\right)=\frac{M}{2}$
\cite{razaviyayn2011degrees,ruan2012feasibility}.%
} $D_{full}=K(K-1)(\frac{1}{2}KM^{2}-1)$, $1\leq d\leq\frac{M}{2}$,
$d\mid M$. Under the same DoF requirement $d$, the ratio of the
feedback dimension achieved by the proposed feedback profile and the
baseline is 
\[
\frac{D_{p}}{D_{full}}\leq\frac{d}{M},\,1\leq d\leq\frac{M}{2},d\mid M.
\]
Hence, the proposed feedback profile requires  a much lower feedback
dimension when $d\ll M$.

\section{Numerical Results}

In this section, we verify the performance of the proposed feedback-saving
scheme in MIMO interference networks through simulation. We consider
limited feedback with Grassmannian codebooks \cite{dai2008quantization}
to quantize the partial CSI $\{F_{j}\}$ at each Rx. The precoders
/ decorrelators are designed using the Algorithm 1 developed in Section
III-B. We consider $10^{4}$ i.i.d. Rayleigh fading channel realizations
and compare the performance of the proposed feedback scheme with the
following 3 baselines. 
\begin{itemize}
\item \textbf{Baseline 1} \emph{(Feedback Full CSI Direction }\cite{krishnamachari2009interference,thukral2009interference}\emph{)}:
Rxs quantize and feedback full CSI direction of all the cross links
using the Grassmannian codebooks \cite{krishnamachari2009interference,thukral2009interference}.
\item \textbf{Baseline 2}\emph{ (Feedback Truncated CSI} \cite{rezaee2012interference,rezaee2013csit}\emph{)}:
Rxs first truncate the part of the concatenated CSI that does not
affect classical IA feasibility \cite{rezaee2012interference,rezaee2013csit},
and then quantize and feedback the truncated CSI using the Grassmannian
codebooks. 
\item \textbf{Baseline 3} \emph{(Feedback Critical Amount of Truncated CSI)}:
Rxs first select the submatrices $\{\mathbf{H}_{ji}^{s}:\forall j,i,i\neq j\}$,
where $\mathbf{H}_{ji}^{s}=\left[\begin{array}{cc}
\mathbf{I}_{M_{j}^{s}} & \mathbf{0}\end{array}\right]\mathbf{H}_{ji}\left[\begin{array}{c}
\mathbf{I}_{N_{i}^{s}}\\
\mathbf{0}
\end{array}\right]$ and $\{M_{i}^{s},N_{i}^{s}\}$ are chosen to make the network tightly
IA feasible%
\footnote{Tightly IA feasible means that the IA feasible network would become
IA infeasible if we further reduce any of $\{M_{i}^{s},N_{i}^{s}\}$.%
}. Rxs then adopt the algorithm proposed in \cite{rezaee2012interference,rezaee2013csit}
to quantize and feedback the submatrices $\{\mathbf{H}_{ji}^{s}:\forall j,i,i\neq j\}$. 
\end{itemize}

In Fig. \ref{fig:DoF-versus-feedback} and Fig. \ref{fig:Feedback-dimensions-versus},
we consider a $K=4$, $[N_{1},\cdots N_{4}]=[5,4,4,3],$ $[M_{1},\cdots M_{4}]=[4,3,2,4]$,
$[d_{1},\cdots d_{4}]=[2,1,1,1]$ MIMO interference network. The obtained
sum feedback dimension for the proposed scheme, baseline 3, baseline
2 and baseline 1 are 38, 86, 111 and 144 respectively.

Fig. \ref{fig:DoF-versus-feedback} plots the network throughput versus
the sum limited feedback bits under transmit SNR 25 dB. The proposed
scheme outperforms all the baselines. This is because the proposed
scheme significantly reduces the CSI feedback dimension while preserving
the IA feasibility, and hence more feedback bits can be utilized to
reduce the quantization error per dimension. The dramatic performance
gain highlights the importance of optimizing the feedback dimension
in MIMO interference networks with limited feedback. 

\begin{figure}
\begin{centering}
\includegraphics[scale=0.6]{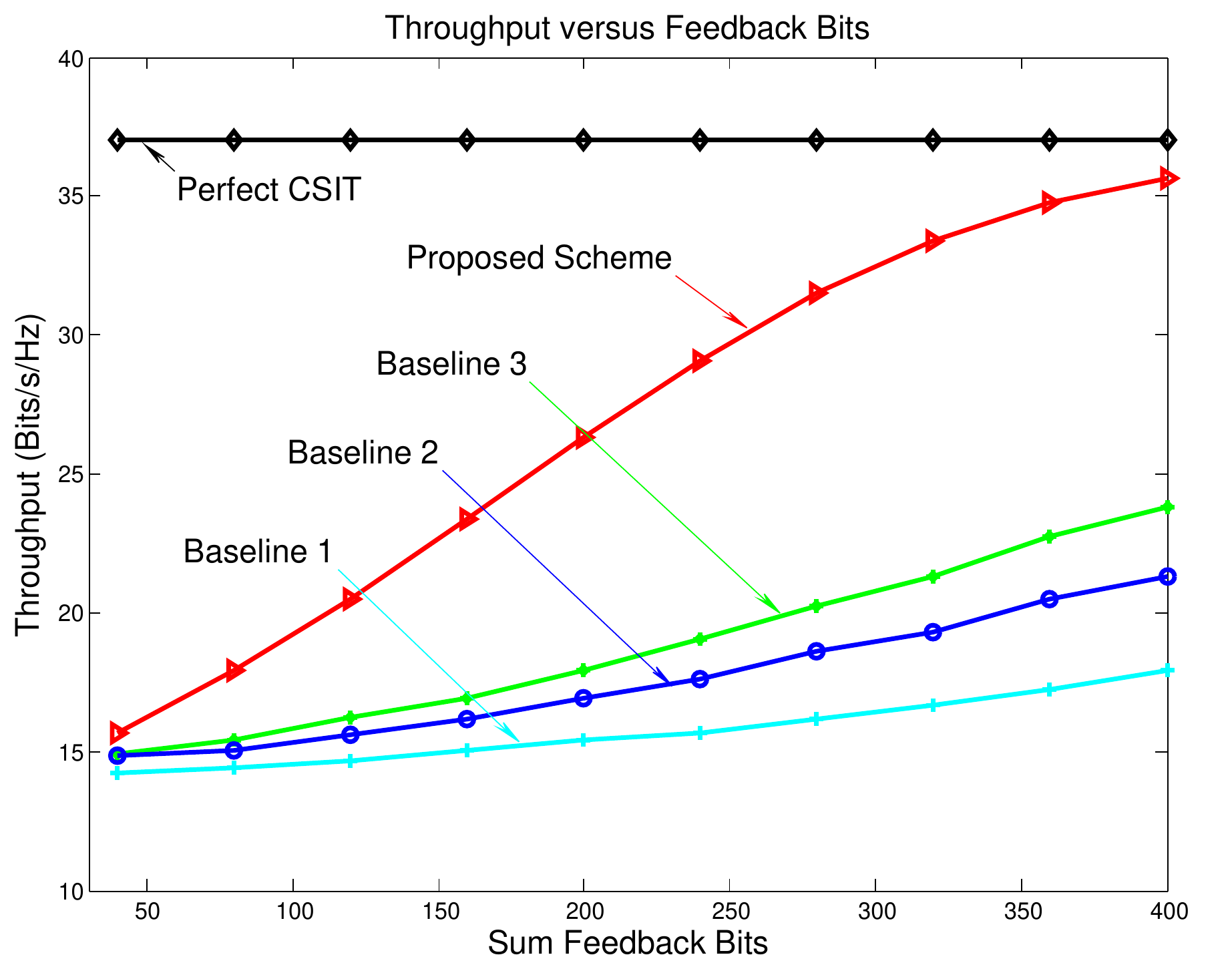}
\par\end{centering}

\caption{\label{fig:DoF-versus-feedback}Throughput versus sum feedback bits
under a $K=4$, $[N_{1},\cdots N_{4}]=[5,4,4,3],$ $[M_{1},\cdots M_{4}]=[4,3,2,4]$,
$[d_{1},\cdots d_{4}]=[2,1,1,1]$ MIMO interference network and the
average transmit SNR is 25 dB.}
\end{figure}

Fig. \ref{fig:Feedback-dimensions-versus} illustrates the network
throughput versus the transmit SNR under a total of $B=400$ feedback
bits. The proposed scheme achieves substantial throughput gain over
the baselines in a wide SNR region. The gain is larger at high SNR
because residual interference, which is the major performance bottleneck
in high SNR region, is significantly reduced by the proposed scheme. 

\begin{figure}
\begin{centering}
\includegraphics[scale=0.6]{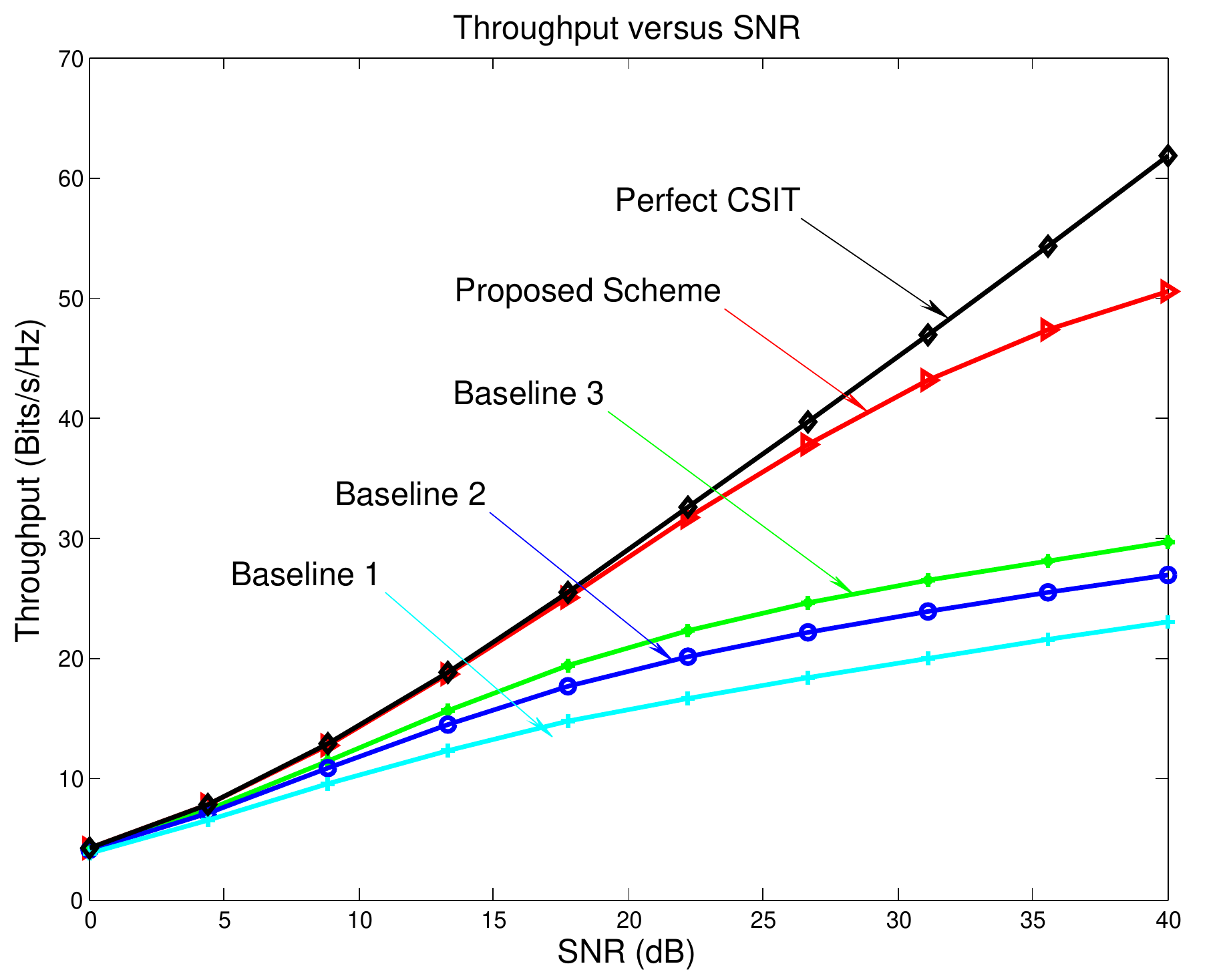}
\par\end{centering}

\caption{\label{fig:Feedback-dimensions-versus}Throughput versus SNR under
a $K=4$, $[N_{1},\cdots N_{4}]=[5,4,4,3],$ $[M_{1},\cdots M_{4}]=[4,3,2,4]$,
$[d_{1},\cdots d_{4}]=[2,1,1,1]$ MIMO interference network and the
sum feedback bit constraint is $400$.}
\end{figure}

\begin{table}
\begin{centering}
\begin{tabular}{|c|c|c|}
\hline 
Obtained Feedback Dimension & Fig. \ref{fig:DoF-versus-feedback}, \ref{fig:Feedback-dimensions-versus} & Fig. \ref{fig:Throughput-versus-sum-bits_new}, \ref{fig:Throughput-versus-SNR-new}\tabularnewline
\hline 
\hline 
Proposed Scheme & 38 & 20\tabularnewline
\hline 
Baseline 3 & 86 & 56\tabularnewline
\hline 
Baseline 2 & 111 & 72\tabularnewline
\hline 
Baseline 1 & 144 & 96\tabularnewline
\hline 
\end{tabular}
\par\end{centering}

\caption{Feedback dimension comparision}
\end{table}

In Fig. \ref{fig:Throughput-versus-sum-bits_new} and Fig. \ref{fig:Throughput-versus-SNR-new},
a $K=4$, $N_{i}=M_{i}=3,$ $d_{i}=1$, $\forall i$ MIMO interference
network is also simulated for performance comparison. The obtained
sum feedback dimension for the proposed scheme, baseline 3, baseline
2 and baseline 1 are 20, 56, 72 and 96 respectively. Fig. \ref{fig:Throughput-versus-sum-bits_new}
plots the network throughput versus the sum limited feedback bits
under transmit SNR 25 dB and Fig. \ref{fig:Throughput-versus-SNR-new}
illustrates the network throughput versus the transmit SNR under a
total of $B=200$ feedback bits. The proposed feedback scheme also
demonstrates significant performance advantages under this network
topology setup. 

\begin{figure}
\begin{centering}
\includegraphics[scale=0.6]{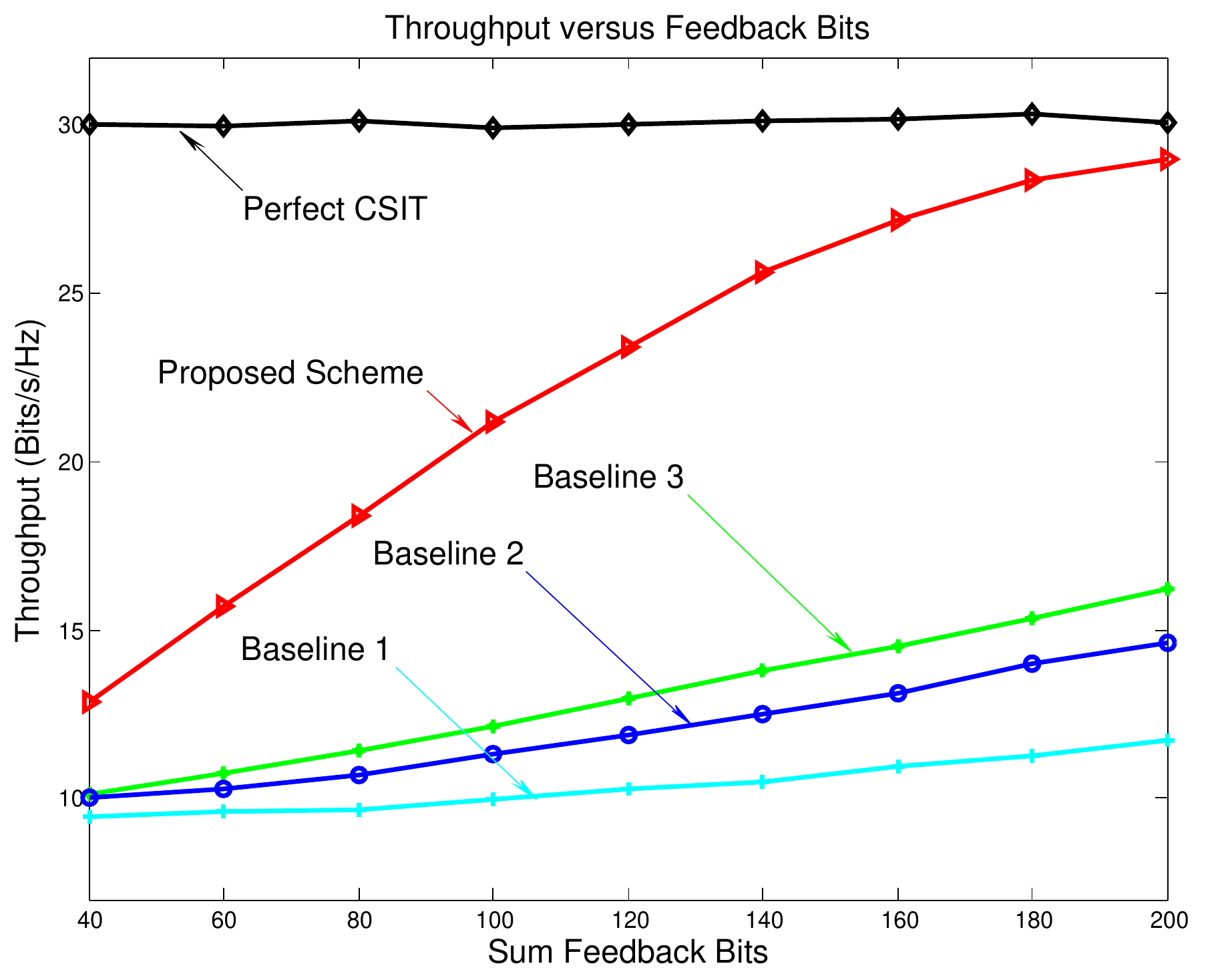}
\par\end{centering}

\caption{\label{fig:Throughput-versus-sum-bits_new}Throughput versus sum feedback
bits under a $K=4$, $N_{i}=M_{i}=3,$ $d_{i}=1$, $\forall i$ MIMO
interference network and the average transmit SNR is 25 dB.}
\end{figure}

\begin{figure}
\begin{centering}
\includegraphics[scale=0.6]{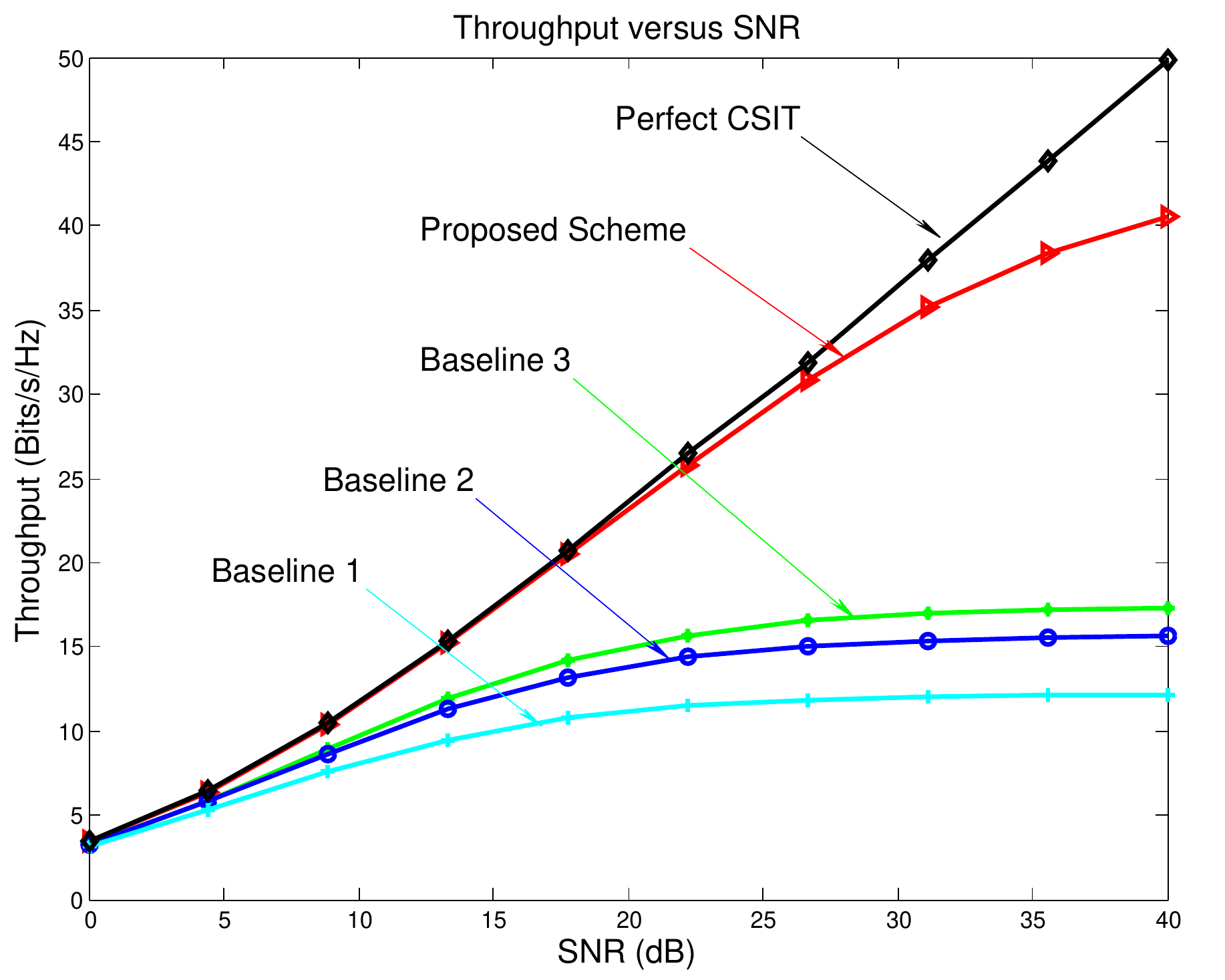}
\par\end{centering}

\caption{\label{fig:Throughput-versus-SNR-new}Throughput versus SNR under
a $K=4$, $N_{i}=M_{i}=3,$ $d_{i}=1$, $\forall i$ MIMO interference
network and the sum feedback bits constraint is $200$.}

\end{figure}

\section{Conclusions}

In this paper, we have proposed a low complexity IA design to achieve
a flexible tradeoff between the DoFs and CSI feedback cost. We characterize
the feedback cost by the feedback dimension. By exploiting the unique
features of IA algorithms, we propose a flexible feedback profile
design, which enables the Rxs to substantially reduce the feedback
cost by selecting the most critical part of CSI to feedback. We then
establish new feasibility conditions of IA under the proposed feedback
profile design. Finally, a low complexity algorithm for feedback profile
design is developed to reduce feedback dimension while preserving
IA feasibility. Both analytical and simulation results show that the
proposed scheme can significantly reduce the CSI feedback cost of
IA in MIMO interference networks. 

\appendix

\subsection{\label{sub:Proof-of-reduced}Proof of Lemma \ref{With-the-transceiver}}

By substituting the transceiver structure (\ref{eq:transceiver_structure})
into (\ref{eq:IA_condition}), we have that the constraints (\ref{eq:IA_condition})
in Problem 1 are satisfied for all links from Tx $p$ to Rx $q$,
where $(q,p)\in\{(j,i):\,\forall j,i\in\Omega_{j}^{II}\bigcup\Omega_{j}^{III}\}$.
Hence, the remaining constraints in Problem 1 are reduced to
\begin{equation}
\textrm{rank}(\mathbf{V}_{i}^{a})=d_{i},\quad\forall i,\textrm{rank}(\mathbf{U}_{j}^{a})=d_{j},\quad\forall j;\label{eq:sub_condition}
\end{equation}
\begin{equation}
(\mathbf{U}_{j}^{a})^{\dagger}\left[\begin{array}{ccc}
\cdots & \mathbf{G}_{ji}\mathbf{V}_{i}^{a} & \cdots\end{array}\right]_{i\in\Omega_{j}^{I}\bigcup\Omega_{j}^{IV}}=\mathbf{0},\quad\forall j.\label{eq:reduced_alignment_condition-1}
\end{equation}
Note that (a) $\mathbf{V}_{i}^{a}$, $\mathbf{U}_{i}^{a}$ are functions
of $\{\mathbf{H}_{ji}:\forall j,i\in\Omega_{j}^{I}\bigcup\Omega_{j}^{IV}\}$
and hence are independent of $\{\mathbf{H}_{ii}:\forall i\}$; and
(b) the entries of $\mathbf{H}_{ii}$ are i.i.d Gaussian distributed;
we have that (\ref{eq:MIA}) and (\ref{eq:sub_condition}) are equivalent
almost surely. Condition (\ref{eq:reduced_alignment_condition-1})
and $\textrm{rank}(\mathbf{U}_{j}^{a})=d_{j}$ in (\ref{eq:sub_condition})
are equivalent to 
\begin{equation}
\textrm{dim}_{s}\left(\textrm{span}\left(\left\{ \mathbf{G}_{ji}\mathbf{V}_{i}^{a}\right\} {}_{i\in\Omega_{j}^{I}\bigcup\Omega_{j}^{IV}}\right)\right)\leq M_{j}^{e}-d_{j}.\label{eq:condition2}
\end{equation}
Since links in $\{(j,i):i\in\Omega_{j}^{I}\}$ are not fed back, $\left\{ \mathbf{G}_{ji}\mathbf{V}_{i}^{a}\right\} {}_{i\in\Omega_{j}^{I}}$
will span a random subspace with dimension $(d_{j}^{0}-d_{j})$, which
is independent of $\textrm{span}\left(\left\{ \mathbf{G}_{ji}\mathbf{V}_{i}^{a}\right\} {}_{i\in\Omega_{j}^{IV}}\right)$\cite{feng2007rank}.
Hence, (\ref{eq:condition2}) can be equivalently transformed to 
\begin{equation}
\textrm{dim}_{s}\left(\textrm{span}\left(\left\{ \mathbf{G}_{ji}\mathbf{V}_{i}^{a}\right\} {}_{i\in\Omega_{j}^{IV}}\right)\right)\leq M_{j}^{e}-d_{j}-(d_{j}^{0}-d_{j}),\;\forall j.\label{eq:dimension_feasibility}
\end{equation}
\begin{equation}
\Longleftrightarrow\exists\mathbf{U}_{j}^{b}\in\mathbb{\mathbb{C}}^{M_{j}^{e}\times d_{j}^{0}},\;\textrm{rank}(\mathbf{U}_{j}^{b})=d_{j}^{0},\;(\mathbf{U}_{j}^{b})^{\dagger}\mathbf{G}_{ji}\mathbf{V}_{i}^{a}=\mathbf{0},\;\forall j,i\in\Omega_{j}^{IV}.\label{eq:equivalent}
\end{equation}
Hence, we prove the equivalence between Problem 1 and Problem \ref{CIA-under-transceiver}.

We derive the relationship between the solutions of the Problem 1
and Problem 2 as follows. Assume $\{\mathbf{V}_{i}^{a},\mathbf{U}_{j}^{b}\}$
are the solution of Problem \ref{CIA-under-transceiver}. Then there
exists $\{\mathbf{V}_{i}^{a},\mathbf{U}_{j}^{a}\}$ such that (\ref{eq:reduced_alignment_condition-1})
is satisfied and 
\[
\left[\begin{array}{c}
\mathbf{S}_{j}^{r}\mathbf{U}_{j}^{a}\\
\mathbf{0}
\end{array}\right]^{\dagger}\mathbf{H}_{ji}\left[\begin{array}{c}
\mathbf{S}_{i}^{t}\mathbf{V}_{i}^{a}\\
\mathbf{0}
\end{array}\right]=\mathbf{0},\forall i\Rightarrow
\]
 
\[
\textrm{dim}_{s}\left(\textrm{span}\left(\left\{ \mathbf{H}_{ji}\left[\begin{array}{c}
\mathbf{S}_{i}^{t}\mathbf{V}_{i}^{a}\\
\mathbf{0}
\end{array}\right]\right\} _{\forall i}\right)\right)\leq M_{j}-d_{j}.
\]
Hence, the least $d_{j}$ eigenvalues of the Hermitian matrix $\sum_{i\neq j}\left(\mathbf{H}_{ji}\mathbf{V}_{i}\right)\left(\mathbf{H}_{ji}\mathbf{V}_{i}\right)^{\dagger}$
are 0, and $\{\mathbf{V}_{i},\mathbf{U}_{j}\}$ given by (\ref{eq:solution_calculation})
are the solution of Problem \ref{IA-design-based-general} almost
surely.

\subsection{\label{sub:Proof-of-Lemma-RIP}Proof of Lemma \ref{Rotational-Invariant-PropertyThe}}

Assume that Problem \ref{CIA-under-transceiver} is feasible under
$\mathbf{R}_{j}=\mathbf{I}$, $\forall j$. Then there must exist
$\mathbf{U}_{j}^{b}$, $\mathbf{V}_{i}^{a}$ such that 
\[
\mathbf{U}_{j}^{b}(\mathbf{S}_{j}^{r})^{\dagger}\mathbf{H}_{ji}^{s}\mathbf{S}_{i}^{t}\mathbf{V}_{i}^{a}=\mathbf{0},\,\forall j,i\in\Omega_{j}^{IV}.
\]
Then, for any invertible $\{\mathbf{R}_{j}\}$, we have 
\begin{equation}
(\hat{\mathbf{U}}_{j}^{b})^{\dagger}\mathbf{R}_{j}(\mathbf{S}_{j}^{r})^{\dagger}\mathbf{H}_{ji}^{s}\mathbf{S}_{i}^{t}\mathbf{V}_{i}^{a}=\mathbf{0},\,\forall j,i\in\Omega_{j}^{IV},\label{eq:equiva}
\end{equation}
where $\hat{\mathbf{U}}_{j}^{b}=(\mathbf{R}_{j}^{-1})^{\dagger}\mathbf{U}_{j}^{b}$.
Equation (\ref{eq:equiva}) shows that the IA constraints (\ref{eq:equivalent_problem2})
are satisfied under $\hat{\mathbf{U}}_{j}^{b}$. Therefore, Problem
\ref{CIA-under-transceiver} is still feasible under other invertible
$\mathbf{R}_{j}$.

The converse statement is trivial, hence Lemma \ref{Rotational-Invariant-PropertyThe}
is proved.

\subsection{\label{sub:Proof-of-Theorem_1}Proof of Theorem \ref{Feasibility-conditions}}

The necessity of conditions 1, 2 is straight forward. We focus on
proving the necessity of condition 3.

In the equation sets (\ref{eq:equivalent_problem2}) in Problem \ref{CIA-under-transceiver},
there are $U_{j}=d_{j}^{0}(M_{j}^{e}-d_{j}^{0})$ free variables in
$\mathbf{U}_{j}^{b}\in\mathbb{\mathbb{C}}^{M_{j}^{e}\times d_{j}^{0}}$,
$\forall j$, $V_{i}=d_{i}(N_{i}^{e}-d_{i})$ free variables in $\mathbf{V}_{i}^{a}\in\mathbb{C}^{N_{i}^{e}\times d_{i}}$,
$\forall i$ and $C_{ji}=d_{j}^{0}d_{i}$ scalar constraints in matrix
equation $(\mathbf{U}_{j}^{b})^{\dagger}\mathbf{G}_{ji}\mathbf{V}_{i}^{a}=\mathbf{0}$,
$\forall j,i\in\Omega_{j}^{IV}$ \cite{yetis2010feasibility,razaviyayn2011degrees}.
By analyzing the algebraic dependency of the IA constraints \cite{razaviyayn2011degrees},
we have that, the number of constraints should be no more than the
number of free variables for any subset of IA constraints; hence,
\begin{equation}
\sum_{j:\,(j,i)\in\Omega_{sub}}U_{j}+\sum_{i:\,(j,i)\in\Omega_{sub}}V_{i}\geq\sum_{j,i:\,(j,i)\in\Omega_{sub}}C_{ji},\:\forall\Omega_{sub}\subseteq\{(j,i):\,\forall j,i\in\Omega_{j}^{IV}\},\label{eq:feasibility}
\end{equation}
which is condition 3 in Theorem \ref{Feasibility-conditions}.

\subsection{\label{sub:Proof-of-Theorem-sufficient}Proof of Theorem \ref{Sufficient-Feasibility-Condition}}

(\ref{eq:polynomial_solution}) can be rewritten as 
\begin{equation}
\mathbf{y}_{ji}\triangleq\textrm{vec}(\mathbf{G}_{ji}^{(1)})+\mathbf{X}_{ji}\mathbf{v}+\textrm{vec}\left(\mathbf{\tilde{U}}_{j}^{\dagger}\mathbf{G}_{ji}^{(4)}\mathbf{\tilde{V}}_{i}\right)=\mathbf{0},\forall j,i\in\Omega_{j}^{IV},\label{eq:polynomial_map}
\end{equation}
where $\mathbf{v}$ is given by 
\[
\mathbf{v}=\left[\begin{array}{cccccc}
\textrm{vec}(\mathbf{\tilde{U}}_{1}^{\dagger})^{T} & \cdots & \textrm{vec}(\mathbf{\tilde{U}}_{K}^{\dagger})^{T} & \textrm{vec}(\mathbf{\tilde{V}}_{1})^{T} & \cdots & \textrm{vec}(\mathbf{\tilde{V}}_{K})^{T}\end{array}\right]^{T}.
\]

Note that each element in $\mathbf{y}_{ji}$ is a polynomial function
of the elements in $\mathbf{v}$. From (\ref{eq:polynomial_map}),
we have that $\mathbf{X}_{ji}$ defined in (\ref{eq:X_determinant})
are the coefficient vectors of the linear terms in $\mathbf{y}$.
According to \cite{razaviyayn2011degrees} (proof of Theorem 2) and
\cite{ruan2012feasibility} (Lemma 3.1-3.3), when the row vectors
of $\{\mathbf{X}_{ji}\}$ are linearly independent, equation sets
(\ref{eq:polynomial_map}) have solutions and hence Problem \ref{IA-design-based-general}
is feasible.

Adopting an approach similar to that in the proof of Corollary 3.1
in \cite{ruan2012feasibility}, we can further prove that under a
given feedback profile $\mathcal{L}$, the row vectors $\{\mathbf{X}_{ji}\}$
are either always linearly dependent or independent almost surely
for all channel realizations. Hence, when $\{\mathbf{X}_{ji}\}$ has
linearly independent rows under one random channel realization, Problem
1 is almost surely feasible.

\subsection{\label{sub:Proof-of-Corollary-divisible}Proof of Corollary \ref{Feasibility-Conditions-divisible}}

We prove Corollary \ref{Feasibility-Conditions-divisible} via the
following two lemmas (Lemma \ref{Sufficient-CIA-feasibility} and
Lemma \ref{Existence-of-variables}).

\mysubnote{divide into two steps to prove the sufficient side}

\mysubsubnote{one is to give the lemma which states sufficiency, the other claims the existence of the binary variables. }
\begin{lemma}
[Sufficient Feasibility Conditions]\label{Sufficient-CIA-feasibility}If
there exists a set of binary variables $\left\{ b_{jipq}^{t},b_{jipq}^{r}\in\{0,1\}\right\} $,
$\forall(j,i,p,q)\in\overline{\Omega}$ that satisfy the following
constraints, Problem \ref{CIA-under-transceiver} is almost surely
feasible. 
\begin{equation}
b_{jipq}^{t}+b_{jipq}^{r}=1,\;\forall j,i,p,q,\label{eq:constraints}
\end{equation}
\begin{equation}
\sum_{\underset{(j,i,p,q)\in\overline{\Omega}}{(i,q):}}b_{jipq}^{r}\leq U_{jp},\;\forall j,p,\label{eq:receiver_side}
\end{equation}
\begin{equation}
\sum_{\underset{(j,i,p,q)\in\overline{\Omega}}{(j,p):}}b_{jipq}^{t}\leq V_{iq},\;\forall i,q,\label{eq:transmit_side}
\end{equation}
\begin{equation}
b_{jip1}^{t}=\cdots=b_{jipd}^{t},\;\forall j,i,p,\label{eq:zhengtai}
\end{equation}
where $\overline{\Omega}=\{(j,i,p,q):p\in\{1,\cdots d+d_{j}^{0}\},q\in\{1,\cdots d\},i\in\Omega_{j}^{IV},\forall j\}$,
, $U_{jp}=(M_{j}^{e}-d_{j}^{0})$, $V_{iq}=(N_{i}^{e}-d)$.\end{lemma}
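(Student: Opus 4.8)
The plan is to derive, from the existence of the binary certificate $\{b_{jipq}^{t},b_{jipq}^{r}\}$, the full hypothesis of Theorem~\ref{Sufficient-Feasibility-Condition} --- namely $N_i^e\geq d$ and $M_i^e\geq d_i^0$ for all $i$, together with linear independence of the rows of $\{\mathbf{X}_{ji}:\forall j,i\in\Omega_j^{IV}\}$ from (\ref{eq:X_determinant}) for a generic channel --- and then invoke that theorem together with Lemma~\ref{With-the-transceiver} to conclude feasibility of Problem~\ref{CIA-under-transceiver}. The dimension inequalities are automatic: if $N_i^e<d$ then $V_{iq}=N_i^e-d<0$, so (\ref{eq:transmit_side}) cannot hold with nonnegative $b$'s once $i$ participates in some $\Omega_j^{IV}$, and similarly $M_j^e\geq d_j^0$ from (\ref{eq:receiver_side}). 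It therefore remains to show that the stacked matrix $\mathbf{X}=[\mathbf{X}_{ji}]$ has full row rank almost surely. Since $\textrm{rank}(\mathbf{X})$ is constant over channel realizations almost surely --- the same genericity fact already used in the proof of Theorem~\ref{Sufficient-Feasibility-Condition}, after Corollary~3.1 of \cite{ruan2012feasibility} --- it suffices to show that the determinant of some maximal square submatrix of $\mathbf{X}$, viewed as a polynomial in the entries of $\{\mathbf{G}_{ji}^{(2)},\mathbf{G}_{ji}^{(3)}\}$, is not identically zero.

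The heart of the argument is to read the binary variables as a valid injective assignment (a system of distinct representatives) of the ``slots'' $(j,i,p,q)\in\overline{\Omega}$ into the pool of free variables, i.e.\ the entries of the blocks $\mathbf{\tilde{U}}_j$ and $\mathbf{\tilde{V}}_i$. By (\ref{eq:constraints}) every slot is routed to exactly one side: $b_{jipq}^{r}=1$ sends it to a $\mathbf{\tilde{U}}_j$-entry, whose Jacobian contribution lives in $(\mathbf{G}_{ji}^{(2)})^{T}\otimes\mathbf{I}_{d_j^0}$, while $b_{jipq}^{t}=1$ sends it to a $\mathbf{\tilde{V}}_i$-entry, contributing through $\mathbf{I}_{d}\otimes\mathbf{G}_{ji}^{(3)}$ --- exactly the two nonzero column-blocks of $\mathbf{X}_{ji}$ in (\ref{eq:X_determinant}). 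Inequalities (\ref{eq:receiver_side}) and (\ref{eq:transmit_side}) are then precisely the Hall-type capacity constraints that make this routing refinable to an injection: no receive resource $(j,p)$ absorbs more than $U_{jp}=M_j^e-d_j^0$ slots and no transmit resource $(i,q)$ more than $V_{iq}=N_i^e-d$, matching the free scalars available in the associated block of $\mathbf{\tilde{U}}_j$ or $\mathbf{\tilde{V}}_i$. Constraint (\ref{eq:zhengtai}) forces the transmit-side choice to be uniform in the stream index $q$, which is what the replicated factor $\mathbf{I}_{d}\otimes\mathbf{G}_{ji}^{(3)}$ demands, so the selected $\mathbf{\tilde{V}}_i$-entries sit in coherent positions across the $d$ identical copies of $\mathbf{G}_{ji}^{(3)}$.

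Such an injection picks out, in a maximal square submatrix $\bar{\mathbf{X}}$ of $\mathbf{X}$, one nonzero entry in each row, all in distinct columns --- a generalized diagonal whose product is a monomial $\mu$ in the entries of $\{\mathbf{G}_{ji}^{(2)},\mathbf{G}_{ji}^{(3)}\}$. Following the matching arguments in the proof of Theorem~2 of \cite{razaviyayn2011degrees} and Lemmas~3.1--3.3 of \cite{ruan2012feasibility}, one then argues that $\mu$ survives with a nonzero coefficient in the Leibniz expansion of $\det\bar{\mathbf{X}}$; hence $\det\bar{\mathbf{X}}\not\equiv 0$, $\mathbf{X}$ has full row rank almost surely, and Theorem~\ref{Sufficient-Feasibility-Condition} together with Lemma~\ref{With-the-transceiver} gives the claim. (Combined with the companion Lemma~\ref{Existence-of-variables}, which extracts the binary variables from the three conditions of Theorem~\ref{Feasibility-conditions}, this then yields Corollary~\ref{Feasibility-Conditions-divisible}.)

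The step I expect to be the main obstacle is the no-cancellation claim for $\mu$. Unlike in the classical IA feasibility analysis, the entries of $\mathbf{X}$ are not algebraically independent here: they are tied by the Kronecker factors, and the same channel-derived matrices $\mathbf{G}_{ji}^{(2)},\mathbf{G}_{ji}^{(3)}$ (built from $\mathbf{H}_{ji}^{s},\mathbf{S}_j^r,\mathbf{S}_i^t$) recur in many rows. The constraints (\ref{eq:constraints})--(\ref{eq:zhengtai}) are engineered so that the assignment is block-respecting and the $\mathbf{\tilde{U}}_j$- and $\mathbf{\tilde{V}}_i$-column blocks of distinct cross links occupy disjoint column ranges of $\mathbf{X}$; it is this disjointness, together with the $q$-uniformity imposed by (\ref{eq:zhengtai}), that should rule out cancellation of $\mu$ in $\det\bar{\mathbf{X}}$. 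Turning that into a rigorous argument --- in the spirit of the corresponding step of \cite{ruan2012feasibility} --- is where the real work lies.
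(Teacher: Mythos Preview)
Your proposal is correct and follows essentially the same approach as the paper's own proof, which is given only as an outline: from the binary variables one shows the rows of $\{\mathbf{X}_{ji}\}$ are linearly independent almost surely via the matching/assignment argument of Appendix~G in \cite{ruan2012feasibility}, and then Theorem~\ref{Sufficient-Feasibility-Condition} together with Lemma~\ref{With-the-transceiver} yields feasibility of Problem~\ref{CIA-under-transceiver}. Your write-up in fact supplies more detail than the paper's outline --- notably the correct identification of (\ref{eq:zhengtai}) as the constraint that respects the Kronecker block $\mathbf{I}_d\otimes\mathbf{G}_{ji}^{(3)}$, and the honest flag that the no-cancellation of the selected monomial in $\det\bar{\mathbf{X}}$ is where the real work (deferred to \cite{ruan2012feasibility}) lies.
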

\begin{proof}
(Outline) Assume there exist binary variables $\left\{ b_{jipq}^{t},b_{jipq}^{r}\right\} $
satisfying (\ref{eq:constraints})-(\ref{eq:zhengtai}). It can be
proved that the row vectors of $\{\mathbf{X}_{ji}\}$ defined in Theorem
\ref{Sufficient-Feasibility-Condition} are linearly independent almost
surely and the proof is similar to that of \cite{ruan2012feasibility}
(Appendix G). We omit the details due to page limit.
\end{proof}

\begin{lemma}
[Existence of the Variables $\{b_{jipq}^{t},b_{jipq}^{r}\}$ in Divisible
Cases]\label{Existence-of-variables}Assume that the three conditions
in Theorem \ref{Feasibility-conditions} are satisfied and $d_{i}=d$,
$d\mid M_{i}^{s}$, $d\mid N_{i}^{s}$, $\forall i$. Then there exist
such binary variables $\left\{ b_{jipq}^{t},b_{jipq}^{r}\right\} $
satisfying conditions (\ref{eq:constraints})-(\ref{eq:zhengtai}).
\end{lemma}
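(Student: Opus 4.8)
\textbf{Proof proposal for Lemma \ref{Existence-of-variables}.}

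The plan is to recast the existence of the binary variables $\{b_{jipq}^{t}, b_{jipq}^{r}\}$ as a feasibility question for a bipartite load-balancing (flow) problem, and then invoke Theorem \ref{Feasibility-conditions} condition 3 as the cut condition (via Hall's theorem / max-flow-min-cut) that guarantees a feasible integral assignment. Because $d \mid M_i^s$ and $d \mid N_i^s$ for all $i$, and $d_i=d$, all the relevant quantities $M_j^e$, $N_i^e$, $d_j^0$, $U_j$, $V_i$, $C_{ji}$ are divisible by $d$ (for $U_j=d_j^0(M_j^e-d_j^0)$ and $V_i=d(N_i^e-d)$ this follows since $M_j^e-d_j^0$ and $N_i^e-d$ are multiples of $d$); this divisibility is what makes the ``clumping'' constraint (\ref{eq:zhengtai}) — which forces the $d$ transmit-side indicators $b_{jip1}^t,\dots,b_{jipd}^t$ to move together — compatible with an integral solution. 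First I would observe that (\ref{eq:zhengtai}) lets us replace the $d$ variables $b_{jipq}^t$ ($q=1,\dots,d$) by a single variable $\beta_{jip}^t \in \{0,1\}$, and correspondingly group the $d$ receiver-side variables; each link $(j,i)\in\Omega_j^{IV}$ with $i$ ranging then contributes, for each row-block index $p\in\{1,\dots,d+d_j^0\}$, a single ``token'' that must be assigned either to the transmitter side (contributing $d$ to the column count at Tx $i$) or to the receiver side (contributing $1$ to the count at the $p$-th row-block of Rx $j$).

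Next I would set up the bipartite graph: on one side, a node for each pair $(j,p)$ with capacity $U_{jp}=M_j^e-d_j^0$ (receiver-block nodes); on the other side, a node for each pair $(i,q)$ with capacity $V_{iq}=N_i^e-d$ (transmitter-column nodes). Each token, indexed by $(j,i,p)$ with $i\in\Omega_j^{IV}$, is an ``item'' that must be routed to exactly one of the two sides. Counting: routing all tokens for link $(j,i)$ to the receiver side at Rx $j$ uses up $d_j^0 + d$ units spread over the $M_j^e-d_j^0$ receiver-block nodes of Rx $j$; routing them to the transmitter side uses $d\cdot(d+d_j^0)/d=\dots$ — here I would be careful with the bookkeeping, matching exactly the totals $U_j=\sum_p U_{jp}$ and $V_i = d\cdot V_{iq}$ that appear in condition 3. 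The key point: a fractional/relaxed version of this assignment is feasible precisely when, for every subset $\Omega_{sub}$ of links, the total ``demand'' $\sum_{(j,i)\in\Omega_{sub}} C_{ji}=\sum C_{ji}$ does not exceed the total available capacity $\sum_{j:(j,i)\in\Omega_{sub}}U_j+\sum_{i:(j,i)\in\Omega_{sub}}V_i$, which is exactly inequality (\ref{eq:eq:nece_condition-1}) of Theorem \ref{Feasibility-conditions}. So the cut condition for the flow is guaranteed by hypothesis. The final step is to upgrade fractional feasibility to an integral $\{0,1\}$ solution: here I would invoke integrality of flows on networks with integer capacities (total unimodularity of the incidence matrix of the bipartite assignment), noting that the divisibility by $d$ ensures the capacities, demands, and the ``quantum'' $d$ of each transmit-side token are all consistent integers so that no rounding loss occurs — this is the precise role of the $d\mid M_i^s$, $d\mid N_i^s$ hypothesis.

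I expect the main obstacle to be the exact combinatorial bookkeeping in translating (\ref{eq:constraints})–(\ref{eq:zhengtai}) into the correct bipartite capacity/demand instance — in particular, making the per-row-block capacities $U_{jp}$ and per-column capacities $V_{iq}$ aggregate to exactly the $U_j$, $V_i$, $C_{ji}$ of condition 3, and verifying that the subset cut condition of the flow problem coincides with (\ref{eq:eq:nece_condition-1}) for \emph{all} $\Omega_{sub}$, not just for the trivial ones. A secondary subtlety is handling the ``extra'' $d_j^0$ row-blocks (beyond the $d$ needed for the direct stream at Rx $j$), which absorb the streams from $\Omega_j^{I}$ links; I would need to check these are correctly reflected in the count $p\in\{1,\dots,d+d_j^0\}$ and do not break the cut inequality. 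Once the instance is set up correctly, the existence of $\{b_{jipq}^t,b_{jipq}^r\}$ follows from a standard max-flow-min-cut plus integrality argument, and the lemma — hence, combined with Lemma \ref{Sufficient-CIA-feasibility}, Corollary \ref{Feasibility-Conditions-divisible} — is proved.
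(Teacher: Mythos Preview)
Your proposal is correct and follows essentially the same route as the paper: the paper also recasts the existence of $\{b_{jipq}^t,b_{jipq}^r\}$ as a max-flow problem whose min-cut condition is exactly condition~3 of Theorem~\ref{Feasibility-conditions} (restated at the $(j,i,p,q)$ level as (\ref{eq:sub_system})), and then reads off the binary variables from an integral max flow, using the divisibility $d\mid U_{jp}$, $d\mid V_{iq}$ to enforce (\ref{eq:zhengtai}). The only cosmetic difference is that the paper keeps all four indices and imposes (\ref{eq:zhengtai}) as a symmetry constraint on the flow (exploited in the augmenting-path contradiction), whereas you collapse over $q$ at the outset; both arguments are equivalent once the bookkeeping you flag is carried out.
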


\emph{Proof}: Condition (\ref{eq:feasibility}) is equivalent to the
following \cite{ruan2012feasibility}:

\begin{equation}
\sum_{\underset{(j,i,p,q)\in\overline{\Omega}_{sub}}{(j,p):}}U_{jp}+\sum_{\underset{(j,i,p,q)\in\overline{\Omega}_{sub}}{(i,q):}}V_{iq}\geq|\overline{\Omega}_{sub}|,\;\forall\overline{\Omega}_{sub}\subseteq\overline{\Omega}.\label{eq:sub_system}
\end{equation}

Conditional on (\ref{eq:sub_system}), we will prove the existence
of binary variables $\{b_{jipq}^{t},b_{jipq}^{r}\}$ satisfying (\ref{eq:constraints})-(\ref{eq:zhengtai})
via a constructive method. Specifically, we construct $\{b_{jipq}^{t},b_{jipq}^{r}\}$
by transforming the equation sets (\ref{eq:sub_system}) to the well
known max-flow problem \cite{schrijver2002history}. 

We first introduce a little about the max-flow problem. Denote $\mathcal{N}=(\mathcal{V},\mathcal{E})$
as a directed graph where $\mathcal{V}$ is the set of nodes and $\mathcal{E}$
the edges, $s$$,t\in\mathcal{V}$ are the source and sink node respectively.
The capacity of an edge, denoted by $c(u,v)$, represents the maximum
amount of flow that can pass through an edge. The flow of an edge,
denoted by $f(u,v)$ should satisfy $0\leq f(u,v)\leq c(u,v)$, $\forall(u,v)\in\mathcal{E}$
and the conservation of flows, i.e., 
\begin{equation}
\sum_{u:\,(u,v)\in\mathcal{E}}f(u,v)=\sum_{k:\,(v,k)\in\mathcal{E}}f(v,k)\label{eq:property_flow}
\end{equation}
$\forall v\in\mathcal{V}/\{s,t\}$. The value of the sum flow is defined
by $f_{sum}=\sum_{v\in\mathcal{V}}f(s,v)$. By adopting this mathematical
framework, we have the following lemma which help us to construct
$\{b_{jipq}^{t},b_{jipq}^{r}\}$.
\begin{lemma}
[Max-flow Problem]\label{Equivalent-feasibility-problem}The max
sum flow $f_{sum}$ of the graph $\mathcal{N}$ constructed in Algorithm
3 is $f_{sum}=\sum_{(j,i,p,q)\in\overline{\Omega}}1$ under the constraint
$f(v_{i1},c_{jip1})=f(v_{i2},c_{jip2})\cdots=f(v_{id},c_{jipd})$,
$f(u_{jp},c_{jip1})=f(u_{jp},c_{jip2})\cdots=f(u_{jp},c_{jipd})$,
$\forall$ $(j,i,p,q)\in\overline{\Omega}$, where $f(x,y)$ denotes
the edge flow from vertex $x$ to vertex $y$ in $\mathcal{N}$:
\end{lemma}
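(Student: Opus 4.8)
\textbf{Proof plan for Lemma \ref{Equivalent-feasibility-problem}.}
The plan is to analyze the max-flow value of the graph $\mathcal{N}$ built in Algorithm 3 by combining a trivial capacity upper bound with a min-cut lower bound derived from condition (\ref{eq:sub_system}). First I would describe the structure of $\mathcal{N}$: a source $s$ connected to ``transmit'' variable nodes $\{v_{iq}\}$ (with capacities encoding $V_{iq}$) and to ``receive'' variable nodes $\{u_{jp}\}$ (with capacities encoding $U_{jp}$), those nodes feeding into constraint nodes $\{c_{jipq}:(j,i,p,q)\in\overline{\Omega}\}$, and each constraint node connected to the sink $t$ with unit capacity. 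The target flow value $f_{sum}=|\overline{\Omega}|$ can only be achieved if every edge $(c_{jipq},t)$ is saturated, so the upper bound $f_{sum}\le|\overline{\Omega}|$ is immediate from the cut that separates $\{t\}$ from the rest. The substantive direction is to show this value is actually attained.

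For the lower bound I would invoke the max-flow/min-cut theorem and show every $s$--$t$ cut has capacity at least $|\overline{\Omega}|$. Given a cut, let $S$ be the source side; a constraint node $c_{jipq}$ either lies in $S$ (costing its unit edge to $t$) or lies on the sink side, in which case at least one of its feeding edges must be cut. The edges feeding $c_{jipq}$ come from the node $u_{jp}$ and the node $v_{iq}$; because of the coupling structure (the $S^{III}$/divisibility grouping that produces the synchronization constraints $f(v_{i1},c_{jip1})=\cdots=f(v_{id},c_{jipd})$ and similarly for $u_{jp}$), the variable nodes are shared across a block of $d$ constraint nodes, and the relevant cut edges are the ones from $s$ into $u_{jp}$ and into $v_{iq}$, with capacities $U_{jp}$ and $V_{iq}$ respectively. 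Summing over the block indices, the set of constraint nodes $\overline{\Omega}_{sub}$ routed to the sink side forces cutting source-edges whose total capacity is $\sum_{(j,p)}U_{jp}+\sum_{(i,q)}V_{iq}$ over the involved indices, which by (\ref{eq:sub_system}) is at least $|\overline{\Omega}_{sub}|$. Adding the $|\overline{\Omega}\setminus\overline{\Omega}_{sub}|$ unit edges from the constraint nodes kept in $S$, every cut has capacity $\ge|\overline{\Omega}|$, so $f_{sum}=|\overline{\Omega}|$.

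Finally I would check that the synchronization constraints $f(v_{i1},c_{jip1})=\cdots=f(v_{id},c_{jipd})$ and $f(u_{jp},c_{jip1})=\cdots=f(u_{jp},c_{jipd})$ do not obstruct attaining the unconstrained max-flow. Here I would argue that because of divisibility ($d\mid M_i^s$, $d\mid N_i^s$, so $d\mid U_{jp}$-type and $d\mid V_{iq}$-type quantities after the grouping) one can work with a ``contracted'' network in which each block of $d$ parallel constraint edges is merged into a single edge of capacity $d$; a max-flow in the contracted network, which still has value $|\overline{\Omega}|/d\cdot d = |\overline{\Omega}|$ by the same min-cut argument applied with the scaled capacities, splits evenly across the $d$ parallel edges and thus yields a synchronized integral flow in $\mathcal{N}$. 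The main obstacle I anticipate is precisely this last point — verifying that the extra equality constraints are compatible with integrality and with achieving the full value, i.e. that Algorithm 3's construction genuinely reduces the coupled combinatorial requirement (\ref{eq:sub_system}) to an ordinary max-flow instance; the earlier parts are routine min-cut bookkeeping. Once Lemma \ref{Equivalent-feasibility-problem} is in hand, a standard integral max-flow (integrality of flows for integer capacities) yields $\{0,1\}$-valued edge flows, from which $b_{jipq}^{r}$ and $b_{jipq}^{t}$ are read off directly, establishing Lemma \ref{Existence-of-variables} and hence Corollary \ref{Feasibility-Conditions-divisible}.
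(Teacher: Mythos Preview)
Your plan is correct and reaches the same conclusion, but the route differs from the paper's. The paper does \emph{not} carry out a direct min-cut analysis. Instead, after the trivial upper bound $f_{sum}\le|\overline\Omega|$, it argues by contradiction on the \emph{constrained} max-flow: if $f_{sum}<|\overline\Omega|$ then some $f(c_{xymn},t)=0$, and the synchronization constraints force the whole $q$-block $f(c_{xym1},t)=\cdots=f(c_{xymd},t)=0$. From this unsaturated block, the paper runs an explicit residual-graph-style exploration (its Algorithm~4) to build a set $\overline\Omega_{sub}$; the divisibility $d\mid U_{jp}$ is then used to certify that every source edge into the variable nodes touched by $\overline\Omega_{sub}$ is saturated, and a short flow-count shows $\sum U_{jp}+\sum V_{iq}<|\overline\Omega_{sub}|$, contradicting (\ref{eq:sub_system}).

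Your approach replaces this constructive contradiction with two cleaner steps: (i) a direct min-cut bound on the \emph{unconstrained} network, reading off from any minimum cut a subset $\overline\Omega_{sub}$ of sink-side constraint nodes and invoking (\ref{eq:sub_system}); and (ii) a quotient/contraction of the $q$-index (valid because $d\mid U_{jp}$ and $V_{iq}$ is $q$-independent) to show the unconstrained optimum is attained by a $q$-synchronized integral flow. This is the standard bipartite-matching/Hall-type reduction and is arguably more transparent; what the paper's version buys is an explicit witness $\overline\Omega_{sub}$ produced algorithmically, which ties the argument more closely to Ford--Fulkerson and to the rest of Appendix~\ref{sub:Proof-of-Corollary-divisible}. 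One small point to tighten in your write-up: when you bound a generic cut, make explicit that in a \emph{minimum} cut a constraint node lies on the sink side only if both of its parents $u_{jp},v_{iq}$ do (otherwise the intermediate edge cost $U_{jp}$ or $V_{iq}$ dominates the unit edge), so the relevant paid source edges are exactly those into the parent sets of $\overline\Omega_{sub}$; then (\ref{eq:sub_system}) applies verbatim.
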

\emph{Algorithm 3 (Max Flow Graph $\mathcal{N}=(\mathcal{V},\mathcal{E})$):}
\begin{itemize}
\item \textbf{Step 1}: The vertices are given by $\mathcal{V}=\left\{ s,t,\{u_{jp},v_{iq},c_{jipq}:\forall(j,i,p,q)\in\overline{\Omega}\}\right\} $
where $s$ and $t$ are the source and sink node respectively. 
\item \textbf{Step 2}: The edges are given by $\mathcal{E}=\left\{ (s,u_{jp}),(s,v_{iq}),(u_{jp},c_{jipq}),(v_{iq},c_{jipq}),(c_{jipq},t):\forall(j,i,p,q)\in\overline{\Omega}\right\} $. 
\item \textbf{Step 3}: Set the edge capacity $c(s,u_{jp})=U_{jp}$, $c(s,v_{iq})=V_{iq}$,
$c(u_{jp},c_{jipq})=U_{jp}$, $c(v_{iq},c_{jipq})=V_{iq}$, $c(c_{jipq},t)=1$,
$\forall(j,i,p,q)\in\overline{\Omega}$.\end{itemize}
\begin{proof}
Please see Appendix \ref{sub:Equivalence-of-the} for the proof.
\end{proof}

\begin{figure}
\begin{centering}
\includegraphics[scale=0.9]{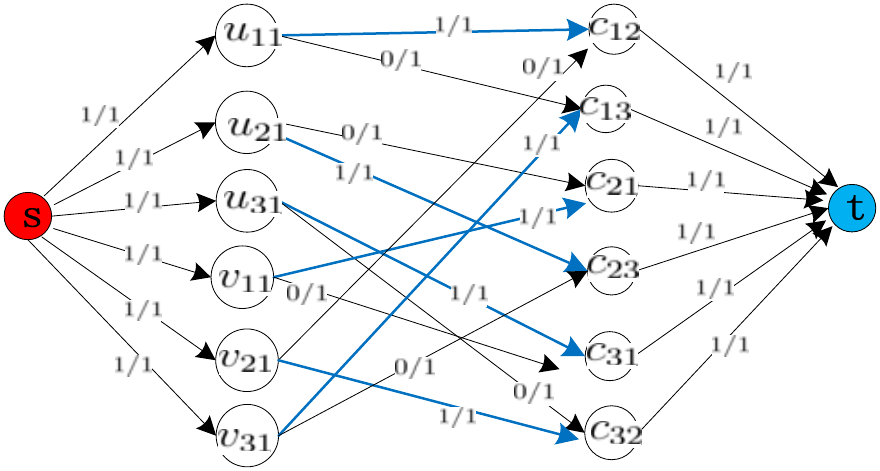}
\par\end{centering}

\caption{\label{fig:Max-flow-problem-example:}An example of constructed max-flow
graph for a $K=3$ user MIMO network under feedback profile parameters:
$N_{i}^{s}=M_{i}^{s}=2$, $d_{i}=1$, $\forall i$, $\Omega_{j}^{IV}=\{1,\cdots j-1,j+1,\cdots K\}$,
$\Omega_{j}^{I}=\Omega_{j}^{II}=\Omega_{j}^{III}=\emptyset$, $\forall j,i,j\neq i$.
The value $f/c$ near each edge denotes the flow ($f$) and capacity
($c$). }
\end{figure}

Fig. \ref{fig:Max-flow-problem-example:} illustrated an example of
constructed max-flow graph. Based on the flow values $\{f(u,v)\}$
in the flow graph $\mathcal{N}$, construct $b_{jipq}^{t},\, b_{jipq}^{r}$
as
\begin{equation}
b_{jipq}^{t}=f(v_{iq},c_{jipq}),\: b_{jipq}^{r}=f(u_{jp},c_{jipq}).\label{eq:binary_varialbes}
\end{equation}

From $f_{sum}=\sum_{(j,i,p,q)\in\overline{\Omega}}1$, we have $f(c_{jipq},t)=1$,
$\forall(j,i,p,q)\in\overline{\Omega}$. Note that $f(v_{iq},c_{jipq})$,
$f(u_{jp},c_{jipq})$ are integral as all capacity values on the edges
are integral \cite{cormen2001introduction}. Hence $b_{jipq}^{t}+b_{jipq}^{r}=f(c_{jipq},t)=1$
and $b_{jipq}^{t},b_{jipq}^{r}\in\{0,1\}$ according to (\ref{eq:property_flow}).
On the other hand, it is easy to verify that $\{b_{jipq}^{t},b_{jipq}^{r}\}$
satisfy the conditions (\ref{eq:receiver_side})-(\ref{eq:transmit_side})
as well according to (\ref{eq:property_flow}).\hfill \IEEEQED

\subsection{\label{sub:Equivalence-of-the}Proof of Lemma \ref{Equivalent-feasibility-problem}}

By the max-flow min-cut theorem \cite{cormen2001introduction}, the
max flow $f_{sum}\leq\sum_{(j,i,p,q)\in\overline{\Omega}}1=|\overline{\Omega}|$. 

We prove Lemma \ref{Equivalent-feasibility-problem} via the converse-negative
proposition. Assume that $f_{sum}<|\overline{\Omega}|$, then $\exists$
$(x,y,m,n)\in|\overline{\Omega}|$, such that $f(c_{xymn},t)=0$.
Due to the symmetry of the max-flow graph w.r.t. $q$, we must have
$f(c_{xym1},t)=\cdots=f(c_{xymd},t)=0$. Furthermore, the network
must have no further augmenting paths \cite{cormen2001introduction}
(otherwise, the max-flow can be increased). Construct $\overline{\Omega}_{sub}\subseteq\overline{\Omega}$
as follows:

\emph{Algorithm 4 (Construction of $\overline{\Omega}_{sub}$)}
\begin{itemize}
\item \textbf{Step 1}: Initialize $\mathcal{C}=\{c_{xym1},\cdots c_{xymd}\}$,
$\mathcal{C}_{c}=\{c_{jipq}:(j,i,p,q)\in\overline{\Omega}\}/\mathcal{C}$,
$\mathcal{U}=\{u_{xm},v_{y1},\cdots v_{yd}\}$ and $\mathcal{U}_{c}=\{u_{jp},v_{iq}:\forall(j,i,p,q)\in\overline{\Omega}\}/\mathcal{U}$
.
\item \textbf{Step 2}: For each $r\in\mathcal{C}_{c}$ such that $\exists$
$z\in\mathcal{U}$, $(z,r)\in\mathcal{E}$ and $f(z,r)>0$, do: $\mathcal{C}=\mathcal{C}/\{r\}$,
$\mathcal{C}_{c}=\mathcal{C}_{c}/\{r\}$.
\item \textbf{Step 3}: For each $z\in\mathcal{U}_{c}$ such that $\exists$
$r\in\mathcal{C}$, $(z,r)\in\mathcal{E}$ , do: $\mathcal{U}=\mathcal{U}\cup\{z\}$,
$\mathcal{U}_{c}=\mathcal{U}_{c}/\{z\}$.
\item \textbf{Step 4}: Iterate between \textbf{Step 2} and \textbf{Step
3} until no vertices can be added to $\mathcal{C}$ or $\mathcal{U}$.
$\overline{\Omega}_{sub}$ is given by $\overline{\Omega}_{sub}=\{(j,i,p,q):c_{jipq}\in\mathcal{C}\}.$
\hfill \IEEEQED
\end{itemize}

We have $\mathcal{U}=\{u_{jp},v_{iq},\forall(j,i,p,q)\in\overline{\Omega}_{sub}\}$.
Furthermore, as the max-flow graph is symmetric w.r.t. $q$ and $d\mid c(s,u_{jp})$,
$\forall j,p$, we must have $f(s,z)=c(s,z)$, $\forall z\in\mathcal{U}$
(otherwise there exist further augmenting paths \cite{cormen2001introduction}
in the graph). Hence, 
\begin{eqnarray*}
\sum_{(j,p):\;(j,i,p,q)\in\overline{\Omega}_{sub}}U_{jp}+\sum_{(i,q):\;(j,i,p,q)\in\overline{\Omega}_{sub}}V_{iq} & = & \sum_{(j,i,p,q)\in\overline{\Omega}_{sub}}\left(f(c_{jipq},t)\right)\\
 & \leq & \sum_{\underset{\{x,y,m,n=\{1,\cdots d\}\}}{(j,i,p,q)\in\overline{\Omega}_{sub}/}}c(c_{jipq},t)+\sum_{n=1}^{d}f((c_{xymn},t))\\
 & < & \sum_{(j,i,p,q)\in\overline{\Omega}_{sub}}1,
\end{eqnarray*}
which contradicts condition (\ref{eq:sub_system}).\hfill \IEEEQED

Via the above converse-negative proposition, Lemma \ref{Equivalent-feasibility-problem}
is proved.

\subsection{\label{sub:Complexity-of-Feasibility}Complexity of Feasibility Checking}

If $\overline{M}<\sum_{j,i\in\Omega_{j}^{IV}}d_{j}^{0}d_{i}$, where
$\overline{M}$ is in Theorem \ref{Sufficient-Feasibility-Condition},
then the IA problem is infeasible under the current feedback profile
as condition 3) in Theorem 1 is violated. If $\overline{M}\geq\sum_{j,i\in\Omega_{j}^{IV}}d_{j}^{0}d_{i}$,
we check the linear independence of the row vectors $\{\mathbf{X}_{ji}\}$
by checking whether the determinant of matrix $\overline{\mathbf{X}}$
is nonzero with complexity $\mathcal{O}(\overline{M}^{3})=\mathcal{O}\left((KN)^{3}\right)$
\cite{bunch1974triangular}:
\[
\overline{\mathbf{X}}=\left[\begin{array}{cc}
\left[\begin{array}{ccc}
\cdots & \mathbf{X}_{ji}^{T} & \cdots\end{array}\right]_{j,i\in\Omega_{j}^{IV}} & (\mathbf{X}^{[c]})^{T}\end{array}\right]\in\mathbb{C}^{\overline{M}\times\overline{M}}
\]
where $\mathbf{X}^{[c]}$ is a $\left(\overline{M}-\sum_{j,i\in\Omega_{j}^{IV}}d_{j}^{0}d_{i}\right)\times\overline{M}$
random matrix independent of $\{\mathbf{X}_{ji}\}$. Note the row
vectors of $\{\mathbf{X}_{ji}:\forall j,i\in\Omega_{j}^{IV}\}$ are
independent if and only if $\textrm{det}(\overline{\mathbf{X}})\neq0$.

\subsection{\label{sub:Proof-of-Theorem-performance}Proof of Theorem \ref{Performance-on-special}}

We sketch the proof due to page limit. In the initial step in Algorithm
2, $\mathcal{L}$ becomes: $N_{i}^{s}=Kd$, $M_{i}^{s}=\min(M,Kd)$,
$\forall i$, $\Omega_{j}^{I}=\Omega_{j}^{II}=\Omega_{j}^{III}=\emptyset$,
$\Omega_{j}^{IV}=\{1,\cdots j-1,j+1,\cdots K\}$, $\forall j$. After
that, (a) if $d\leq\frac{M}{K}$, Algorithm 2 will update $\mathcal{L}$
by adopting strategy $S^{I}(j,i)$ for all $j,i,j\neq i$ and then
adopting $S^{V}(j)$, for all $j$ until we obtain the desired result;
(b) if $\frac{M}{K}<d\leq M$, $d\mid M$, Algorithm 2 will update
$\mathcal{L}$ through the following four stages sequentially: 
\begin{itemize}
\item A: keep adopting strategy $S^{IV}(i)$ for all $K^{s}+1\leq i\leq K$,
and we obtain the updated $\mathcal{L}$: $N_{i}^{s}=Kd$ for $1\leq i\leq K^{s}$
and $N_{i}^{s}=d$ for $K^{s}+1\leq i\leq K$, $M_{i}^{s}=M$, $\forall i$,
$\Omega_{j}^{II}=\{t:t=K^{s}+1,\cdots K,t\neq j\}$, $\Omega_{j}^{I}=\Omega_{j}^{III}=\emptyset$,
$\Omega_{j}^{IV}=\{1,\cdots j-1,j+1,\cdots K\}/\Omega_{j}^{II}$,
$\forall j$. 
\item B: keep adopting strategy $S^{III}(j,i)$ for all $1\leq j,i\leq K^{s}+1,i\neq j$,
and we obtain the updated $\mathcal{L}$: $N_{i}^{s}$, $M_{i}^{s}$,
$\Omega_{j}^{I}$, $\Omega_{j}^{II}$ the same as in A, $\Omega_{j}^{III}=\{t:t=1,\cdots,K^{s},t\neq j\}$,
$\Omega_{j}^{IV}=\{1,\cdots j-1,j+1,\cdots K\}/(\Omega_{j}^{II}\bigcup\Omega_{j}^{III})$,
$\forall j$.
\item C: keep adopting strategy $S^{V}(j)$ for all $K^{s}+2\leq j\leq K$
with each $S^{V}(j)$ repeating $d$ times, and we obtain the updated
$\mathcal{L}$: $N_{i}^{s}=Kd$, $M_{i}^{s}=M$ for $1\leq i\leq K^{s}$
and $N_{i}^{s}=d$, $M_{i}^{s}=M-d$ for $K^{s}+1\leq i\leq K$, $\forall i$,
$\Omega_{j}^{I}$, $\Omega_{j}^{II}$, $\Omega_{j}^{III}$, $\Omega_{j}^{IV}$
the same as in B. 
\item D: keep adopting strategy $S^{III}(j,i)$ for all $K^{s}+2\leq j\leq K,1\leq i\leq K^{s}+1$,
and we obtain the final feedback profile $\mathcal{L}$. 
\end{itemize}
Substitute the final $\mathcal{L}$ into (\ref{eq:sum_feedback_dimension_expression}),
we obtain the associated feedback dimension $D_{p}$.

\bibliographystyle{IEEEtran}
\bibliography{Zf_IA_ref}

% Generated by IEEEtran.bst, version: 1.13 (2008/09/30)
\begin{thebibliography}{10}
\providecommand{\url}[1]{#1}
\csname url@samestyle\endcsname
\providecommand{\newblock}{\relax}
\providecommand{\bibinfo}[2]{#2}
\providecommand{\BIBentrySTDinterwordspacing}{\spaceskip=0pt\relax}
\providecommand{\BIBentryALTinterwordstretchfactor}{4}
\providecommand{\BIBentryALTinterwordspacing}{\spaceskip=\fontdimen2\font plus
\BIBentryALTinterwordstretchfactor\fontdimen3\font minus
  \fontdimen4\font\relax}
\providecommand{\BIBforeignlanguage}[2]{{%
\expandafter\ifx\csname l@#1\endcsname\relax
\typeout{** WARNING: IEEEtran.bst: No hyphenation pattern has been}%
\typeout{** loaded for the language `#1'. Using the pattern for}%
\typeout{** the default language instead.}%
\else
\language=\csname l@#1\endcsname
\fi
#2}}
\providecommand{\BIBdecl}{\relax}
\BIBdecl

\bibitem{host2005multiplexing}
A.~Host-Madsen and A.~Nosratinia, ``The multiplexing gain of wireless
  networks,'' in \emph{Proc. IEEE Int. Symp. Information Theory (ISIT)}, Sep.
  2005, pp. 2065--2069.

\bibitem{cadambe2008interference}
V.~Cadambe and S.~Jafar, ``Interference alignment and degrees of freedom of the
  {K}-user interference channel,'' \emph{IEEE Trans. Inf. Theory}, vol.~54,
  no.~8, pp. 3425--3441, Aug. 2008.

\bibitem{jafar2007MIMOX}
S.~Jafar and S.~Shamai, ``Degrees of freedom region of the {MIMO X} channel,''
  \emph{IEEE Trans. Inf. Theory}, vol.~54, no.~1, pp. 151--170, Jan. 2008.

\bibitem{jafar2010MNDoF}
T.~Gou and S.~Jafar, ``Degrees of freedom of the {K}-{U}ser {MIMO} interference
  channel,'' \emph{IEEE Trans. Inf. Theory}, vol.~56, no.~12, pp. 6040--6057,
  Dec. 2010.

\bibitem{gomadam2011distributed}
K.~Gomadam, V.~Cadambe, and S.~Jafar, ``A distributed numerical approach to
  interference alignment and applications to wireless interference networks,''
  \emph{IEEE Trans. Inf. Theory}, vol.~57, no.~6, pp. 3309--3322, June 2011.

\bibitem{peters2011cooperative}
S.~Peters and R.~Heath, ``Cooperative algorithms for {MIMO} interference
  channels,'' \emph{IEEE Trans. Veh. Technol.}, vol.~60, no.~1, pp. 206--218,
  Jan. 2011.

\bibitem{santamaria2010maximum}
I.~Santamaria, O.~Gonzalez, R.~Heath, and S.~Peters, ``Maximum sum-rate
  interference alignment algorithms for {MIMO} channels,'' in \emph{Proc. IEEE
  GLOBECOM}, Dec. 2010, pp. 1--6.

\bibitem{nosrat2010MIMO}
B.~Nosrat-Makouei, J.~Andrews, and R.~Heath, ``{MIMO} interference alignment
  over correlated channels with imperfect {CSI},'' \emph{IEEE Trans. Signal
  Process.}, vol.~59, no.~6, pp. 2783--2794, June 2011.

\bibitem{thukral2009interference}
H.~Bolcskei and I.~Thukral, ``Interference alignment with limited feedback,''
  in \emph{Proc. IEEE Int. Symp. Information Theory (ISIT)}, July 2009, pp.
  1759--1763.

\bibitem{krishnamachari2009interference}
R.~Krishnamachari and M.~Varanasi, ``Interference alignment under limited
  feedback for {MIMO} interference channels,'' in \emph{Proc. IEEE Int. Symp.
  Information Theory (ISIT)}, June 2010, pp. 619--623.

\bibitem{rao2012limited}
X.~Rao, L.~Ruan, and V.~Lau, ``Limited feedback design for interference
  alignment on {MIMO} interference networks with heterogeneous path loss and
  spatial correlations,'' \emph{IEEE Trans. Signal Process.}, vol.~61, no.~10,
  pp. 2598--2607, 2013.

\bibitem{el2011grassmannian}
O.~El~Ayach and R.~Heath, ``Grassmannian differential limited feedback for
  interference alignment,'' \emph{IEEE Trans. Signal Process.}, vol.~PP,
  no.~99, p.~1, 2012.

\bibitem{rezaee2012interference}
\BIBentryALTinterwordspacing
M.~Rezaee and M.~Guillaud, ``Interference alignment with quantized
  {G}rassmannian feedback in the k-user {MIMO} interference channel,''
  \emph{arXiv preprint arXiv:1207.6902}, 2012. [Online]. Available:
  \url{http://arxiv.org/abs/1207.6902}
\BIBentrySTDinterwordspacing

\bibitem{rezaee2013csit}
M.~Rezaee, M.~Guillaud, and F.~Lindqvist, ``{CSIT} sharing over finite capacity
  backhaul for spatial interference alignment,'' \emph{arXiv preprint
  arXiv:1302.1008}, 2013.

\bibitem{de2012interference}
\BIBentryALTinterwordspacing
P.~de~Kerret and D.~Gesbert, ``Interference alignment with incomplete {CSIT}
  sharing,'' \emph{arXiv preprint arXiv:1211.5380}, 2012. [Online]. Available:
  \url{http://arxiv.org/abs/1211.5380}
\BIBentrySTDinterwordspacing

\bibitem{hirsch1976differential}
\BIBentryALTinterwordspacing
B.~Dundas, \emph{Differential topology}, 2009. [Online]. Available:
  \url{http://www.uib.no/People/nmabd/dt/080627dt.pdf}
\BIBentrySTDinterwordspacing

\bibitem{dai2008quantization}
W.~Dai, Y.~Liu, and B.~Rider, ``Quantization bounds on {G}rassmann manifolds
  and applications to {MIMO} communications,'' \emph{IEEE Trans. Inf. Theory},
  vol.~54, no.~3, pp. 1108--1123, Mar. 2008.

\bibitem{cho2011feedback}
S.~Cho, H.~Chae, K.~Huang, D.~Kim, V.~Lau, H.~Seo, and B.~Kim,
  ``Feedback-topology designs for interference alignment in {MIMO} interference
  channels,'' \emph{IEEE Trans. Signal Process.}, 2012.

\bibitem{yetis2010feasibility}
C.~Yetis, T.~Gou, S.~Jafar, and A.~Kayran, ``On feasibility of interference
  alignment in {MIMO} interference networks,'' \emph{IEEE Trans. Signal
  Process.}, vol.~58, no.~9, pp. 4771--4782, Sep. 2010.

\bibitem{cox2005using}
D.~Cox, J.~Little, and D.~O'shea, \emph{Using algebraic geometry}.\hskip 1em
  plus 0.5em minus 0.4em\relax Springer, 2005, vol. 185.

\bibitem{razaviyayn2011degrees}
M.~Razaviyayn, G.~Lyubeznik, and Z.-Q. Luo, ``On the degrees of freedom
  achievable through interference alignment in a {MIMO} interference channel,''
  \emph{IEEE Trans. Signal Process.}, vol.~60, no.~2, pp. 812--821, Feb. 2012.

\bibitem{ruan2012feasibility}
L.~Ruan, V.~Lau, and M.~Win, ``The feasibility conditions for interference
  alignment in {MIMO} networks,'' \emph{IEEE Trans. Signal Process.}, vol.~61,
  no.~8, pp. 2066--2077, 2013.

\bibitem{yoo2007multi}
T.~Yoo, N.~Jindal, and A.~Goldsmith, ``Multi-antenna downlink channels with
  limited feedback and user selection,'' \emph{IEEE J. Sel. Areas Commun.},
  vol.~25, no.~7, pp. 1478--1491, Sep. 2007.

\bibitem{mondal2007quantization}
B.~Mondal, S.~Dutta, and R.~Heath, ``Quantization on the {G}rassmann
  manifold,'' \emph{IEEE Trans. Signal Process.}, vol.~55, no.~8, pp.
  4208--4216, 2007.

\bibitem{feng2007rank}
X.~Feng and Z.~Zhang, ``The rank of a random matrix,'' \emph{Applied
  mathematics and computation}, vol. 185, no.~1, pp. 689--694, 2007.

\bibitem{schrijver2002history}
A.~Schrijver, ``On the history of the transportation and maximum flow
  problems,'' \emph{Mathematical Programming}, vol.~91, no.~3, pp. 437--445,
  2002.

\bibitem{cormen2001introduction}
T.~Cormen, \emph{Introduction to algorithms}.\hskip 1em plus 0.5em minus
  0.4em\relax The MIT press, 2001.

\bibitem{bunch1974triangular}
J.~Bunch and J.~Hopcroft, ``Triangular factorization and inversion by fast
  matrix multiplication,'' \emph{Mathematics of Computation}, vol.~28, no. 125,
  pp. 231--236, 1974.

\end{thebibliography}

\end{document}